\newtheorem{theorem}{Theorem}
\newtheorem{corollary}{Corollary}
\newtheorem{lemma}{Lemma}
\newtheorem{proposition}{Proposition}
\newtheorem{definition}{Definition}
\newtheorem{remark}{Remark}
\newcommand{\myleft}{\mathopen{}\mathclose\bgroup\left}
\newcommand{\myright}{\aftergroup\egroup\right}
\renewcommand{\i}{\ensuremath\mathrm{i}}
\newcommand{\id}{\ensuremath\mathrm{id}}
\DeclareMathOperator{\Tr}{Tr}
\DeclareMathOperator{\rank}{rank}
\DeclareMathOperator{\vecmap}{vec}
\DeclareMathOperator{\matmap}{mat}
\DeclareMathOperator{\Comm}{Comm}
\DeclareMathOperator{\End}{End}
\DeclareMathOperator{\Hom}{Hom}
\def\FF{\mathbb{F}}
\DeclareMathOperator{\sign}{sign}
\DeclareMathOperator{\Ad}{Ad}
\newcommand{\MO}{\mathrm{M}} 
\DeclareMathOperator{\U}{U}
\DeclareMathOperator{\Cl}{Cl}
\DeclareMathOperator{\SU}{SU}
\DeclareMathOperator\tr{Tr}
\DeclareMathOperator\GL{GL}
\newcommand{\norm}[1]{\left\Vert #1 \right\Vert} 
\newcommand{\normb}[1]{\bigl\Vert #1 \bigr\Vert} 
\newcommand{\snorm}[1]{\norm{#1}_\infty} 
\newcommand{\tnorm}[1]{\norm{#1}_{1}} 
\newcommand{\dnorm}[1]{\norm{#1}_\diamond} 
\newcommand{\dnormb}[1]{\normb{#1}_\diamond}
\newcommand{\TwoNorm}[1]{\norm{#1}_{2}} 
\newcommand{\ket}[1]{\left.\left|{#1}\right.\right\rangle}
\newcommand{\bra}[1]{\left.\left\langle{#1}\right.\right|}
\newcommand{\ketbra}[2]{\ket{#1} \!\! \bra{#2}}
\newcommand{\sandwich}[3]
  {\left\langle  #1 \right| #2 \left| #3 \right\rangle}
\newcommand{\oket}[1]{\left.\left|{#1}\right.\right)}
\newcommand{\obra}[1]{\left.\left({#1}\right.\right|}
\newcommand{\obraket}[2]{\left( #1 \middle| #2 \right)}
\newcommand{\oketbra}[2]{\oket{#1} \!\! \obra{#2}}
\newcommand{\osandwich}[3]
  {\left(  #1 \right| #2 \left| #3 \right)}
\newcommand{\Haar}{\mathrm{H}}
\newcommand{\R}{\mathbb{R}}
\newcommand{\N}{\mathbb{N}}
\newcommand{\C}{\mathbb{C}}
\newcommand{\Z}{\mathbb{Z}}
\newcommand{\Q}{\mathbb{Q}}
\newcommand{\F}{\mathbb{F}}
\newcommand{\ie}{i.\,e.}
\newcommand{\eg}{e.\,g.}
\newcommand{\one}{\mathbbm{1}}
\DeclareMathOperator{\spann}{span}
\newcommand{\haar}{\mu_\mathrm{H}}
\def\be                 {\begin{equation}}
\def\ee                 {\end{equation}}
\def \d            {\mathrm{d}}
\definecolor{jonas}{rgb}{0,.4,1}
\definecolor{ingo}{rgb}{1,.2,.4}
\definecolor{markus}{HTML}{006600}
\definecolor{felipe}{rgb}{0.2,.5,0.5}
\definecolor{jens}{rgb}{0.2,.7,0.9}
\newcommand{\newtext}[1]{{
#1}}
\begin{document}

\title{Efficient unitary designs with a system-size independent\\ number of non-Clifford gates}

\author{J. Haferkamp}
\email{jhaferkamp42@gmail.com}
\affiliation{%
  Dahlem Center for Complex Quantum Systems, Freie Universit{\"a}t Berlin, Germany
}%

\author{F. Montealegre-Mora}
\author{M. Heinrich}
\affiliation{%
  Institute for Theoretical Physics, University of Cologne, Germany
}%
\author{J. Eisert}
\affiliation{%
	Dahlem Center for Complex Quantum Systems, Freie Universit{\"a}t Berlin, Germany
}%
\author{D. Gross}
\affiliation{%
	Institute for Theoretical Physics, University of Cologne, Germany
}
\author{I. Roth}
\affiliation{%
  Dahlem Center for Complex Quantum Systems, Freie Universit{\"a}t Berlin, Germany
}%

\begin{abstract}\noindent
	Many quantum information protocols require the implementation of random unitaries. Because it takes exponential resources to produce Haar-random unitaries drawn from the full $n$-qubit group, one often resorts to  $t$-designs. 
	Unitary $t$-designs mimic the Haar-measure up to $t$-th moments. It is known that  Clifford operations can implement at most $3$-designs. In this work, we quantify the non-Clifford resources required to break this barrier.
	We find that it suffices to inject $O(t^{4}\log^{2}(t)\log(1/\varepsilon))$ many non-Clifford gates into a polynomial-depth random Clifford circuit to obtain an $\varepsilon$-approximate $t$-design.
	Strikingly, the number of non-Clifford gates required is independent of the system size -- asymptotically, the density of non-Clifford gates is allowed to tend to zero.
	We also derive novel bounds on the convergence time of random Clifford circuits to the $t$-th moment of the uniform distribution on the Clifford group. Our proofs exploit a recently developed variant of Schur-Weyl duality for the Clifford group, as well as bounds on restricted spectral gaps of averaging operators.
\end{abstract}

\maketitle

Random vectors and unitaries are ubiquitous in
protocols and arguments of quantum information and many-body physics.
In quantum information, a paradigmatic example is the \emph{randomized benchmarking protocol}
\cite{ZyczkowskiRB,MagGamEmer,KnillBenchmarking}, which aims to characterize the error rate of quantum gates.
There, random unitaries are used to average potentially complex errors into a single, easy to measure error rate. In many-body physics, random unitaries are used e.g.\ to model the dynamics that are thought to describe the mixing process that quantum information undergoes when absorbed into, and evaporated from, a black hole \cite{HaydenBlackHoles}.
In these and related cases, one is faced with the issue that unitaries drawn uniformly from the full many-body group are \emph{unphysical} in the sense that, with overwhelming probability, they cannot be implemented efficiently.
The notion of a \emph{unitary $t$-design} captures an efficiently realizable version of uniform randomness \cite{dankert_exact_2009,DankertThesis,GroAudEis}.
More specifically, a probability measure on the unitary group is a $t$-design if it matches the uniform Haar measure up to $t$-th moments.

Applications abound.
The randomness provided by designs is used to foil attackers in quantum cryptography protocols \cite{AmbainisEtAl:2009,
divinzeno_data_2001,
0810.2327}. 
It guards against worst case behavior in various quantum
\cite{sen_random_2006,
PhysRevA.72.032325, 
ScottDesigns,
PhysRevA.84.022327, 
AverageGateFidelities,
0810.2327, 
KueZhuGro16b}
and classical
\cite{GroKraKue15_partial}
estimation problems.
Designs allow for an efficient implementation of \emph{decoupling} procedures, a primitive in quantum Shannon theory \cite{szehr2013decoupling}.
In quantum complexity, unitary designs are used as models for generic instances of time evolution that display a quantum computational speed-up 
\cite{Generic,ComplexityGrowth}.
Unitary designs are now standard tools for the quantitative study of toy models in high energy physics, quantum gravity, and quantum thermodynamics \cite{HaydenBlackHoles,RobertsYoshida,PhysRevE.87.032137,onorati_mixing_2017}.

The multitude of applications motivates the search for efficient constructions of unitary $t$-designs~\cite{brandao_local_2016,
	brandao_efficient_2016,
	cleve2015near, harrow_random_2009,hunter2019unitary}.
In particular, Brandao, Harrow and Horodecki~\cite{brandao_local_2016} show that local random circuits on $n$ qubits with $O(n^2t^{10})$ many gates give rise to an approximate $t$-design.
In practice, it is often desirable to find more structured implementations.
Designs consisting of \emph{Clifford operations} would be particular attractive from various points of view:
(i)
Because the Clifford unitaries form a finite group, elements can be represented exactly using a small number ($O(n^2)$) of bits.
(ii)
The Gottesman-Knill Theorem ensures that there are efficient classical algorithms for simulating Clifford circuits.
(iii)
Most importantly, in \emph{fault-tolerant architectures} \cite{QEC2,EarlFaultTolerant}, Clifford unitaries tend to have comparatively simple realizations, while the robust implementation of general gates (e.g.\ via \emph{magic-state distillation}) carries a significant overhead.
The difference is so stark that in this context, Clifford operations are often considered to be a free resource, and the complexity of a circuit is measured solely in terms of the number of non-Clifford gates \cite{ResourceTheory,PhysRevLett.118.090501}.

The Clifford group is known to form a unitary
$t$-design for $t=2$ \cite{divinzeno_data_2001} 
and $t=3$
\cite{Webb3Design,PhysRevA.96.062336,Kueng3Design},
but fails to have this property for $t>3$
\cite{Webb3Design,PhysRevA.96.062336,Kueng3Design,zhu_clifford_2016,HelsenClifford}.
\newtext{
In fact, the Clifford group is singled out among the finite subgroups of the unitary group by being a 3-design \cite{bannai_unitary_2020}.
Moreover, Refs.~\cite{bannai_unitary_2020,sawicki_universal_2017} together imply that \emph{any} local gate set that generates an exact unitary design of order $t>3$ must necessarily be universal, c.f.~the discussion in Sec.~\ref{section:isomorphism}.
Hence, any efficient design construction for $t>3$ can only be approximate, and the Clifford group seems to be a distinguished starting point.
}

This leads us to the central question underlying this work:
\textit{How many non-Clifford gates are required to generate an approximate unitary $t$-design?}
A direct application of the random circuit model of Ref.~\cite{brandao_local_2016} yields an estimate of $O(n^2t^{10})$ non-Clifford operations. 
In this paper we show that a polynomial-sized random Clifford circuit, together with a \emph{system size-independent} number of $O(t^{4}\log^2(t))$ non-Clifford gates -- a ``homeopathic dose'' -- is already sufficient.

\begin{figure}[h]
	\center
	\includegraphics[scale=0.9]{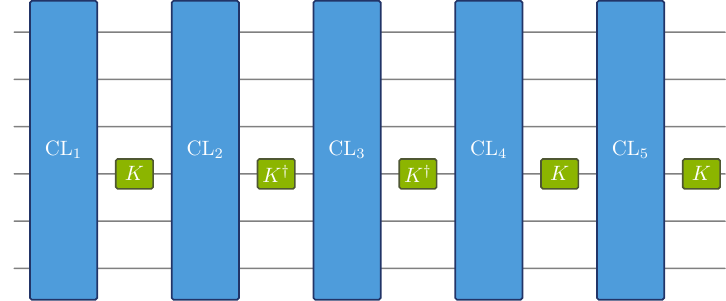}
	\caption{\emph{$K$-interleaved Clifford circuits}: We consider a model where random Clifford operations are alternated with a non-Clifford gate $K$ or its inverse $K^\dagger$.
		\label{figure}
	}
\end{figure}

We establish this main result for two different circuit models (Fig.~\ref{figure}).
In Section~\ref{sec:main result}, we consider alternating unitaries drawn uniformly from the Clifford group with a non-Clifford gate.
This gives rise to an efficient quantum circuit, as there are classical algorithms for sampling uniformly from the Clifford group, and for producing an efficient gate decomposition of the resulting operation \cite{koenig_how_2014}.
A somewhat simpler model is analyzed in Section~\ref{sec:local circuits}. 
There, we assume that the Clifford layers are circuits consisting of gates drawn form a local Clifford gate set.
These circuits will only approximate the uniform measure on the Clifford group.
Theorem~\ref{theorem:clifford-design}, which might be of independent interest, gives novel bounds on the convergence rate.

The key to this scaling lies in the structure of the commutant of the $t$-th tensor power of the Clifford group, described by a variant of Schur-Weyl duality developed in a sequence of recent works \cite{zhu_clifford_2016,nezami2016multipartite,gross2017schur,FelipeGross}.
There, it has been shown that the dimension of this commutant 
--
which measures the failure of the Clifford group to be a $t$-design from a representation theoretical perspective
--
is independent of the system size.
Refs.~\cite{zhu_clifford_2016,gross2017schur} have used this insight to provide a construction for exact \emph{spherical} $t$-designs that consist of a system size-independent number of Clifford orbits.
It has been left as an open problem whether these ideas can be generalized from spherical designs to the more complex notion of unitary designs, and whether the construction can be made efficient~\cite{gross2017schur}.
The present work resolves this question in the affirmative.

Finally, we note that in Ref.~\cite{zhou_entanglement_2019}, it has been observed numerically that adding a single $T$ gate to a random Clifford circuit has dramatic effects on the entanglement spectrum. 
A relation to $t$-designs was suspected.
Our result provides a rigorous understanding of this observation.

\section{Results}

\subsection{Approximate $t$-designs with few non-Clifford gates}
\label{sec:main result}

To state our results precisely, we need to formalize the relevant notion of approximation, as well as the circuit model used.
Let $\nu$ be a probability measure on the unitary group $U(d)$.
The measure $\nu$ gives rise to a quantum channel
\begin{equation}
  \MO_t(\nu)(\rho) :=\int_{\U(d)}U^{\otimes t}\rho \left(U^{\dagger}\right)^{\otimes t}\mathrm{d}\nu(U),
\end{equation}
which applies $U^{\otimes t}$, with $U$ chosen according to $\nu$.
We will refer to $\MO_t(\nu)$ as the \emph{$t$-th moment operator} associated with $\nu$.
Following Ref.~\cite{harrow_random_2009},
we quantify the degree to which a measure approximates a $t$-design by the diamond norm distance of its moment operator to the moment operator of the Haar measure $\mu_{\rm H}$ on $U(d)$.

\begin{definition}[Approximate unitary design]
	Let $\nu$ be a distribution on $\U(d)$.
	Then $\nu$ is an (additive) $\varepsilon$-approximate $t$-design if
	\begin{equation}
	\|\MO_t(\nu)-\MO_t(\haar)\|_{\diamond}\leq \varepsilon.
	\end{equation}
\end{definition}

Denote the uniform measure on the multiqubit Clifford group $\mathrm{Cl}(2^n)$ by $\mu_{\rm Cl}$, and let $K$ be some fixed single-qubit non-Clifford gate.
The circuit model we are considering 
(Figure~\ref{figure})
interleaves Clifford unitaries drawn from $\mu_{\rm Cl}$, with random gates from
$\{K, K^\dagger,\one\}$ acting on an arbitrary qubit\footnote{
  We use the set $\{K, K^\dagger, \one\}$ instead of just $\{K\}$ for technical reasons:
  Making the set closed under the adjoint causes the moment operator to be Hermitian.
  The identity is included to ensure that the concatenation of two random elements has a non-vanishing probability of producing a non-Clifford gate---a property that will slightly simplify the proof.
  Of course, in a physical realization, identity gates and the following Clifford operation are redundant and need not be implemented. 
}.
Note that the concatenation of two unitaries drawn from measures $\nu_1$ and $\nu_2$ is described by the convolution $\nu_1*\nu_2$ of the respective measures.
We thus arrive at this formal definition of the circuit model:
\begin{definition}[$K$-interleaved Clifford circuits]\label{def:KinterleavedClifford}
	Let $K\in U(2)$.
	Consider the probability measure $\xi_K$ that draws uniformly from the set $\{K\otimes \one_{2^{n-1}}, K^{\dagger}\otimes \one_{2^{n-1}}, \one_{2^n}\}$.
	A \emph{$K$-interleaved Clifford circuit of depth $k$} is the random circuit acting on $n$ qubits described by the probability distribution
	\begin{equation}
\sigma_{k}:=	\underbrace{\mu_{\rm Cl}*\xi_K*\dots *\mu_{\rm Cl}*\xi_K}_{k\; \text{times}}.
	\end{equation}
\end{definition}
For convenience, we work with the logarithm of base $2$: $\log(x):=\log_2(x)$.
We are now equipped to state the main result of this work in the form of a theorem:

\begin{restatable}[Unitary designs with few non-Clifford gates]{theorem}{Tcountdesign}\label{thm:Tcountdesign}
	Let $K \in U(2)$ be a non-Clifford unitary.
	There are constants $C_1(K), C_2(K)$ 
	such that 
	for any $k\geq C_1(K)\log^2(t)(t^{4}+t\log(1/\varepsilon))$,
	a $K$-interleaved Clifford circuit with depth $k$ acting on $n$ qubits is an additive $\varepsilon$-approximate $t$-design for all $n\geq C_2(K)t^2$.
\end{restatable}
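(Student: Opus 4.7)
The plan is to rewrite $\Delta_t(\sigma_k) = M^k$ with $M := \Delta_t(\mu_{\rm Cl})\Delta_t(\xi_K)$ and to exploit the fact that both factors are absorbed by the Haar projector $P_{\rm H} := \Delta_t(\haar)$. Indeed, since the Haar twirl kills any unitary conjugation, we have $P_{\rm H}\Delta_t(\xi_K) = \Delta_t(\xi_K)P_{\rm H} = P_{\rm H}$, and since $\Cl(2^n) \subset \U(2^n)$ the Clifford twirl is similarly absorbed. A short calculation using $M P_{\rm H} = P_{\rm H} M = P_{\rm H}$ yields the telescoping identity $M^k - P_{\rm H} = (M - P_{\rm H})^k$, reducing the convergence question to bounding powers of the single operator $M - P_{\rm H}$.

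The next step is to reduce the diamond-norm estimate to a spectral problem on a low-dimensional space. Writing $P_{\rm Cl} := \Delta_t(\mu_{\rm Cl})$, one checks that $M - P_{\rm H} = (P_{\rm Cl} - P_{\rm H})\Delta_t(\xi_K)(I - P_{\rm H})$, so the image of $M - P_{\rm H}$ is contained in $V := \ran(P_{\rm Cl} - P_{\rm H})$, the Clifford commutant orthogonal to the Haar commutant. By the Schur-Weyl-type duality developed in~\cite{zhu_clifford_2016,gross2017schur,FelipeGross}, $\dim V$ stabilises at a value depending only on $t$ once $n \geq C_2(K)t^2$. This $n$-independence of the relevant dimension lets one convert an operator-norm bound into a diamond-norm bound at only a polynomial-in-$t$ cost---by expressing $\Delta_t(\sigma_k) - \Delta_t(\haar)$ as a linear combination of $O(\dim V)$ Clifford-twirled rank-one maps---which is what will make the final answer independent of the system size $n$.

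The core of the argument, and the step I expect to be the principal obstacle, is to establish a restricted spectral gap $\|M - P_{\rm H}\|_\infty \leq 1 - g(t)$ with $g(t) = \Omega(1/(t^4\log^2(t)))$. The operator $P_{\rm Cl}\Delta_t(\xi_K)P_{\rm Cl}$ restricted to $V$ is an averaging operator for the random walk on $\U(2)$ driven by $\xi_K$, acting on the representation of $\U(2)$ obtained by decomposing $V$ under single-qubit unitaries on the target qubit of $K$. Because $K$ is non-Clifford, the subgroup generated by $\supp\xi_K$ is not finite in $\U(2)$; quantitative spectral-gap theorems for random walks on compact Lie groups then furnish a gap of order $\Omega(\log^{-2})$ on each irreducible component of this representation. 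Summing over the irreducibles appearing in $V$, whose highest weights are of order $t$, and accounting for the combinatorial dimensional penalty within $V$, yields the stated $g(t) = \Omega(1/(t^4\log^2(t)))$.

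Assembling the pieces, $\|(M - P_{\rm H})^k\|_\infty \leq (1-g(t))^k \leq e^{-k g(t)}$, and combining with the diamond-to-operator reduction of the second step gives $\|\Delta_t(\sigma_k) - \Delta_t(\haar)\|_\diamond \leq \mathrm{poly}(t)\, e^{-k g(t)}$. Demanding that the right-hand side fall below $\varepsilon$ and plugging in the bound on $g(t)$ produces the stated $k \geq C_1(K)\log^2(t)(t^4 + t\log(1/\varepsilon))$. The principal difficulty lies entirely in the spectral-gap estimate of the third step: turning the purely qualitative non-Cliffordness of $K$ into a quantitative gap with precisely this polynomial-logarithmic dependence on $t$ requires both the explicit structural description of the Clifford commutant provided by the new Schur-Weyl duality and fine quantitative spectral-gap theorems for random walks on $\U(2)$.
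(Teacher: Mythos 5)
Your opening step (the telescoping identity $M^k-P_{\mathrm{H}}=(M-P_{\mathrm{H}})^k$) and your identification of the key external input (a Varj\'u-type restricted spectral gap theorem on $\U(2)$, responsible for the $\log^{-2}(t)$ factor) agree with the paper. However, the two quantitative steps in between contain genuine gaps.

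First, the claimed conversion from an operator-norm bound to a diamond-norm bound ``at only a polynomial-in-$t$ cost'' because $\dim V$ is $n$-independent does not hold. The space $V=\ran(P_{\Cl}-P_{\mathrm{H}})$ is spanned by $Q_T^{\otimes n}=2^{-nt/2}r(T)^{\otimes n}$, and for $T_1,T_2$ whose defect spaces have different dimensions the rank-one superoperator $\oketbra{Q_{T_1}^{\otimes n}}{Q_{T_2}^{\otimes n}}$ has unit spectral norm but diamond norm as large as $2^{n(\dim N_2-\dim N_1)}$ (cf.\ Lemmas~\ref{lemma:rT-norms} and~\ref{lemma:diamondbounds}), i.e.\ exponential in $n$. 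So the distortion between the two norms \emph{within} $V$ is $2^{\Theta(nt)}$, and the small dimension of $V$ does not repair this; a bound $\snormn{(M-P_{\mathrm{H}})^k}\leq(1-g)^k$ alone only yields the $2^{O(nt)}$-penalized statement of Corollary~\ref{cor:dropindependence}, not the system-size-independent theorem. The paper instead expands $[(P_{\Cl}-P_{\mathrm{H}})R(K)]^k$ over sequences of Gram--Schmidt-orthogonalized basis elements (Lemma~\ref{lemma:gram-schmidt}) and shows that the $2^{n(\dim N_{j_k}-\dim N_{j_1})}$ blow-up of the terminal diamond norm is cancelled by a matching decay $2^{-n|\dim N_{j_k}-\dim N_{j_1}|}$ of the product of matrix elements along the sequence. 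This cancellation is the heart of the system-size independence and is absent from your proposal; it is also what produces the $2^{O(t^4)}$ prefactor and hence the $t^4$ in the depth. Your $t^4$ is instead placed, without derivation, inside the spectral gap, which both misidentifies its origin (the true per-step contraction is $1-c(K)\log^{-2}(t)$, with no $t^{-4}$) and would yield an $\varepsilon$-dependence of order $t^4\log^2(t)\log(1/\varepsilon)$, weaker than the statement.

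Second, even the per-step contraction needs an ingredient you have not supplied. Varj\'u's theorem controls the averaging operator of $\nu_K$ away from the single-qubit Haar commutant, but to conclude that $R(K)$ actually contracts the non-permutation directions one must show that those directions have overlap with the Haar commutant bounded away from $1$; the paper proves $\osandwich{Q_T}{P_{\mathrm{H}}}{Q_T}\leq 7/8$ for all $T\in\Sigma_{t,t}\setminus S_t$ (Lemma~\ref{lemma:haarsymmetrization}) via a discrete Hamming-weight analysis of stochastic Lagrangian subspaces. Without this, a non-permutation $Q_T$ could a priori lie almost entirely in the fixed space of $R(K)$ and no gap would follow. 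Note also that this overlap bound is only available for the basis elements $Q_T$, not for arbitrary unit vectors of $V$, which is precisely why the paper does not prove the operator-norm gap directly but extracts it a posteriori from the diamond-norm bound by taking $k$-th roots.
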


We give the proofs of this theorem in Section~\ref{section:t-gates}.
In Theorem~\ref{thm:Tcountdesign}, we consider uniformly drawn multiqubit Clifford unitaries.
This can be achieved with $O(n^3)$ classical random bits~\cite{koenig_how_2014} and then implemented with $O(n^2/\log(n))$ gates~\cite{aaronson_stabilizer_2004}.
Combined with these results, Theorem~\ref{thm:Tcountdesign} implies an overall gate count of $O(n^2/\log(n)t^4\log^2(t))$ improving the scaling compared to Ref.~\cite{brandao_local_2016} in the dependence on both $t$ and $n$.
In this sense, our construction can be seen as a classical-quantum hybrid construction of unitary designs: The scaling is significantly improved by outsourcing as many tasks as possible to a classical computer.
A construction in which all parts of the random unitary are local random circuits is considered in Corollary~\ref{cor:combined}.

For designs generated from general random local circuits, numerical results suggest that convergence is much faster in practice than indicated by the proven bounds~\cite{cwiklinski_local_2013}.
We expect that a similar effect occurs here, and that in fact very shallow $K$-interleaved Clifford circuits are sufficient to approximate $t$-designs.
This intuition is supported by the numerical results of Ref.~\cite{zhou_entanglement_2019}, which show that even a single $T$-gate has dramatic effects on the entanglement spectrum of a quantum circuit.

It is moreover noteworthy that circuits with few $T$-gates \newtext{can be efficiently simulated 
\cite{bravyi_stabilizer_2019,
PhysRevLett.115.070501, 
heinrich2019robustness,PhysRevLett.116.250501,
seddon2020quantifying}.}
The scaling of these algorithms is polynomial in the depth of the circuit, but exponential in the number of $T$-gates.
Combined with our result, this implies that for fixed additive errors $\varepsilon$, there are families of $\varepsilon$-approximate unitary $O(\log(n))$-designs simulable in
quasi-polynomial time.
For the general random quantum circuit model, it is conjectured that a depth of order $O(nt)$ suffices to approximate $t$-designs~\cite{brandao_local_2016,brandao_complexity_2019}.
If such a linear scaling is sufficient in our model, the quasi-polynomial time estimate for classical simulations would improve to polynomial.

For the proof of Theorem~\ref{thm:Tcountdesign} we need to analyse the connection between the $t$-th moment operator of the Haar measure and the commutant of the diagonal action of the Clifford group.
The latter was proven to be spanned by representations of so-called \emph{stochastic Lagrangian sub-spaces} in Ref.~\cite{gross2017schur}.
In particular, we prove almost tight bounds on the overlap of the Haar operator with these basis vectors in Lemma~\ref{lemma:haarsymmetrization} that might be of independent interest.
This will allow us to invoke a powerful theorem by Varj{\'u}~\cite{varju_walks_2013} on restricted spectral gaps of probability distributions on compact Lie groups to show that non-Clifford unitaries have a strong impact on representations of Lagrangian sub-spaces that are not also permutations.
We combine this insight with a careful combinatorial argument about the Gram-Schmidt orthogonalization of the basis corresponding to stochastic Lagrangian sub-spaces to bound the difference to a unitary $t$-design in diamond norm.

Moreover, the bound for Theorem~\ref{thm:Tcountdesign} allows us to prove a corollary about the stronger notion of \textit{relative approximate designs}:
\begin{definition}[Relative 
$\varepsilon$-approximate $t$-design]
	We call a probability $\nu$ a relative $\varepsilon$-approximate $t$-design if 
	\begin{equation}
	(1-\varepsilon)\MO_t(\nu)\preccurlyeq\MO_t(\mu_{\rm H})\preccurlyeq (1+\varepsilon)\MO_t(\nu),
	\end{equation}
	where $A\preccurlyeq B$ if and only if $B-A$ is completely positive.
\end{definition}

\begin{corollary}[$K$-interleaved Clifford circuits as relative $\varepsilon$-approximate $t$-designs]\label{cor:dropindependence}
	There are constants $C_1'(K), C_2'(K)$ such that a $K$-interleaved Clifford circuit is a relative $\varepsilon$-approximate $t$-design in depth $k\geq C_1'(K)\log^{2}(t)(2nt+\log(1/\varepsilon))$ for all $n\geq C_2'(K)t^2$.
\end{corollary}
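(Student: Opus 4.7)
The plan is to derive the relative $\varepsilon$-approximation directly from the underlying spectral-gap estimate that drives Theorem~\ref{thm:Tcountdesign}, rather than inserting values into its diamond-norm bound. The proof of Theorem~\ref{thm:Tcountdesign} establishes, as its main technical ingredient, a restricted spectral-gap bound of the form
\[
\bigl\|\Delta_t(\mu_{\rm Cl}*\xi_K) - \Delta_t(\mu_{\rm H})\bigr\|_{2\to 2} \leq 1 - g,
\]
where the operator norm is with respect to the Hilbert--Schmidt inner product on $\mathrm{L}((\C^d)^{\otimes t})$, $d = 2^n$, and $g = \Omega(\log^{-2}(t))$ is the gap produced by Varjú's theorem~\cite{varju_walks_2013}. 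Taking $k$-fold convolutions yields $\|\Delta_t(\sigma_k) - \Delta_t(\mu_{\rm H})\|_{2\to 2} \leq (1 - g)^k$. The $t^4$ term in the additive bound of Theorem~\ref{thm:Tcountdesign} only enters through the subsequent translation to a diamond-norm bound, via combinatorial factors coming from the Gram--Schmidt orthogonalization of the stochastic-Lagrangian basis; for the relative-design bound this translation can be replaced by a cleaner one.

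Second, I would invoke the standard fact that an operator-norm estimate $\|\Delta_t(\nu) - \Delta_t(\mu_{\rm H})\|_{2\to 2} \leq \eta$ implies a relative $\eta\cdot d^{2t}$-approximate $t$-design (see e.g.~\cite{brandao_local_2016}). The dimensional factor $d^{2t} = 2^{2nt}$ originates from the smallest non-zero eigenvalue of the Choi representation of $\Delta_t(\mu_{\rm H})$, the latter being an orthogonal projection onto the commutant of $\U(d)^{\otimes t}$, so that cp-order perturbations can only be absorbed up to this scale. Demanding $(1-g)^k\cdot d^{2t} \leq \varepsilon$ and rearranging gives
\[
k \;\geq\; g^{-1}\bigl(2nt + \log(1/\varepsilon)\bigr) \;\leq\; C_1'(K)\log^2(t)\bigl(2nt+\log(1/\varepsilon)\bigr),
\]
which is the claimed depth bound. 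The hypothesis $n\geq C_2'(K) t^2$ is retained to ensure the regime in which the spectral-gap analysis underlying Theorem~\ref{thm:Tcountdesign} applies.

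The principal obstacle is making the operator-to-cp-order conversion precise with the stated dimensional factor $d^{2t}$. Concretely, one has to show that the Choi operator of $\Delta_t(\mu_{\rm H})$ is block-diagonal with minimum non-zero eigenvalue at least $d^{-2t}$, which follows from Schur--Weyl duality identifying the commutant of $\U(d)^{\otimes t}$, and that the residual $\Delta_t(\sigma_k) - \Delta_t(\mu_{\rm H})$ lives inside this support. Both statements are standard, but careful bookkeeping is required to confirm the exponent. Once these are in place, the rest of the argument is the straightforward exponent-chasing exercise indicated above, combined with the spectral-gap input from the proof of Theorem~\ref{thm:Tcountdesign}.
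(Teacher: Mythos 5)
Your overall architecture is the same as the paper's: establish a spectral-gap bound $\snormn{\Delta_t(\sigma_k)-P_{\Haar}}\leq(1-g)^k$ with $g=\Omega(\log^{-2}t)$, convert it to a relative-design statement via the operator-norm-to-complete-positivity lemma of Ref.~\cite{brandao_local_2016} (Lem.~4 there), which costs the factor $2^{2nt}$, and solve for $k$. The conversion step and the exponent bookkeeping in your second and third paragraphs are fine and match what the paper does.

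The one step you cannot take for granted is the opening claim that the proof of Theorem~\ref{thm:Tcountdesign} ``establishes, as its main technical ingredient,'' the bound $\snormn{\Delta_t(\mu_{\rm Cl}*\xi_K)-\Delta_t(\haar)}_{2\to 2}\leq 1-g$. It does not. What Lemma~\ref{lemma:overlapnonclifford} gives is a bound on the \emph{matrix elements} $\osandwich{Q_T}{R(K)}{Q_{T'}}$ in the basis $\{Q_T^{\otimes n}\}$, and because this basis is non-orthogonal, matrix-element bounds do not directly yield an operator-norm bound on $P_{\Cl}R(K)-P_{\Haar}$ --- overcoming exactly this obstruction is the purpose of the entire Gram--Schmidt analysis (Lemma~\ref{lemma:gram-schmidt}), which is where the $2^{33t^4}$ prefactor originates. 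The paper extracts the clean spectral-gap statement \emph{a posteriori}: setting $A:=P_{\Cl}R(K)P_{\Cl}$ (self-adjoint, with $(A-P_\Haar)^k=\Delta_t(\sigma_k)-P_\Haar$), it bounds $\snormn{(A-P_\Haar)^k}$ by $2^{nt/2}$ times the diamond-norm bound \eqref{eq:finalcombinedbound}, takes the $k$-th root, and lets $k\to\infty$ so that the $k$-independent (and subexponential-in-$k$) prefactors disappear, leaving $\snormn{A-P_\Haar}\leq(1+2^{32t^2-n})^5\bar\eta_{K,t}$. You need either this limit argument or a direct near-orthonormality argument (via the Gram-matrix bound of Lemma~\ref{lemma:Snormbound}) to legitimize your starting point; as written, that step is asserted rather than proved.
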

Hence, if we drop the system-size independence, we can achieve a scaling of $O(nt)$ at least until $t~\sim \sqrt{n}$.

While we believe the setting of $K$-interleaved Clifford circuits to be the more relevant case, the same method of proof works for \textit{Haar}-interleaved Clifford circuits.
Here, we draw not from the gate set $\{K_i,K^{\dagger}_i,\one\}$, but instead Haar-randomly from $U(2)$.
The advantage is that we obtain explicit constants for the depth, while the depth in the $K$-interleaved setting has to depend on a constant (as $K$ might be arbitrarily close to the identity).
\begin{proposition}[Haar-interleaved Clifford circuits as additive $\varepsilon$-approximate $t$-designs]\label{prop:Haarinverleavedversion}
	For $k\geq 36(33t^4+3t\log(1/\varepsilon))$, Haar-interleaved Clifford circuits with depth $k$ form an additive $\varepsilon$-approximate $t$-design for all $n\geq 32t^2+7$.
\end{proposition}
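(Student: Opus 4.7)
The plan is to mirror the strategy used for Theorem~\ref{thm:Tcountdesign} but to exploit the fact that Haar-random single-qubit gates admit an \emph{explicit} rather than existential spectral analysis, which is what allows the constants $36$ and $33$ to appear in the statement.

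\emph{Step 1 (Convolution and reduction to a spectral problem).} Let $\xi_H$ denote the measure corresponding to a Haar-random single-qubit unitary on (say) the first qubit, so that the Haar-interleaved circuit distribution factorises as $\sigma_k = (\mu_{\rm Cl} * \xi_H)^{*k}$. Convolution of measures becomes composition of moment operators, hence
\begin{equation}
\Delta_t(\sigma_k) = \bigl(\Delta_t(\mu_{\rm Cl})\,\Delta_t(\xi_H)\bigr)^{k}.
\end{equation}
Both factors fix the orthogonal projector $\Delta_t(\mu_{\rm H})$ onto the Haar commutant $\Comm_t(\U(d))$, so
\begin{equation}
\Delta_t(\sigma_k)-\Delta_t(\mu_{\rm H})= M^k,\qquad M:=\bigl(\Delta_t(\mu_{\rm Cl})\Delta_t(\xi_H)\bigr)\bigl(\mathrm{Id}-\Delta_t(\mu_{\rm H})\bigr).
\end{equation}
It suffices to bound $\|M^k\|_{\diamond}\le\varepsilon$.

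\emph{Step 2 (Support reduction via Clifford Schur--Weyl duality).} The Clifford moment $\Delta_t(\mu_{\rm Cl})$ is the orthogonal projector onto the larger commutant $\Comm_t(\mathrm{Cl}(2^n))$, which by the Schur--Weyl duality of \cite{gross2017schur,FelipeGross} is spanned, for $n\ge 32t^2+7$, by a system-size independent family $\{R(T)\}_T$ of operators indexed by the stochastic Lagrangian subspaces $T\subseteq\F_2^{2t}$. The range of $M$ is therefore contained in the finite-dimensional ``frustrated'' subspace $V:=\Comm_t(\mathrm{Cl}(2^n))\cap\Comm_t(\U(d))^{\perp}$, whose dimension $D$ scales as $t^{\mathrm{O}(t)}$. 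In particular, it is enough to control the operator norm of $M$ on $V$ and then convert to diamond norm via the standard commutant bound $\|\cdot\|_{\diamond}\le\sqrt{D}\,\|\cdot\|_{\infty}$.

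\emph{Step 3 (Explicit spectral gap on the frustrated subspace).} This is where the Haar setting simplifies. Acting with a single-qubit Haar unitary on the first qubit, the moment operator $\Delta_t(\xi_H)$ is itself an orthogonal projector, namely onto the commutant of $\U(2)^{\otimes t}$ on the first tensor factor. In the basis $\{R(T)\}_T$, this projector can be computed in closed form: permutation-type subspaces $T$ (those already corresponding to elements of $S_t$) are fixed, while every non-permutation $T$ is damped by a factor determined by its combinatorial overlap with permutations. These overlaps are exactly the quantities controlled by Lemma~\ref{lemma:haarsymmetrization}, and the resulting bound $\|M\restriction V\|_{\infty}\le 1-c$ with an explicit $c>0$ is what produces the numerical prefactor $36$ in the depth.

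\emph{Step 4 (Depth count).} Combining Steps 2 and 3, $\|M^k\|_{\diamond}\le \sqrt{D}\,(1-c)^{k}$. Taking logarithms with $D=t^{\mathrm{O}(t)}$ gives a depth requirement of the form $k\ge a(bt\log t + t\log(1/\varepsilon))$; tightening the combinatorial constants from Step 3 as in Section~\ref{section:t-gates} yields the stated $k\ge 36(33t^4+3t\log(1/\varepsilon))$. The $t^4$ scaling arises from turning the $\log t$ factor into an explicit polynomial in $t$ through the $R(T)$-basis Gram matrix analysis, which already drives the analogous bound in Theorem~\ref{thm:Tcountdesign}.

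The main obstacle is Step 3: establishing the gap with a \emph{numerical} constant. The advantage over the $K$-interleaved case is that $\Delta_t(\xi_H)$ is a bona fide projector, so instead of invoking Varj\'u's theorem as a black box one diagonalises $M$ on the finite-dimensional space $V$ directly using the combinatorics of stochastic Lagrangian subspaces. The cost is that a different technical argument--bounding the gap in terms of the overlaps in Lemma~\ref{lemma:haarsymmetrization} and handling the Gram--Schmidt orthogonalisation of $\{R(T)\}_T$--must be carried through with explicit, rather than merely asymptotic, constants.
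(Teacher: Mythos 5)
Your Steps 1 and 3 correctly identify the one genuine simplification in the Haar-interleaved case: $\Delta_t(\xi_H)$ is an honest projector, so the Varj\'u black box is not needed and the damping of non-permutation $T$'s is given directly by the explicit constant $7/8$ of Lemma~\ref{lemma:haarsymmetrization}. This is precisely what the paper does -- its proof of Proposition~\ref{prop:Haarinverleavedversion} is literally ``rerun the proof of Theorem~\ref{thm:Tcountdesign} with $7/8$ in place of $\bar{\eta}_{K,t}$ and check the arithmetic using $\log_2(7/8)\leq -0.19$,'' so the entire diamond-norm bookkeeping of Eq.~\eqref{eq:finalcombinedbound} is reused verbatim.

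The gap is in Step 2. The inequality $\dnorm{\phi}\leq\sqrt{D}\,\snorm{\phi}$, with $D$ the ($n$-independent) dimension of the frustrated subspace $V$, is false. The correct conversion is Eq.~\eqref{eq:dnorm-bound}, whose factor is a power of $\dim\mathcal{H}=2^{nt}$, not of the dimension of the operator subspace supporting $\phi$. Concretely, take $\phi=\oketbra{Q_{T_1}^{\otimes n}}{Q_{T_2}^{\otimes n}}$ with defect spaces $\dim N_2>\dim N_1$: this map lives entirely inside the commutant, has $\snorm{\phi}=\TwoNorm{Q_{T_1}^{\otimes n}}\TwoNorm{Q_{T_2}^{\otimes n}}=1$, yet $\dnorm{\phi}=\tnorm{Q_{T_1}^{\otimes n}}\snorm{Q_{T_2}^{\otimes n}}=2^{n(\dim N_2-\dim N_1)}$ by Lemma~\ref{lemma:rT-norms}, which is unbounded in $n$. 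So an operator-norm gap on $V$ plus a dimension-counting conversion cannot deliver a system-size-independent depth; repaired with the true inequality \eqref{eq:dnorm-bound} your route only yields the $O(nt)$ scaling of Corollary~\ref{cor:dropindependence}. This is exactly why the paper never passes through an operator-norm gap for the diamond-norm statement: it expands $[\left(P_{\Cl}-P_{\Haar}\right)\mathrm{R}]^k$ in the Gram--Schmidt basis of Lemma~\ref{lemma:gram-schmidt} and shows that the $2^{\pm n\,\Delta(\dim N)}$ factors from the diamond-norm bound \eqref{eq:diamondbound} and the overlap bound \eqref{eq:overlapsbound} telescope along each sequence $(j_1,\dots,j_k)$. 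Your closing paragraph gestures at this analysis, but as written the skeleton of your argument (gap on $V$, then a $\sqrt{D}$ conversion) does not prove the stated system-size-independent bound; to make it work you would have to import the full cancellation argument, at which point you are back to the paper's proof of Theorem~\ref{thm:Tcountdesign} with $\bar{\eta}_{K,t}$ replaced by $7/8$.
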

Similarly, variants of Corollary~\ref{cor:dropindependence} for Haar-interleaved Clifford circuits can be obtained, here also without the $\log^{2}(t)$ dependence.
\newtext{Finally, we discuss an application to higher R\~{e}nyi entropies in Appendix~\ref{section:renyi}.}

\subsection{Local random Clifford circuits for Clifford and unitary designs}
\label{sec:local circuits}

The circuits considered in the previous section require one to find the gate decomposition of a random Clifford operation.
In this section, we analyze the case where the Clifford layers are circuits consisting of gates drawn from a local set of generators.

As a first step, we establish that a $2$-local random Clifford circuit on $n$ qubits of depth $O(n^2t^{9}\log^{-2}(t)\log(1/\varepsilon))$ constitutes a relative $\varepsilon$-approximate Clifford $t$-design, i.e., reproduces the moment operator of the Clifford group up to the $t$-th order with a relative error of $\varepsilon$.
We consider local random Clifford circuits that consist of $2$-local quantum gates from a finite set $G$ with is closed under taking the inverse and generates $\Cl(4)$. We refer to such a set as a \emph{closed, generating set}.
A canonical example for such a closed, generating set is $\{H \otimes \one, S \otimes \one, S^3 \otimes \one ,\mathrm{CX} \}$ where $H$ is the Hadamard gate, $S$ is the phase gate and $\mathrm{CX}$ is the cNOT-gate \cite{NielsenChuang}.
Such a set $G$ induces a set of multi-qubit Clifford unitaries $\hat G \subset \Cl(n)$ by acting on any pair of adjacent qubits \newtext{on a line}, where we adopt periodic boundary conditions. We then define the corresponding random Clifford circuits.
\begin{definition}[Local random Clifford circuit]
	\label{def:random-cliff-circuit}
	Let $G \subset \Cl(4)$ be a closed, generating set \newtext{containing the identity}.
	Define the probability measure $\sigma_G$ as the measure having uniform support on $\hat G \subset \Cl(n)$ acting on $n$ qubits.
	A \emph{local random Clifford circuit} of depth $m$ is the random circuits described by the probability measure  $\sigma_G^{\ast m}$.
\end{definition}

\newtext{
For technical reasons, we again assume that the identity is part of the generating set.
This assumption can be avoided but simplifies the argumentation in the following.
As for the Definition~\ref{def:KinterleavedClifford} of $K$-interleaved Clifford circuits before, any upper bound on the depth of local random Clifford circuits with identity is a bound for those without. 
}

Our result on local random Clifford circuits even holds for a stronger notion for approximations of designs, namely relative approximate designs.
Write $A\preccurlyeq B$ if  $B-A$ is positive semi-definite.

\begin{definition}[Relative approximate Clifford $t$-designs]
	Let $\nu$ be a probability measure on $\mathrm{Cl}(2^n)$.
	 Then, $\nu$ is a \emph{relative $\varepsilon$-approximate Clifford $t$-design} if
	\begin{equation}\label{eq:relativeclifforddesign}
	(1-\varepsilon)\MO_t(\mu_{\rm Cl})\preccurlyeq \MO_t(\nu) \preccurlyeq (1+\varepsilon)\MO_t(\mu_{\rm Cl}).
	\end{equation}
\end{definition}

With this definition, our result reads as follows.

\begin{restatable}[Local random Clifford designs]{theorem}{clifforddesign}
\label{theorem:clifford-design}
Let $n\geq 12t$, then a local random Clifford circuit of depth $O(n\log^{-2}(t)t^{8}(2nt+\log(1/\varepsilon)))$ constitutes a relative $\varepsilon$-approximate Clifford $t$-design.
\end{restatable}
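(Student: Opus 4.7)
The plan is to establish the result by a spectral-gap argument adapted from the Brandão--Harrow--Horodecki (BHH) strategy for local unitary circuits~\cite{brandao_local_2016}, but using the representation theory of the Clifford commutant from Ref.~\cite{gross2017schur} in place of Schur--Weyl duality. Via vectorisation I view $\Delta_t(\sigma_G)$ as a Hermitian contraction on $(\C^{2^n})^{\otimes 2t}$; self-adjointness follows from $G$ being closed under inversion. Since $\hat G$ generates $\Cl(2^n)$ (nearest-neighbour $2$-qubit Cliffords do), the fixed-point space of $\Delta_t(\sigma_G)$ coincides with that of $\mu_{\rm Cl}$, namely the $t$-th order Clifford commutant $\mathcal{C}_n^t := \Comm(\Cl(2^n)^{\otimes t})$ spanned by the stochastic-Lagrangian basis. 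Writing $P := \Delta_t(\mu_{\rm Cl})$ for the orthogonal projection onto $\mathcal{C}_n^t$, it suffices to lower bound the restricted spectral gap $g(n,t) := 1 - \|\Delta_t(\sigma_G)(\one - P)\|_\infty$ and iterate.

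Next I would carry out a local-to-global reduction. Decompose $\Delta_t(\sigma_G) = \tfrac{1}{n}\sum_{i=1}^n M_i$, where $M_i$ is the $2$-local moment operator associated with $G$ acting on the edge $(i, i+1)$, and let $\mathcal{F}_i := \Comm(\Cl(4)_{(i,i+1)}^{\otimes t})$ be its local fixed space. The hypothesis $n \geq 12t$ enters to ensure that the stochastic-Lagrangian description has ``stabilised,'' so that $\bigcap_i \mathcal{F}_i = \mathcal{C}_n^t$ exactly rather than strictly larger. Applying a Nachtergaele-type lemma to the frustration-free Hermitian sum $H = \sum_i (\one - M_i)$ then gives a global gap of the form $g(n,t) \geq c\, g_{\text{loc}}(t)/n$, where $g_{\text{loc}}(t) := \min_i (1 - \|M_i(\one - \Pi_i)\|_\infty)$ is a uniform lower bound on the local gaps and $c$ absorbs an overlap constant for neighbouring $\mathcal{F}_i$. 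The local gap is controlled by first comparing $\sigma_G$ with $\mu_{\rm Cl(4)}$ via a Diaconis--Saloff-Coste type inequality on the finite group $\Cl(4)$ (costing only a constant depending on $G$), and then bounding the gap of $\Delta_t(\mu_{\rm Cl(4)})$ restricted to the complement of its fixed space using the polynomial-in-$t$ dimension of $\mathcal{C}_2^t$ from the stochastic-Lagrangian count. Propagated through the Nachtergaele bound this yields $g(n,t) = \tilde{\Omega}(\log^{2}(t)/(n t^{8}))$.

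Iterating gives the additive operator-norm bound $\|\Delta_t(\sigma_G^{*m}) - P\|_\infty \leq e^{-m\, g(n,t)}$, and to promote this to the completely-positive ordering in \eqref{eq:relativeclifforddesign} I would pass through the diamond norm. The Jamiołkowski operator of the difference is supported on a space of dimension at most $d^{2t} = 2^{2nt}$, so $\|\cdot\|_{\diamond} \leq d^{2t}\|\cdot\|_\infty$ on this class of maps, and the relative bound holds as soon as $e^{-m\, g(n,t)} \leq \varepsilon/d^{2t}$. This forces $m = \Omega\bigl((nt + \log(1/\varepsilon))/g(n,t)\bigr)$, and substituting the gap yields the claimed depth $O\bigl(n \log^{-2}(t)\, t^{8}(2nt + \log(1/\varepsilon))\bigr)$.

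The main obstacle is the local-to-global step: verifying the intersection identity $\bigcap_i \mathcal{F}_i = \mathcal{C}_n^t$ under $n \geq 12t$, and controlling the Nachtergaele overlap constant with the correct polynomial dependence on $t$. Both hinge on the combinatorial description of stochastic Lagrangian subspaces from Ref.~\cite{gross2017schur}; without this fine structural control, the intersection could strictly contain the Clifford commutant, and the passage from a 2-qubit gap to a global one would lose additional powers of $t$, degrading the dependence of the depth on $t$.
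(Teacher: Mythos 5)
Your overall strategy -- frustration-free local Hamiltonian, Nachtergaele's martingale method, Diaconis--Saloff-Coste for the base gap, and a $2^{2nt}$ conversion from a spectral-norm bound to the relative (completely-positive ordering) notion -- is exactly the paper's. But the local-to-global step as you describe it has a genuine gap. You propose to apply the Nachtergaele lemma with $2$-local blocks and a block gap $g_{\text{loc}}(t)$ coming from $\Cl(4)$, which would be a $t$-independent constant, with all the $t$-dependence absorbed into an unspecified ``overlap constant.'' This cannot work: condition (3) of Nachtergaele's lemma requires $\snorm{G_{[q-l+2,q+1]}(G_{[1,q]}-G_{[1,q+1]})}\leq \varepsilon_l<1/\sqrt{l}$, i.e.\ the ground spaces of overlapping blocks must be nearly orthogonal in a quantitative sense, and for the Clifford commutant this fails for any block length $l=o(t)$. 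The operators $r(T)^{\otimes k}$ only span the commutant of $\Cl(k)^{\otimes t}$ for $k\geq t-1$ (Theorem~\ref{theorem:commutant}), and their Gram matrix is close to the identity only for $k\gtrsim t+\log_2 t$ (Lemma~\ref{lemma:overlaps}); for $k=2$ the local fixed space is much larger and not described by $\Sigma_{t,t}$ at all (in particular, $|\Sigma_{t,t}|\approx 2^{t^2/2}$ is not polynomial in $t$, and $\mathcal{C}_2^t$ is strictly bigger than the span of the $r(T)^{\otimes 2}$). The paper therefore runs Nachtergaele with block size $l=12t$, which is where both the hypothesis $n\geq 12t$ and an explicit factor $1/(48t)$ enter -- not, as you suggest, to ensure $\bigcap_i\mathcal{F}_i=\mathcal{C}_n^t$, which holds for every $n$ simply because the local gates generate the full Clifford group.

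The consequence is that your claimed gap $g(n,t)=\tilde\Omega(\log^2(t)/(nt^8))$ is asserted rather than derived, and your proposed derivation would not produce it. In the paper, the reduction gives $\Delta(H_{n,t})\geq \Delta(H_{12t,t})/(48t)$, and the base gap $\Delta(H_{12t,t})$ is bounded below by Diaconis--Saloff-Coste applied to the Clifford group on $12t$ qubits (not to $\Cl(4)$): with least-probable-generator weight $\eta=1/(|G|\cdot 12t)$ and Cayley-graph diameter $d=O((12t)^3/\log t)$ one gets $\eta/d^2=\Omega(\log^2(t)\,t^{-7})$, and the extra $1/(48t)$ yields the $t^{-8}\log^2(t)$ in the statement. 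To repair your argument you would need to (i) verify Nachtergaele's overlap condition, which requires the frame-operator estimates of Lemma~\ref{lemma:Snormbound} and forces $l=\Theta(t)$, and (ii) apply the Cayley-graph gap bound at that block size. The final conversion step is fine in spirit, though the paper establishes the relative bound not through a diamond-norm inequality but by lower-bounding the minimal eigenvalue of the Choi state of $\Delta_t(\mu_{\Cl})$ on the symmetric subspace by $2^{-2nt}$ via the irreducible decomposition (Lemma~\ref{lemma:reduction-spectral}); passing through the diamond norm alone does not directly give the completely-positive ordering.
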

The proof of the theorem is given in Section~\ref{section:clifford-design}.
This result is a significant improvement over the scaling of $O(n^8)$, which is implicit in Ref.~\cite{divinzeno_data_2001}.

We can combine this result with the bounds obtained in Section~\ref{section:t-gates}.
To this end, consider a random circuit that $k$-times alternatingly applies a local random Clifford circuit of depth $m$, and a unitary drawn from the probability measure $\xi_K$.
The corresponding probability measure is 
\begin{equation}\label{eq:combined_model}
\sigma_{k,m}:= \underbrace{\sigma_G^{*m}*\xi_K*\dots *\sigma_G^{*m}*\xi_K}_{k\;\text{times}}.
\end{equation}
For these local random circuits we establish the following result:
\begin{restatable}[Local random unitary design]{corollary}{combined}
	\label{cor:combined} Let $K \in U(2)$ be a non-Clifford gate and let $G \subset \Cl(4)$ be a closed, generating set.
	There are constants $C''_1(K,G), C''_2(K), C''_3(K)$ such that whenever
	\begin{equation*}
	m \geq C''_1(K,G)n \log^{-2}(t) t^8\left( 2nt + \log (1/\varepsilon) \right)
	\text{\quad and\quad}
	k\geq C''_2(K)\log^2(t)(t^{4}+t\log(1/\varepsilon)),
	\end{equation*}
	the local random circuit $\sigma_{k,m}$, defined in \eqref{eq:combined_model}, is an $\varepsilon$-approximate unitary $t$-design for all $n\geq C''_3(K)t^2$.
\end{restatable}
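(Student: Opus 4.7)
My plan is to prove this corollary by combining the two main technical results of the paper via a triangle-inequality/telescoping argument, using the fact that the relative approximate Clifford design notion implies an additive diamond norm bound, which in turn propagates linearly through the $k$ alternating layers.

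First, I would apply Theorem~\ref{thm:Tcountdesign} with error parameter $\varepsilon/2$ to fix $k \geq C''_2(K) \log^{2}(t)\bigl(t^{4} + t \log(1/\varepsilon)\bigr)$ so that the idealized $K$-interleaved Clifford circuit $\sigma_k = (\mu_{\rm Cl} * \xi_K)^{*k}$ satisfies $\|\Delta_t(\sigma_k) - \Delta_t(\haar)\|_\diamond \leq \varepsilon/2$, provided $n \geq C_2(K) t^2$. The remaining task is to show that replacing each exact Clifford layer by a local random Clifford circuit of sufficient depth $m$ incurs at most an additional $\varepsilon/2$ error in diamond norm.

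The key technical step is the observation that a relative $\delta$-approximate Clifford $t$-design gives $\|\Delta_t(\sigma_G^{*m}) - \Delta_t(\mu_{\rm Cl})\|_\diamond \leq 2\delta$. This follows because the defining condition $-\delta \Delta_t(\mu_{\rm Cl}) \preccurlyeq \Delta_t(\sigma_G^{*m}) - \Delta_t(\mu_{\rm Cl}) \preccurlyeq \delta \Delta_t(\mu_{\rm Cl})$ implies, after applying to an arbitrary state and decomposing the output into positive and negative parts, that the trace norm of each part is bounded by $\delta \cdot 1$, since $\Delta_t(\mu_{\rm Cl})$ is CPTP. Writing $A := \Delta_t(\mu_{\rm Cl})\Delta_t(\xi_K)$ and $B := \Delta_t(\sigma_G^{*m})\Delta_t(\xi_K)$, both of which are CPTP and hence contractive in diamond norm, the standard telescoping identity
\begin{equation}
A^k - B^k = \sum_{i=0}^{k-1} B^i (A - B) A^{k-1-i}
\end{equation}
together with submultiplicativity of the diamond norm yields $\|\Delta_t(\sigma_k) - \Delta_t(\sigma_{k,m})\|_\diamond \leq 2k\delta$.

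It then suffices to choose $\delta \leq \varepsilon/(4k)$. By Theorem~\ref{theorem:clifford-design}, this is achieved in local depth $m = O\!\bigl(n \log^{-2}(t) t^{8} (2nt + \log(1/\delta))\bigr)$, and $\log(1/\delta) = \log(4k/\varepsilon) = O(\log t + \log\log(1/\varepsilon) + \log(1/\varepsilon))$. Since we assume $n \geq C''_3(K) t^2$ with $C''_3(K) := \max(C_2(K), 12)$ (absorbing the hypothesis $n \geq 12 t$ of Theorem~\ref{theorem:clifford-design}), the summand $\log(t)$ is dominated by $2nt$, and $\log\log(1/\varepsilon)$ is absorbed into $\log(1/\varepsilon)$. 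This recovers precisely the stated depth bound $m \geq C''_1(K,G) n \log^{-2}(t) t^{8}(2nt + \log(1/\varepsilon))$ after absorbing the numerical constants into $C''_1(K,G)$. I do not anticipate a genuine obstacle here; the only step requiring care is verifying that the CP-order relative approximation translates cleanly to an additive diamond norm bound without picking up factors that would destroy the linear-in-$k$ scaling of the telescope.
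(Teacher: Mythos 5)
Your proposal is correct, but it follows a different decomposition than the paper. The paper does not telescope: it inserts the spectral decomposition $\Delta_t(\sigma_G^{*m}) = P_{\Cl} + \sum_{i\geq 2}\lambda_i^m\Pi_i$ into the $k$-fold alternating product, multiplies out, keeps the pure term $[P_{\Cl}\mathrm{R}(K)]^k$, and bounds the remaining cross terms in \emph{spectral} norm by $\sum_{l\geq 1}\binom{k}{l}\lambda_2^{lm}$, paying the dimensional factor $2^{2tn}$ of Eq.~\eqref{eq:dnorm-bound} to return to the diamond norm; it then plugs in the explicit bound \eqref{eq:finalcombinedbound} for the main term and Proposition~\ref{prop:clifford-expander-bound} for $\lambda_2$, i.e.\ it bypasses Theorem~\ref{theorem:clifford-design} and the relative-design notion entirely. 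You instead use both theorems as black boxes: the telescoping identity $A^k-B^k=\sum_i B^i(A-B)A^{k-1-i}$ together with diamond-norm contractivity of the CPTP layers, and the standard conversion from a relative $\delta$-approximate Clifford design to an additive diamond-norm error $2\delta$ (which is valid exactly as you argue, because both moment operators are trace-preserving, so the difference applied to any state splits into positive and negative parts of trace at most $\delta$ each). Interestingly, the same $2nt$ term in the required depth $m$ arises in both routes, but from different places: in the paper from the $\infty\!\to\!\diamond$ conversion factor $2^{2tn}$, in your argument from the statement of Theorem~\ref{theorem:clifford-design} (which inherits it from Lemma~\ref{lemma:reduction-spectral}). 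Your version is more modular and arguably cleaner; the paper's yields an explicit closed-form error bound. Both incur the same mild $\log k$ dependence in the choice of $m$, which is absorbed into the constants in either case.
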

The complete argument for the corollary is given at the end of  Section~\ref{section:clifford-design}.
After introducing technical preliminaries in  Section~\ref{section:preliminaries}, the remainder of the paper, Section~\ref{section:t-gates} and Section~\ref{section:clifford-design}, is devoted to the proofs of Theorem~\ref{thm:Tcountdesign}, Theorem~\ref{theorem:clifford-design} and the Corollary~\ref{cor:combined}.
\newtext{
Finally, in Section~\ref{section:isomorphism} we elaborate on and formalize as
Proposition~\ref{theorem:singlingoutclifford} the observation that there exists no non-universal gate set generating exact $4$-designs for arbitrary system size.
This observation is an immediate consequence of the classification of finite unitary $t$-groups and a criterion for the universality of finite gate sets \cite{guralnick_larsen_2005,bannai_unitary_2020,sawicki_universal_2017}.
}

\section{Technical preliminaries}
\label{section:preliminaries}

\subsection{Operators and superoperators}
\label{section:norms}

Given a (finite-dimensional) Hilbert space $\mathcal{H}$, we denote with $L(\mathcal{H})$ the space of linear operators on $\mathcal H$ with involution $\dagger$ mapping an operator to its adjoint with respect to the inner product on $\mathcal H$. $L(\mathcal{H})$ naturally inherits a Hermitian inner product, the \emph{Hilbert-Schmidt inner product}
\begin{equation}
 \obraket{A}{B} := \tr(A^\dagger B), \qquad \forall A,B\in L(\mathcal H).
\end{equation}
As this definition already suggests, we will use ``operator kets and bras'' whenever we think it simplifies the notation. Concretely, we write $\oket{B}=B$ and denote with $\obra{A}$ the linear form on $L(\mathcal H)$ given by 
\begin{equation}
 \obra{A}:\; B \longmapsto \obraket{A}{B}.
\end{equation}
Following common terminology in quantum information theory, we call linear maps $\phi:\,L(\mathcal H)\rightarrow L(\mathcal H)$ on operators ``superoperators''. 
We use $\phi^\dagger$ to denote the adjoint map with respect to the Hilbert-Schmidt inner product. 
Note that with the above notation, $\phi = \oketbra{A}{B}$ defines a rank one superoperator with $\phi^\dagger = \oketbra{B}{A}$.
Moreover, we will denote by the superoperator $\Ad_A := A\cdot A^{-1}$ the \emph{adjoint action} of an invertible operator $A\in \GL(\mathcal H)$ on $L(\mathcal H)$. For notational reasons, we sometimes write $\Ad(A)$ instead of $\Ad_A$.

We consistently reserve the notation $\norm{\cdot}_p$ for the Schatten $p$-norms
\begin{equation}
 \norm{A}_p := \tr(|A|^p)^{1/p} = \norm{\sigma(A)}_{\ell_p},
\end{equation}
where $\sigma(A)$ is the vector of singular values of $A$. In particular, we use the \emph{trace norm} $p=1$, the \emph{Frobenius} or \emph{Hilbert-Schmidt norm} $p=2$ and the \emph{spectral norm} $p=\infty$. Clearly, this norms can be defined for both operators and superoperators and we will use the same symbol in both cases. For the latter, however, there is also a family of induced operator norms
\begin{equation}
 \norm{\phi}_{p\rightarrow q} := \sup_{\norm{X}_p \leq 1} \norm{\phi(X)}_q.
\end{equation}
Note that $\norm{\cdot}_{2\rightarrow 2} \equiv \norm{\cdot}_{\infty}$. 
Finally, we are interested in ``stabilized'' versions of these induced norms, in particular the \emph{diamond norm} 
\begin{myalign}
 \dnorm{\phi} &:=\sup_{d\in\N} \norm{\phi\otimes\id_{L(\C^d)}}_{1\rightarrow 1} = \norm{\phi\otimes\id_{L(\mathcal H)}}_{1\rightarrow 1}.
\end{myalign}
The following norm inequality will be useful \cite{low_pseudo-randomness_2010}
\begin{equation}
\label{eq:dnorm-bound}
 \dnorm{\phi} \leq (\dim \mathcal H)^2 \snorm{\phi}, \qquad \snorm{\phi} \leq \sqrt{\dim \mathcal H} \dnorm{\phi}.
\end{equation}

\subsection{Commutant of the diagonal representation of the Clifford group}
\label{section:representation}

In this section, we review some of the machinery developed in Ref.~\cite{gross2017schur}.
Recall that the $n$-qubit \emph{Clifford group} $\Cl(n)$ is defined as the unitary normalizer of the Pauli group $\mathcal{P}_n$
\newtext{as}
\begin{equation}
\label{eq:clifford-group}
 \Cl(n) = \left\{ U \in U(2^n, \Q[i]) \; \big| \; U\mathcal{P}_n U^\dagger \subset \mathcal{P}_n \right\}.
\end{equation}
Here, we followed the convention to restrict the matrix entries to rational complex numbers.
 This avoids the unnecessary complications from an infinite center $U(1)$ yielding a finite group with minimal center $Z(\Cl(n))=Z(\mathcal{P}_n)\simeq \Z_4$.
The Clifford group can equivalently be defined in a less conceptual but more constructive manner:
It is the subgroup of $\U(2^n)$ generated by $\mathrm{CX}$, the controlled not gate, the Hadamard gate $H$ and the phase gate $S$.

For this work, the $t$-th diagonal representation of the Clifford group, defined as
\begin{equation}
 \tau^{(t)}:\; \Cl(n) \longrightarrow \U(2^{nt}), \quad U \longmapsto U^{\otimes t},
\end{equation}
will be of major importance. It acts naturally on the Hilbert space $((\C^2)^{\otimes n})^{\otimes t}$ which can be seen as $t$ copies of an $n$-qubit system. However, it will turn out that the operators commuting with this representation naturally factorize with respect to a different tensor structure on this Hilbert space, namely $((\C^2)^{\otimes t})^{\otimes n}\simeq ((\C^2)^{\otimes n})^{\otimes t}$. Because of the different exponents, it should be clear from the context which tensor structure is meant.
We will make ubiquitous use of the description of the commutant of the diagonal representation in terms of \emph{stochastic Lagrangian sub-spaces} \cite{gross2017schur}:
\begin{definition}[Stochastic Lagrangian sub-spaces]\label{def:stochastic lagrangians}
	Consider the quadratic form $\mathfrak q:\mathbb{Z}^{2t}_2\to \mathbb{Z}_{4}$ defined as $\mathfrak q(x,y):= x\cdot x-y\cdot y \mod{4}$.
	The set $\Sigma_{t,t}$ denotes the set of all sub-spaces $T\subseteq \mathbb{Z}^{2t}_2$ being subject to the following properties:
	\begin{enumerate}
	  \item T is totally $\mathfrak q$-isotropic: $x\cdot x=y\cdot y \mod{4}$ for all $(x,y)\in T$. \label{def:q isotropicity}
		\item T has dimension $t$ (the maximum dimension compatible with total isotropicity).
		\item T is \emph{stochastic}: $(1,\dots, 1)\in T$.
	\end{enumerate}
\end{definition}
We call elements in $\Sigma_{t,t}$ \textit{stochastic Lagrangian sub-spaces}.
We have
\begin{equation}\label{eq:boundnumberlagrangian}
|\Sigma_{t,t}|=\prod_{k=0}^{t-2}(2^k+1)\leq 2^{\frac12 (t^2+5t)}.
\end{equation}
With this notion, we can now state the following key theorem from Ref.~\cite{gross2017schur}.
\begin{theorem}[\cite{gross2017schur}]
\label{theorem:commutant}
	If $n\geq t-1$, then the commutant $\tau^{(t)}(\Cl(n))'$ of the $t$-th diagonal representation of the Clifford group is spanned by the linearly independent operators $r(T)^{\otimes n}$, where $T\in \Sigma_{t,t}$ and
	\begin{equation}
	r(T):= \sum_{(x,y)\in T} |x\rangle\langle y|.
	\end{equation}
\end{theorem}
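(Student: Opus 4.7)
The plan is to establish Theorem~\ref{theorem:commutant} in three steps: (i) verify that each $r(T)^{\otimes n}$ commutes with $\tau^{(t)}(U)$ for every $U\in \Cl(n)$; (ii) prove that the family $\{r(T)^{\otimes n}\}_{T\in\Sigma_{t,t}}$ is linearly independent once $n\geq t-1$; and (iii) match the cardinality $|\Sigma_{t,t}|$ with the dimension of the commutant via a character sum computation. Steps (i) and (ii) are essentially algebraic-combinatorial; the representation-theoretic bulk of the argument sits in (iii).

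For (i), since $\Cl(n)$ is generated by $H$, $S$, and $\mathrm{CX}$, and since $r(T)^{\otimes n}$ factorizes across the $n$-qubit tensor decomposition, it suffices to verify $[r(T), H^{\otimes t}]=0=[r(T), S^{\otimes t}]$ on a single qubit and $[r(T)^{\otimes 2}, \mathrm{CX}^{\otimes t}]=0$ on two qubits. The $\mathrm{CX}$ case follows from the fact that $T$ is a $\Z_2$-subspace and hence closed under the coordinate-wise addition induced by the CNOT action on pairs $(x_i,y_i)$. The Hadamard case is where the quadratic form enters: a direct computation using $H^{\otimes t}|y\rangle\langle x| H^{\otimes t} = 2^{-t}\sum_{z,w}(-1)^{y\cdot z + x\cdot w}|z\rangle\langle w|$ together with Poisson summation over $T$ yields invariance of $r(T)$ precisely when $x\cdot x \equiv y\cdot y \pmod 2$ for all $(x,y)\in T$. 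The phase-gate verification sharpens this to the mod-$4$ refinement, which is exactly what isotropicity of $T$ with respect to $\mathfrak q$ encodes.

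For (ii), the operators $r(T)$ themselves are already linearly independent in $L((\C^2)^{\otimes t})$: distinct subspaces $T\subseteq \Z_2^{2t}$ have distinct point sets, and since $\{|x\rangle\langle y|\}$ is a basis of $L((\C^2)^{\otimes t})$, a relation between the $r(T)$ would contradict this. Linear independence then propagates to the tensor powers: any dependence $\sum_T c_T r(T)^{\otimes n}=0$, contracted against product test operators $\bigotimes_i \oket{X_i}$, becomes a polynomial identity in $n$ operator-valued evaluations; by a Vandermonde-type argument the coefficients $c_T$ must vanish provided $n$ is large enough to separate the finitely many $r(T)$. The sharp threshold $n\geq t-1$ reflects the dimension budget left after imposing the stochasticity constraint $(1,\dots,1)\in T$, and matches what is observed in the ordinary Schur-Weyl stabilization $\dim V \geq t$.

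For (iii), Schur orthogonality gives
\begin{equation}
\dim \tau^{(t)}(\Cl(n))' = \frac{1}{|\Cl(n)|}\sum_{U\in \Cl(n)}|\tr U|^{2t},
\end{equation}
and this character sum can be read as counting $\Cl(n)$-orbits on pairs of $t$-tuples of computational basis states. The core claim is that these orbits are parametrized exactly by $\Sigma_{t,t}$. The cleanest route is to compute the projector $P=\mathbb{E}_{U\sim \mu_{\rm Cl}}\bigl[U^{\otimes t}\otimes \bar U^{\otimes t}\bigr]$ directly using the stabilizer formalism: the short exact sequence $1\to \mathcal{P}_n\to \Cl(n)\to \mathrm{Sp}(2n,\Z_2)\to 1$ together with the Weil representation reduces the average first to operators supported on symplectic subspaces (the Pauli-averaging step), and then to a sum indexed by stochastic Lagrangians over the symplectic quotient. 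This produces an expansion $P=\sum_{T\in\Sigma_{t,t}} R(T)^{\otimes n}$ where the $R(T)$ arise from the $r(T)$ by a Gram-Schmidt-type orthogonalisation; tracing delivers $\dim \tau^{(t)}(\Cl(n))' = |\Sigma_{t,t}|$, which combined with (i) and (ii) forces the span to exhaust the commutant. The hard part of the argument is exactly this orbit/projector computation: one must carefully track both the Pauli-averaging and Weil-representation contributions, and verify that $n\geq t-1$ is precisely the threshold beyond which the orbit structure stabilizes, in direct analogy to the $\dim V \geq t$ regime of ordinary Schur-Weyl duality.
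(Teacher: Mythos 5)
First, a point of reference: the paper does not prove Theorem~\ref{theorem:commutant} at all; it is imported verbatim from Ref.~\cite{gross2017schur}, so your proposal has to be measured against the proof given there rather than against anything in this manuscript. Your step (i) is correct and matches how that reference argues membership in the commutant: conjugation by $S^{\otimes t}$ enforces the mod-$4$ isotropicity, conjugation by $H^{\otimes t}$ (via the Fourier/Poisson-summation computation) enforces that $T$ be Lagrangian for the associated bilinear form, $X^{\otimes t}$ enforces stochasticity, and $\mathrm{CX}$ is handled by the subspace property of $T$ across two tensor factors. The three-part architecture (membership, independence, dimension count) is also the right skeleton.

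The genuine gaps are in (ii) and (iii). In (ii) you assert that the $r(T)$ are already linearly independent in $L((\C^2)^{\otimes t})$ because distinct subspaces have distinct point sets. That inference is invalid — indicator functions of distinct subsets, even distinct subspaces, of a finite set need not be linearly independent (e.g.\ $\one_{\langle e_1\rangle}+\one_{\langle e_2\rangle}+\one_{\langle e_1+e_2\rangle}=\one_{\F_2^2}+2\,\one_{\{0\}}$ in $\F_2^2$) — and the conclusion is in fact false: since $|\Sigma_{t,t}|=\prod_{k=0}^{t-2}(2^k+1)\geq 2^{(t-1)(t-2)/2}$ exceeds $\dim L((\C^2)^{\otimes t})=4^t$ once $t\geq 7$, the $r(T)$ are necessarily linearly \emph{dependent} at $n=1$. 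This is exactly why the hypothesis $n\geq t-1$ is there, and your ``Vandermonde-type'' propagation to tensor powers is an appeal to a mechanism rather than an argument; it neither repairs the false base case nor produces the stated threshold. In (iii), the identity $\dim\tau^{(t)}(\Cl(n))'=\frac{1}{|\Cl(n)|}\sum_{U}|\tr U|^{2t}$ is the right starting point, but evaluating this frame potential to exactly $|\Sigma_{t,t}|$ for $n\geq t-1$ is the technical core of the theorem and is only gestured at; worse, the proposed shortcut of expanding the commutant projector as a sum over $T$ of ``Gram--Schmidt orthogonalised'' versions of the $r(T)^{\otimes n}$ presupposes that these operators span the commutant, which is the statement to be proven. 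As written, the proposal establishes only that the $r(T)^{\otimes n}$ lie in the commutant; both linear independence and the spanning/dimension count — precisely the parts for which this paper defers to Ref.~\cite{gross2017schur} — remain open.
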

Since the representation in question is fixed throughout this paper, we will simplify the notation from now on and write $\Cl(n)'\equiv \tau^{(t)}(\Cl(n))'$.
To make use of a more sophisticated characterization of the elements $r(T)$ developed in Ref.~\cite[Section~4]{gross2017schur}, we need the following definitions.

\newtext{
\begin{definition}[Stochastic orthogonal group]
Consider the quadratic form $q:\mathbb{Z}^t_2\to \mathbb{Z}_4$ defined as $q(x):=x\cdot x \mod{4}$.
The \emph{stochastic orthogonal group} $O_t$ is defined as the group of  $t\times t$ matrices $O$ with entries in $\mathbb{Z}_2$ such that $q(Ox)=q(x)$ for all $x\in \mathbb{Z}^t_2$.
\end{definition}
The subspace $T_O:=\{(Ox,x),x\in \mathbb{Z}_2^t\}$ is a stochastic Lagrangian subspace.
Moreover, the operator $r(O):=r(T_O)$ is unitary.
We will therefore canonically embed the orthogonal stochastic group $O_t\subset\Sigma_{t,t}$.
Notice that the permutation group on $t$ objects, referred to as $S_t$, may be embedded into $O_t$ by acting on the standard basis of $\mathbb{Z}_2^t$.
Together with $O_t$, the following definition can be used to fully characterize the set of 
stochastic Langrangian sub-spaces, $\Sigma_{t,t}$.


\begin{definition}[Defect sub-spaces]
A defect subspace is a subspace $N\subseteq\mathbb{Z}_2^t$ which is isotropic with respect to $q$, that is, that $q(x)=0$ for all $x\in N$.
\end{definition}

The quadratic form $q$ is what is known as a \emph{generalized quadratic refinement} of the bi-linear form defined by the inner product $(x,y)\mapsto x\cdot y \mod 2$ (see, e.g., Ref.~\cite[App.~A]{klausthesis} for a self-contained discussion).
In the following, the ortho-complement $N^\perp$ of a subspace $N\subseteq\mathbb{Z}_2^t$ is taken with respect to the inner product modulo 2,
\begin{align*}
  N^\perp = \{v\in\mathbb{Z}_2^t \;|\; v\cdot u = 0\mod 2,\ \forall\ u\in N\}.
\end{align*}

Notice that $q(x)=0$ implies that $x\cdot \mathbf{1}_t=0\mod 2$, where $\mathbf{1}_t:=(1,\dots ,1)^T$ is the all-ones vector.
Thus, we do not need a separate clause requiring $\mathbf{1}_t\in N^\perp$ in the definition of defect sub-spaces (compare Ref.~\cite[Def.~4.16]{gross2017schur}).
Moreover, one may verify that $2q(x)=2x\cdot \mathbf{1}_t\mod 4$.
This implies, similarly, that if $O$ preserves $q$, then $O\mathbf{1}_t=\mathbf{1}_t$.
Borrowing the language of~\cite{gross2017schur}, all $q$-isometries are stochastic (compare the definition of the orthogonal stochastic group in  that reference,~\cite[Def.~4.11]{gross2017schur}). 
The reason for these simplifications is that here we focus on the qubit case exclusively, while Ref.~\cite{gross2017schur} works simultaneously for qubits and odd qudits. We use the names \emph{stochastic orthogonal group} and \emph{defect subspace} (rather than simply \emph{$q$-isometry group} and \emph{isotropic subspace}) to keep with the notation of that reference.

For any defect subspace $N$, it holds that $N\subseteq N^{\perp}$ (and thus $\dim N\leq t/2$).
Because of this, defect sub-spaces $N\subseteq\mathbb{Z}_2^t$ } 
define \emph{Calderbank-Shor-Sloane (CSS)} codes
\begin{equation}
\mathrm{CSS}(N):=\left\{Z(p)X(q) \; | \; q,p\in N\right\},
\end{equation}
where the action of the multi-qubit Pauli operators is $Z(p)\ket{x}:=(-1)^{p\cdot x}\ket{x}$ and $X(q)\ket{x}:=\ket{x+q}$ for $x\in\Z_2^t$.
The corresponding projector is given by
\begin{equation}
P_N:=P_{\mathrm{CSS}(N)}=\frac{1}{|N|^2}\sum_{q,p\in N}Z(p)X(q).
\end{equation}
Since the order of the stabilizer group is $2^{2\dim N}$, $P_N$ projects onto a $2^{t-2\dim N}$-dimensional subspace of $(\C^2)^{\otimes t}$.
For $N=\{0\}$ we set $P_{\mathrm{CSS}(N)}:=\one$.
We summarize the findings of \newtext{Ref.}~\cite[Section~4]{gross2017schur} in Thm.~\ref{theorem:r decomposition}.
\newtext{We give a short proof to give an explicit relation between this theorem and the results of that work.
\begin{theorem}[\cite{gross2017schur}]
  \label{theorem:r decomposition}
 Consider $T\in \Sigma_{t,t}$, then
 \begin{equation}
 r(T)=2^{\dim N}r(O)P_{\mathrm{CSS}(N)}=2^{\dim N'}P_{\mathrm{CSS}(N')}r(O')
 \end{equation}
 for $O,O'\in O_t$ and $N,N'$ are unique defect sub-spaces with $\dim N=\dim N'$.
\end{theorem}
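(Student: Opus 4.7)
The plan is to read off the defect subspace $N$ and the orthogonal element $O$ directly from the combinatorial data of $T$ and then verify the product decomposition by comparing matrix entries. Introduce the \emph{right} and \emph{left} defect subspaces
\[
N_R := \{ y \in \Z_2^t : (0, y) \in T \}, \qquad N_L := \{ x \in \Z_2^t : (x, 0) \in T \},
\]
and set $N := N_R$. Total $\mathfrak q$-isotropicity of $T$ immediately gives $y \cdot y \equiv 0 \pmod 4$ for every $y \in N_R$, which both implies $q(y) = 0$ and forces $(1, \dots, 1) \in N_R^\perp$, so $N_R$ (and analogously $N_L$) is a defect subspace in the required sense. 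A short calculation applying isotropicity to elements of the form $(x, y) + (0, y')$ with $(x, y) \in T$ and $y' \in N_R$ shows $y \cdot y' \equiv 0 \pmod 2$, i.e.\ $\pi_2(T) \subseteq N_R^\perp$, where $\pi_i\colon T \to \Z_2^t$ denotes projection onto the $i$-th block of $t$ coordinates. Comparing dimensions via $\ker(\pi_2|_T) = N_L$ gives $\dim N_L \ge \dim N_R$; symmetry forces equality, and $\pi_2(T) = N_R^\perp$, $\pi_1(T) = N_L^\perp$.

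The assignment $y + N_R \mapsto x + N_L$ whenever $(x, y) \in T$ is well defined and produces a linear isomorphism $\bar\phi \colon N_R^\perp / N_R \to N_L^\perp / N_L$ that preserves the $\Z_4$-valued quadratic form $q$ descending to the quotients. The crucial step is to extend $\bar\phi$ to an element $O \in O_t$ with $O(N_R) = N_L$ via a Witt-type extension theorem for $q$. Stochasticity of $T$ forces $\bar\phi(e + N_R) = e + N_L$ with $e = (1, \dots, 1)$, so $Oe \in e + N_L$; the residual freedom in the Witt extension is used to enforce $Oe = e$, placing $O$ in $O_t$.

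With $O$ and $N$ fixed, the verification reduces to a direct matrix element calculation: using $P_{\mathrm{CSS}(N)} |b\rangle = |N|^{-1} \sum_{n \in N} |b + n\rangle$ for $b \in N^\perp$ and zero otherwise,
\[
\langle a |\, r(O)\, P_{\mathrm{CSS}(N)}\, | b \rangle = \frac{[\, b \in N^\perp\,]\,[\, O^{-1} a \in b + N\,]}{|N|} = \frac{[\,(a,b) \in T\,]}{|N|},
\]
where the last equality uses $O(b + N) = Ob + N_L$ and the description of $T$ derived above. Multiplying by $2^{\dim N}$ yields the first decomposition. The second one follows by applying the first to $T^{*} := \{(y, x) : (x, y) \in T\} \in \Sigma_{t,t}$, whose right defect equals $N_L$, and taking Hermitian adjoints: since $r(T^{*})^\dagger = r(T)$, $r(O^{*})^\dagger = r((O^{*})^{-1})$, and the CSS projector is self-adjoint, one obtains $r(T) = 2^{\dim N_L} P_{\mathrm{CSS}(N_L)}\, r((O^{*})^{-1})$, so $N' := N_L$, $O' := (O^{*})^{-1}$, and $\dim N = \dim N'$ as required. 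The main obstacle I expect is the Witt-type extension step: the form $q$ is $\Z_4$-valued on a $\Z_2$-module, so standard field-based Witt theorems do not apply verbatim, and propagating the stochasticity constraint $Oe = e$ through the extension requires some care.
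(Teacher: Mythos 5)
First, a point of reference: this paper does not actually prove Theorem~\ref{theorem:r decomposition}; it is quoted wholesale from Ref.~\cite[Sec.~4]{gross2017schur} (the text preceding it says ``we summarize the findings of''), so there is no in-paper proof to compare yours against. Judged on its own terms, every step you actually carry out is correct: $N_R$ and $N_L$ are defect subspaces (your observation that $q(y)=0$ already forces $h(y)\equiv 0 \bmod 4$ and hence $(1,\dots,1)\in N_R^\perp$ is right); the isotropicity computation giving $\pi_2(T)\subseteq N_R^\perp$, the dimension count via $\ker(\pi_2|_T)=N_L\oplus 0$ yielding $\dim N_L=\dim N_R$ and $\pi_2(T)=N_R^\perp$, and the well-definedness of the quotient isometry $\bar\phi$ are all standard and sound; the matrix-element verification of $r(T)=2^{\dim N}r(O)P_{\mathrm{CSS}(N)}$ is valid \emph{given} an $O\in O_t$ with $O(N_R)=N_L$ inducing $\bar\phi$; and the derivation of the second factorization by applying the first to $T^{\ast}$ and taking adjoints (using that $r(O)$ is unitary and $O_t$ is a group) is clean and gives $\dim N'=\dim N$.

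The genuine gap is exactly the step you flag yourself, and it is not a detail: the existence of the extension $O\in O_t$ is the mathematical content of the theorem and is what the cited Sec.~4 of Ref.~\cite{gross2017schur} is devoted to establishing. Concretely, three things still need proof. (i) A Witt-type extension result for the $\Z_4$-valued form $q$ on the $\Z_2$-module $\Z_2^t$: the associated bilinear form $q(x+y)-q(x)-q(y)=-2(x\cdot y)\bmod 4$ is ``doubled'', so field-theoretic Witt theorems (already delicate in characteristic $2$) do not apply verbatim, and you must show that an isometry defined on $N_R$ together with the induced map on $N_R^\perp/N_R$ lifts to a global isometry carrying the flag $(N_R,N_R^\perp)$ to $(N_L,N_L^\perp)$. (ii) The stochasticity correction: from $O(1,\dots,1)\in(1,\dots,1)+N_L$ you need an isometry fixing $(1,\dots,1)$, which requires exhibiting a correcting isometry supported on $N_L$ that preserves $N_L$ setwise and acts trivially on $N_L^\perp/N_L$, so as not to destroy the identity $O(b+N_R)=Ob+N_L$ used in your matrix-element computation. (iii) Strictly speaking you also owe the reader the fact that the $O$ so constructed lies in $O_t$ as defined (i.e., preserves $q$ on \emph{all} of $\Z_2^t$, not just on $N_R^\perp$). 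Until these are supplied, or explicitly delegated to the relevant lemmas of Ref.~\cite{gross2017schur}, your argument proves the theorem only conditionally on an unproven extension lemma.
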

\begin{proof}
  Recall from Ref.~\cite{gross2017schur} that the code space $\mathrm{range}\ P_{\mathrm{CSS}(N)}$ has an orthonormal basis of coset state vectors given by
  \begin{align*}
    \left\{ 
    \ket{N,[x]} := \frac{1}{\sqrt N}\sum_{y\in N} \ket{x+y}
    \;\Big|\; x\in N^\perp,\ [x]\in N^\perp/N
    \right\}.
  \end{align*} 
  One may compute that $r(O)\ket{N, [x]} = \ket{ON, [Ox]}$.
  This way, 
  \begin{align*}
    r(O)P_{\mathrm{CSS}(N)} = \sum_{[x]\in N^\perp/N} \ket{ON,[Ox]}\bra{N,[x]}.
  \end{align*} 
  Comparing this equation to~\cite[Lem.~4.23]{gross2017schur} we see that the set $\{2^{\dim N}r(O)P_{\mathrm{CSS}(N)}\}_O$ is equal to the set of $r(T)$ operators with right defect subspace given by $N$, i.e., with $T_{RD}=N$ in the notation of that reference.
  This way, varying over $N$ we obtain the full set $\Sigma_{t,t}$.
  The existence of a decomposition $2^{\dim N}P_{\mathrm{CSS}(N')}r(O')$ follows from the above by noting that $r(O)P_{\mathrm{CSS}(N)}r(O)^\dagger = P_{\mathrm{CSS}(ON)}$.
\end{proof}
}

\begin{lemma}[Norms of $r(T)$]
\label{lemma:rT-norms}
 Suppose $r(T)=2^{\dim N}r(O)P_{N}$ as in Theorem~\ref{theorem:r decomposition}. Then it holds:
 \begin{align}
  \tnorm{r(T)} &= 2^{t-\dim N}, & \TwoNorm{r(T)} &= 2^{t/2}, & \snorm{r(T)} &= 2^{\dim N}.
 \end{align}
\end{lemma}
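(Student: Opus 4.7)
The plan is to use the decomposition $r(T) = 2^{\dim N}\, r(O)\, P_N$ provided by Theorem~\ref{theorem:r decomposition}, together with two facts stated in the excerpt: $r(O)$ is unitary, and $P_N = P_{\mathrm{CSS}(N)}$ is an orthogonal projector onto a subspace of dimension $2^{t-2\dim N}$. Since all three Schatten norms are unitarily invariant, the factor $r(O)$ will simply drop out, reducing every computation to a statement about the scaled projector $2^{\dim N} P_N$.

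Concretely, I would proceed in the following order. First the spectral norm: $\snorm{r(T)} = 2^{\dim N}\snorm{r(O)P_N} = 2^{\dim N}\snorm{P_N} = 2^{\dim N}$, where the last equality uses that a nonzero orthogonal projector has spectral norm $1$. Next the Frobenius norm: compute $\TwoNorm{r(T)}^2 = 2^{2\dim N}\tr(P_N r(O)^\dagger r(O) P_N) = 2^{2\dim N}\tr(P_N) = 2^{2\dim N}\cdot 2^{t-2\dim N} = 2^t$, using unitarity of $r(O)$, idempotency of $P_N$, and the rank formula for $P_N$. Finally the trace norm: $\tnorm{r(T)} = 2^{\dim N}\tnorm{r(O)P_N} = 2^{\dim N}\tnorm{P_N} = 2^{\dim N}\cdot 2^{t-2\dim N} = 2^{t-\dim N}$, again by unitary invariance and because the trace norm of a projector equals its rank.

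The only step that requires justification beyond what is written in the excerpt is the dimension count $\rank P_N = 2^{t-2\dim N}$, but this is explicitly recorded in the discussion immediately preceding Theorem~\ref{theorem:r decomposition} (the stabilizer group has order $2^{2\dim N}$, so the projector onto its $+1$ eigenspace has rank $2^{t-2\dim N}$). No obstacle is expected: once the decomposition of Theorem~\ref{theorem:r decomposition} and unitary invariance of Schatten norms are invoked, each identity is a one-line computation. If desired, one can also double-check consistency via $\tnorm{r(T)}\snorm{r(T)} \geq \TwoNorm{r(T)}^2$, which gives $2^{t-\dim N}\cdot 2^{\dim N} = 2^t = \TwoNorm{r(T)}^2$, with equality because $r(T)$ is (up to the unitary $r(O)$) a scalar multiple of a projector and hence has a single nonzero singular value repeated $2^{t-2\dim N}$ times.
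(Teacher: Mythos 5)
Your proposal is correct and follows exactly the paper's argument: unitary invariance of the Schatten norms removes the factor $r(O)$, and the three norms of the scaled projector $2^{\dim N}P_N$ are then read off from $\rank P_N = 2^{t-2\dim N}$. The paper's proof is a two-line version of the same computation, so there is nothing to add.
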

\begin{proof}
 Since any Schatten $p$-norm is unitarily invariant, we have $\norm{r(T)}_p = 2^{\dim N} \norm{P_N}_p$. The statements follow from $\rank P_N = 2^{t-2\dim N}$.
\end{proof}

In the following, we will often work with a normalized version of the $r(T)$ operators which we define as
\begin{equation}
\label{eqn:psiT}
Q_T := \frac{r(T)}{\TwoNorm{r(T)}} =  2^{-t/2} r(T).
\end{equation}

\section{
	Approximate unitary $t$-designs
}\label{section:t-gates}
In this section, we give a bound on the number of non-Clifford gates needed to leverage the Clifford group to an approximate unitary $t$-design. 
This is made precise by the following two theorems which rely on two distinct proof strategies and come with different trade-offs.

\Tcountdesign*

Recall from Def.~\ref{def:KinterleavedClifford} that a $K$-interleaved Clifford circuit has an associated probability measure $\sigma_K:=(\mu_{\Cl} * \xi_K)^{*k}$ where $\xi_K$ is the measure which draws uniformly from $\{K,K^\dagger,\one\}$ on the first qubit. Let us introduce the notation
\begin{equation}
\mathrm{R}(K):=
\int_{U(2^n)} \Ad_U^{\otimes t}\mathrm{d}\xi_{k}(U)=
\frac13\left(\Ad_K^{\otimes t}+\Ad_{K^{\dagger}}^{\otimes t}+\id\right)\otimes \id_{n-1}.
\end{equation}
Then, our goal is to bound the deviation of the moment operator
\begin{equation}
\MO_t(\sigma_{k}) = \int_{U(2^n)} \Ad_U^{\otimes t} \mathrm{d}\sigma_{k}(U)=\underbrace{ \MO_t(\mu_{\rm Cl})\mathrm{R}(K)\dots \MO_t(\mu_{\rm Cl})\mathrm{R}(K)}_{\text{$k$ times}},
\end{equation}
from the Haar projector $P_\Haar\equiv \MO_t(\haar)$ in diamond norm.
Using that $P_\Haar$ is invariant under left and right multiplication with unitaries, we have the identity
\begin{equation}
\label{eq:Aidentity}
 A^k - P_\Haar = ( A - P_\Haar )^k,
\end{equation}
for any mixed unitary channel $A$.
Thus, we can rewrite the difference of moment operators as
\begin{equation}
\label{eq:differencehaar}
\MO_t(\sigma_{k})-P_\Haar = [P_{\Cl}\mathrm{R}(K)]^k-P_{\Haar} = \left[ \left(P_{\Cl}-P_{\Haar}\right)\mathrm{R}(K)\right]^k,
\end{equation}
where we have introduced the shorthand notation $P_{\Cl}:= \MO_t(\mu_{\Cl})$.

\begin{remark}[Non-vanishing probability of applying the identity]
	We apply $K$, $K^{\dagger}$ with equal probability in Theorem~\ref{thm:Tcountdesign} such that $R(K)$ is Hermitian.
	The non-vanishing probability of applying $\one$, i.e., of doing nothing, is necessary in the proof of Lemma~\ref{lemma:overlapnonclifford}, because we require the probability distribution $\xi_K*\xi_K$ to have non-vanishing support on a non-Clifford gate.
	If $\xi_K$ is the uniform measure on $K$ and $K^{\dagger}$, then $\xi_K*\xi_K$ has support on $K^2$, $(K^{\dagger})^{2}$ and $\one$.
	We can hence drop this assumption for gates that do not square to a Clifford gate.
	This is not the case for e.g.~the $T$-gate.
\end{remark}

Our proof strategy for Theorem \ref{thm:Tcountdesign} makes use of the following two lemmas which are proven in Section~\ref{sec:overlaps} and \ref{section:haarsymmetrization}. The first lemma is key to the derivations in this section.
It is based on a bound (Lemma~\ref{lemma:haarsymmetrization}) on the overlap of stochastic Lagrangian sub-spaces with the Haar projector and Theorem~\ref{theorem:varju}, a special case of a
theorem about restricted spectral gaps of random walks on compact Lie groups due to Varj{\'u}~\cite{varju_walks_2013}.
\begin{restatable}[Overlap bound]{lemma}{overlapnonclifford}
  \label{lemma:overlapnonclifford}
	Let $K$ be a single qubit gate which is not contained in the Clifford group. Then, there is a constant $c(K)>0$ such that
	\begin{equation}
	\eta_{K,t}:=
	\max_{\substack{T\in\Sigma_{t,t}-S_t\\T'\in \Sigma_{t,t}}}
	\frac13 \left| \osandwich{Q_T}{\Ad_K^{\otimes t}+\Ad_{K^{\dagger}}^{\otimes t}+\id}{Q_{T'}}\right|\leq 1-c(K)\log^{-2}(t).
	\end{equation}
\end{restatable}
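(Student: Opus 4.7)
The plan is to reduce the overlap bound to a spectral gap for the single-qubit averaging operator on the complement of the Haar commutant. Let $R_s(K) := \frac{1}{3}(\Ad_K^{\otimes t} + \Ad_{K^\dagger}^{\otimes t} + \id)$ denote the superoperator on $L((\C^2)^{\otimes t})$ whose matrix elements between the $Q_T$ we wish to bound, and let $P_H := \int_{U(2)} \Ad_U^{\otimes t} \, d\haar(U)$ denote the single-qubit Haar projector. By Schur--Weyl duality, $P_H$ is the orthogonal projection onto $\spann\{r(\pi) : \pi \in S_t\}$. Each summand of $R_s(K)$ fixes this subspace pointwise, so $R_s(K) P_H = P_H R_s(K) = P_H$, and we may decompose $R_s(K) = P_H + R_s(K)(\id - P_H)$.

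First I would apply Theorem~\ref{theorem:varju} (Varj{\'u}'s restricted spectral gap bound) to $R_s(K)$ on $\operatorname{ran}(\id - P_H)$. Decomposing $\Ad^{\otimes t}$ into irreducibles of $\SU(2)$ and using $\Ad \simeq \mathrm{triv} \oplus V_1$, only the spin-$j$ representations with $1 \leq j \leq t$ contribute to $\operatorname{ran}(\id - P_H)$, each of dimension at most $2t+1$. Since $K$ is non-Clifford, the measure $\xi_K$ satisfies the hypotheses of Varj{\'u}'s theorem, yielding a constant $c(K) > 0$ with
\begin{equation}
\|R_s(K)(\id - P_H)\|_\infty \leq 1 - \frac{c(K)}{\log^2(t)} =: 1 - \epsilon.
\end{equation}

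Second, for any $T, T' \in \Sigma_{t,t}$, set $a := \|P_H Q_T\|_2$ and $b := \|(\id - P_H)Q_T\|_2$, with analogous $a', b'$; since $\|Q_T\|_2 = 1$, we have $a^2 + b^2 = (a')^2 + (b')^2 = 1$. Combining the splitting above with Cauchy--Schwarz in the Hilbert--Schmidt inner product and with $aa' + bb' \leq 1$, we get
\begin{equation}
\tfrac{1}{3}|\osandwich{Q_T}{\Ad_K^{\otimes t} + \Ad_{K^\dagger}^{\otimes t} + \id}{Q_{T'}}| = |\osandwich{Q_T}{R_s(K)}{Q_{T'}}| \leq aa' + (1-\epsilon) bb' \leq 1 - \epsilon\, bb'.
\end{equation}
The final ingredient is Lemma~\ref{lemma:haarsymmetrization}, which I expect to provide a universal lower bound $\sqrt{\gamma}$ on $b$ whenever $T \not\in S_t$, equivalently $a \leq \sqrt{1-\gamma}$. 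If also $T' \not\in S_t$, then $b' \geq \sqrt{\gamma}$ and the bound becomes $1 - \epsilon\gamma$ directly. If $T' \in S_t$, then $Q_{T'} \in \operatorname{ran} P_H$ and $R_s(K)Q_{T'} = Q_{T'}$, so the matrix element reduces to $|\obraket{P_H Q_T}{Q_{T'}}| \leq a \leq \sqrt{1-\gamma} \leq 1 - \gamma/2$, a constant gap strictly stronger than the $\log^{-2}(t)$ rate. In either case, rescaling the constant gives the claim.

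The hard part will be Step~1: verifying that Varj{\'u}'s hypotheses apply to the measure $\xi_K$ for an arbitrary non-Clifford $K$. His theorem typically requires the support to generate a dense subgroup of $U(2)$ with algebraic matrix entries. For generic $K$ this is immediate, but for non-Clifford $K$ of finite order one would instead replace $P_H$ by the projector onto the commutant of $\langle K, K^\dagger\rangle^{\otimes t}$ and verify separately that $Q_T$ with $T\notin S_t$ retains a nontrivial component orthogonal to this (possibly smaller) projector; this is where a refined form of Lemma~\ref{lemma:haarsymmetrization} would be needed. The other substantive input is the uniform gap $\gamma$ coming from Lemma~\ref{lemma:haarsymmetrization} itself, which encodes the key representation-theoretic fact that non-permutation stochastic Lagrangians cannot lie arbitrarily close to the Haar commutant.
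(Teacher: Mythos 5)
Your Steps 2 and 3 are sound and essentially parallel the paper's mechanics (splitting off the Haar projector, invoking Lemma~\ref{lemma:haarsymmetrization} to guarantee that a non-permutation $Q_T$ has a component of squared norm at least $1/8$ outside $\ran P_{\Haar}$, and handling the $T'\in S_t$ case separately). The genuine gap is exactly the one you flag in your last paragraph, and it is not a technicality: Varj\'u's theorem cannot be applied to $\xi_K$ itself. The bound $\Delta_r(\nu)\geq C(d)\Delta_{r_0}(\nu)\log^{-2}(r)$ is vacuous unless $\Delta_{r_0}(\nu)>0$, which fails whenever $\langle K,K^\dagger\rangle$ is not dense in $\U(2)$ --- e.g.\ for the $T$-gate, which generates a finite cyclic group. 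In that case $\Delta_t(\xi_K)$ has eigenvalue $1$ on the (large) commutant of $\langle K\rangle^{\otimes t}$, so $\snorm{\mathrm{R}(K)(\id-P_{\Haar})}=1$ and your claimed spectral-gap inequality in Step~1 is simply false. Your proposed fix (replacing $P_{\Haar}$ by the projector onto the commutant of $\langle K,K^\dagger\rangle^{\otimes t}$ and proving a new version of Lemma~\ref{lemma:haarsymmetrization} relative to that projector) would require controlling the overlap of $Q_T$ with an unknown, $K$-dependent algebra, and is not carried out.

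The paper's resolution is different and is the key idea you are missing: it applies Varj\'u not to $\xi_K$ but to the auxiliary measure $\nu_K$ obtained by averaging the uniform measure on the single-qubit Clifford generators $\{H,S,S^3\}$ with $\xi_K*\xi_K$. Since the Clifford group together with any non-Clifford gate generates a dense subgroup of $\U(2)$, $\nu_K$ is universal and Varj\'u yields $\snorm{\Delta_t(\nu_K)-P_{\Haar}}\leq 1-c'(K)\log^{-2}(t)$. A Rayleigh-quotient argument with the test vector $(\id-P_{\Haar})Q_T$ combined with Lemma~\ref{lemma:haarsymmetrization} then bounds $\osandwich{Q_T}{\Delta_t(\nu_K)}{Q_T}$ away from $1$; because $Q_T$ commutes with the diagonal Clifford action, the contributions $\osandwich{Q_T}{\Ad_H^{\otimes t}}{Q_T}=\osandwich{Q_T}{\Ad_S^{\otimes t}}{Q_T}=1$ saturate, so the entire deficit must come from the $\Delta_t(\xi_K)^2$ term, and Cauchy--Schwarz converts the resulting bound on $\osandwich{Q_T}{\Delta_t(\xi_K)^2}{Q_T}$ into the desired bound on $|\osandwich{Q_T}{\Delta_t(\xi_K)}{Q_{T'}}|$ for arbitrary $T'$. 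This ``mix in the free Clifford generators, then subtract them back out'' step (together with the use of $\xi_K*\xi_K$ so that the convolved measure still contains a non-Clifford element) is what makes the argument work uniformly over all non-Clifford $K$, including finite-order ones.
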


The second lemma is of a more technical nature.

\begin{restatable}[Diamond norm bound]
	{lemma}
	{diamondbound}
	\label{lemma:diamondbounds}
	Consider $T_1,T_2\in \Sigma_{t,t}$ and denote with $N_1,N_2$ their respective defect spaces. Then, it holds that
	\begin{align}
	\dnorm{\oketbra{Q_{T_1}}{Q_{T_2}}} &\leq 2^{\dim N_2-\dim N_1},\label{eq:diamondbound}\\
	|\obraket{Q_{T_1}}{Q_{T_2}}| &\leq 2^{-|\dim N_1-\dim N_2|}\label{eq:overlapsbound}.
	\end{align}
\end{restatable}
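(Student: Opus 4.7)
The plan is to reduce both inequalities to Schatten-norm computations on $Q_T$ that follow directly from Lemma~\ref{lemma:rT-norms}. Dividing the trace, Frobenius, and operator norms of $r(T)$ listed there by $\TwoNorm{r(T)} = 2^{t/2}$ (the normalization in the definition \eqref{eqn:psiT} of $Q_T$), one gets
\begin{equation*}
\TrNorm{Q_T} = 2^{t/2 - \dim N}, \qquad \TwoNorm{Q_T} = 1, \qquad \snorm{Q_T} = 2^{\dim N - t/2}.
\end{equation*}
These three identities are the only quantitative inputs I will need.

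For the overlap bound \eqref{eq:overlapsbound}, I would invoke the H\"older-type trace inequality $|\tr(A^\dagger B)| \leq \TrNorm{A}\snorm{B}$ twice: once as written and once with the roles of $Q_{T_1}$ and $Q_{T_2}$ exchanged, using that the overlap equals the complex conjugate of its flip. Plugging in the norms above yields the two bounds $2^{\dim N_2 - \dim N_1}$ and $2^{\dim N_1 - \dim N_2}$, and taking the minimum of the two reproduces exactly $2^{-|\dim N_1 - \dim N_2|}$.

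For the diamond-norm bound \eqref{eq:diamondbound}, the main step is to establish the general rank-one estimate
\begin{equation*}
\dnorm{\oketbra{A}{B}} \leq \TrNorm{A}\snorm{B} \qquad \forall\, A,B \in L(\mathcal H).
\end{equation*}
I would derive this directly from the definition of the diamond norm: for any $Z \in L(\mathcal H \otimes \mathcal H')$, a short calculation gives $(\oketbra{A}{B} \otimes \id)(Z) = A \otimes \tr_{\mathcal H}\bigl((B^\dagger \otimes \id) Z\bigr)$, so the trace norm factorizes over the tensor product and the partial-trace factor can be bounded by $\snorm{B}\TrNorm{Z}$ using contractivity of the partial trace in trace norm combined with H\"older's inequality. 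Specializing to $A = Q_{T_1}$, $B = Q_{T_2}$ and substituting the Schatten norms above then yields $2^{t/2-\dim N_1}\cdot 2^{\dim N_2 - t/2} = 2^{\dim N_2 - \dim N_1}$, as claimed.

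The proof is essentially computational and there is no serious obstacle once Lemma~\ref{lemma:rT-norms} is available. The only mildly subtle point is the asymmetric pairing of the trace and operator norms between the ``ket'' and ``bra'' slots: pairing $\TrNorm{\cdot}$ with the ket and $\snorm{\cdot}$ with the bra produces the one-sided exponent in \eqref{eq:diamondbound}, whereas symmetrizing over the two pairings via a minimum is what gives the absolute-value exponent in \eqref{eq:overlapsbound}.
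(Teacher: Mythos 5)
Your proof is correct. For the diamond-norm bound \eqref{eq:diamondbound} you follow essentially the paper's own route: the rank-one estimate $\dnorm{\oketbra{A}{B}}\leq\TrNorm{A}\snorm{B}$ is derived in the paper in exactly the way you describe (contractivity of the partial trace in trace norm plus H\"older/trace--spectral duality), and the substitution of the Schatten norms from Lemma~\ref{lemma:rT-norms} is identical. For the overlap bound \eqref{eq:overlapsbound}, however, you take a genuinely different and more elementary path. The paper invokes the composition structure of the commutant operators from Ref.~\cite{gross2017schur}: it identifies $r(T_1)r(T_2)^T$ (up to a factor $2^{\dim(N_1\cap N_2)}$) as another operator $r(T)$ whose defect space is $N_1^\perp\cap N_2+N_1$, and then applies H\"older once to $\Tr[r(T)]$. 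Your argument bypasses this structural input entirely: a single application of $|\tr(A^\dagger B)|\leq\TrNorm{A}\snorm{B}$ in each of the two orderings, combined with $\obraket{Q_{T_1}}{Q_{T_2}}=\overline{\obraket{Q_{T_2}}{Q_{T_1}}}$ and a minimum, already yields $2^{-|\dim N_1-\dim N_2|}$. What you lose is the sharper intermediate information the paper's computation carries (a bound in terms of $\dim(N_1\cap N_2)$, which could in principle be stronger when the defect spaces intersect trivially); what you gain is a self-contained two-line argument that needs nothing beyond Lemma~\ref{lemma:rT-norms}, and which makes the symmetry responsible for the absolute value in the exponent completely transparent. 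For the lemma as stated, both routes deliver the same conclusion.
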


The difficulty of using these results to bound the difference
\begin{equation}
  \label{eqn:the difference}
  \MO_t(\sigma_{k})-P_\Haar = \big[ \left(P_{\Cl}-P_{\Haar}\right)\mathrm{R}(K)\big]^k,
\end{equation}
stems from the following reason: The range of the projector $P_{\Cl}-P_{\mathrm{H}}$ is the ortho-complement of the space spanned by permutations $Q_{\pi}^{\otimes n}$ for $\pi\in S_t$ within the commutant of the Clifford group spanned by the operators $Q_T^{\otimes n}$.  Although this is a conveniently factorizing and well-studied basis, it is \emph{non-orthogonal}. Thus, the projectors do not possess a natural expansion in this basis and we can not directly use the above bounds. 
However, we can write it explicitly in a suitable orthonormal basis of the commutant obtained by the Gram-Schmidt procedure from the basis $\{Q_T^{\otimes n} \, | \,T \in\Sigma_{t,t}\}$.
We summarize the properties of this basis in the following lemma.

\begin{restatable}[Properties of the constructed basis]{lemma}{gramschmidt}
	\label{lemma:gram-schmidt}
	Let $\{T_j\}_{j=1}^{|\Sigma_{t,t}|}$ be an enumeration of the elements of $\Sigma_{t,t}$ such that the first $t!$ spaces $T_j$ correspond to the elements of $S_t$.
	Then, the $\{E_j\}$ constitutes an orthogonal (but not normalized) basis, where
	\begin{equation}\label{eq:gramschmidt}
	E_j :=\sum_{i=1}^jA_{i,j}\,Q_{T_i}^{\otimes n}:=
	\sum_{i=1}^j\left[\sum_{\substack{\Pi\in S_j\\\Pi(j)=i}}\sign(\Pi)\prod_{l=1}^{j-1} \obraket{Q_{T_l}}{Q_{T_{\Pi(l)}}}^n\right]\,Q_{T_i}^{\otimes n}.
	\end{equation}
	Denote by $N_i$ the defect space of $T_i$. For $n\geq \frac12(t^2+5t)$, we have
	\begin{align}
	|A_{i,j}|&\leq 2^{t^3+4t^2+6t-n|\dim N_i-\dim N_j|}, \qquad \forall i,j, \label{eq:gramschmidtfirst}\\
	|A_{i,j}|&\leq 2^{2t^2+10t-n}, \qquad \forall i\neq j. \label{eq:gramschmidtsecond}
	\end{align}
	Moreover, it holds that
	\begin{equation}\label{eq:gramschmidtthird}
	1-2^{t^2+7t-n}\leq A_{j,j}\leq 1+2^{t^2+7t-n}.
	\end{equation}
\end{restatable}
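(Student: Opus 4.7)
The coefficients $A_{i,j}$ in (\ref{eq:gramschmidt}) are exactly the Leibniz expansion of the classical Gram-Schmidt determinantal formula: $E_j$ equals the vector-valued determinant of the $j\times j$ matrix whose last column holds the vectors $Q_{T_i}^{\otimes n}$ and whose remaining entries are $G_{k,l}:=\obraket{Q_{T_k}}{Q_{T_l}}^n$. Cofactor expansion along that last column identifies $A_{i,j}=(-1)^{i+j}\det G^{(i,j)}$, where $G^{(i,j)}$ is the minor obtained by striking row $i$ and column $j$ of the $j\times j$ Gram matrix. Orthogonality follows immediately from this representation: pairing $E_j$ against any $Q_{T_k}^{\otimes n}$ with $k<j$ replaces the last column by the $k$-th column of the Gram matrix, creating two identical columns and forcing the determinant to vanish, so $E_j\perp \spann\{Q_{T_1}^{\otimes n},\ldots,Q_{T_{j-1}}^{\otimes n}\}\supseteq \spann\{E_1,\ldots,E_{j-1}\}$.

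The magnitude bounds rest on two overlap estimates. From $r(T)=\sum_{(x,y)\in T}\ketbra{x}{y}$ one computes $\obraket{Q_{T_1}}{Q_{T_2}}=2^{\dim(T_1\cap T_2)-t}$, which forces $|G_{k,l}|\leq 2^{-n}$ whenever $T_k\neq T_l$; complementing this, Lemma~\ref{lemma:diamondbounds} gives $|G_{k,l}|\leq 2^{-n|\dim N_k-\dim N_l|}$. To bound $\det G^{(i,j)}$, I will expand by Leibniz and organize permutations $\Pi\in S_j$ with $\Pi(j)=i$ by their cycle structure. The cycle of $\Pi$ containing $j$ traces a walk through defect dimensions from $\dim N_j$ to $\dim N_i$, so a telescoping triangle inequality yields $\prod_{l\in C_j\setminus\{j\}}|G_{l,\Pi(l)}|^n\leq 2^{-n|\dim N_i-\dim N_j|}$, providing the defect factor needed for (\ref{eq:gramschmidtfirst}). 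Every further cycle of length $\geq 2$ contributes one factor of $2^{-n}$ per displaced element. Grouping permutations by their number $m(\Pi)$ of non-fixed points and invoking $|\Sigma_{t,t}|\leq 2^{(t^2+5t)/2}$ from (\ref{eq:boundnumberlagrangian}) to count, the residual sum splits into a short part (small $m(\Pi)$) controlled by a polynomial-factorial prefactor and a tail governed by a geometric series in $j\cdot 2^{-n}$, convergent once $n\gtrsim \tfrac12(t^2+5t)$. Tuning the split reproduces the prefactor $2^{t^3+4t^2+6t}$. For (\ref{eq:gramschmidtsecond}), the condition $i\neq j$ forces $m(\Pi)\geq 2$ and supplies one extra universal $2^{-n}$ beyond the defect factor, yielding the sharper off-diagonal bound.

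For (\ref{eq:gramschmidtthird}), write $A_{j,j}=\det G^{(j,j)}=\det G_{\leq j-1}=\det(I+E)$, where $E$ is hollow with $|E_{k,l}|\leq 2^{-n}$. Leibniz then gives $\det(I+E)=1+\sum_{\Pi\neq\id}\sign(\Pi)\prod_l E_{l,\Pi(l)}$; every non-identity $\Pi$ contributes at most $2^{-nm(\Pi)}$ in magnitude, and the same weighted cycle count bounds the remainder symmetrically by $2^{t^2+7t-n}$.

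The principal obstacle is that $|\Sigma_{t,t}|$ grows as $2^{O(t^2)}$, so the Gram matrix is super-polynomially large in $t$ and a naive bound $(j-1)!$ on the Leibniz sum would be doubly exponential in $t$. The claimed polynomial-in-$t$ prefactors can only be recovered by extracting a full factor of $2^{-n}$ from each non-fixed point of $\Pi$ via $|G_{k,l}|\leq 2^{-n}$ for $k\neq l$, which defeats the factorial growth once $n$ is polynomial in $t$. The careful bookkeeping required to match the concrete constants $t^3+4t^2+6t$, $2t^2+10t$, and $t^2+7t$—tracking cycle types against both the defect telescoping and the off-diagonal decay—is the technical heart of the argument.
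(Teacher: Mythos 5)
Your proposal is correct and follows essentially the same route as the paper's proof: the cofactor/determinant identification for orthogonality, the telescoping bound along the cycle through $j$ (via Lemma~\ref{lemma:diamondbounds}) to extract the factor $2^{-n|\dim N_i-\dim N_j|}$, and derangement-type counting of permutations by their number of fixed points, with one factor of roughly $2^{-n}$ per displaced point defeating the factorial growth once $n\geq\tfrac12(t^2+5t)$. The only cosmetic difference is that you bound individual off-diagonal Gram entries by $2^{-n}$ via the exact overlap $2^{\dim(T_k\cap T_l)-t}\leq\tfrac12$, whereas the paper uses the slightly weaker $t\,2^{t-n}$ from Lemma~\ref{lemma:overlaps}; this does not change the argument.
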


We believe that the explicit bounds in Lemma~\ref{lemma:gram-schmidt} might be of independent interest in applications of the Schur-Weyl duality of the Clifford group.
For the sake of readibility, and as Theorem~\ref{thm:Tcountdesign} holds up to an inexplicit constant, we will bound all polynomials in $t$ by their leading order term in the following.
\newtext{Specifically, the bounds in Lemma~\ref{lemma:gram-schmidt} will be simplified by using
the inequalities
\begin{align}
  t^3+4t^2+6t &\leq 11t^3,\\
  2t^2+10t \leq 12t^2 &\leq 12 t^3,\\
  t^2+7t \leq 8t^2 &\leq 8t^3
\end{align}
which hold for all positive integers $t$.
}
\begin{proof}[Proof of Theorem~\ref{thm:Tcountdesign}]

	Notice that from~\eqref{eq:differencehaar}, we have the expression
	\begin{align}
	\|[P_{\Cl}&\mathrm{R}(K)]^k-P_{\mathrm{H}}\|_{\diamond}\\
	&=
  \left\|\left[\left(\sum_{j=t!+1}^{|\Sigma_{t,t}|}\frac{1}{\obraket{E_j}{E_j}}\oketbra{E_j}{E_j}\right)\mathrm{R}(K)\right]^k\right\|_{\diamond}\label{eq:fullexpression}\\
	&=
  \left\|\sum_{j_1,\dots, j_m=t!+1}^{|\Sigma_{t,t}|}\prod_{l=1}^k
  \frac{1}{\obraket{ E_{j_l}}{E_{j_l}}}
  \oket{E_{j_1}}\obra{E_{j_1}}R(K)\oket{E_{j_2}} \dots
  \obra{ E_{j_k}}R(K)\right\|_{\diamond}\\
	&\leq 
  \sum_{j_1,\dots, j_k=t!+1}^{|\Sigma_{t,t}|}\prod_{l=1}^k
  \frac{1}{\obraket{ E_{j_l}}{E_{j_l}}}
  \prod_{r=1}^{k-1} |\obra{E_{j_r}}R(K)\oket{E_{j_{r+1}}}| \cdot
  \Big\|\oketbra{E_{j_1}}{ E_{j_k}}\Big\|_{\diamond}.\label{eq:bigsum}
	\end{align}
  We now bound each of the factors in each term above.
	First, we compute the squared norm of $\oket{E_j}$,
	\begin{myalign}
		\obraket{E_j}{E_j} = \sum_{r,l=1}^j A_{r,j} A_{l,j} \obraket{Q_{T_r}}{Q_{T_l}}^n
		= 
    A_{j,j}^2 + \sum_{k,l< j} A_{r,j} A_{l,j} \obraket{Q_{T_k}}{Q_{T_l}}^n.
	\end{myalign}
	Using Eqs.~\eqref{eq:gramschmidtsecond} and~\eqref{eq:gramschmidtthird}, we thus bound
	\begin{myalign}
		\obraket{E_j}{E_j} 
    &\leq 
    \left(1 + 2^{t^2+7t-n}\right)^2 + (j^2-1) 4^{2t^2+10t-n} \\
    &\leq \left(1 + 2^{t^2+7t-n}\right)^2 + |\Sigma_{t,t}|^2 4^{2t^2+10t-n}\\
		&\leq 1 + 2^{31t^2 - 2n },
	\end{myalign}
	and in the same way
	\begin{equation}
	   \obraket{E_j}{E_j} \geq 1 - 2^{31t^2 - 2n }.
	\end{equation}
	Now we use that $n \geq 16t^2$.
	Letting $x:=2^{31t^2 - 2n }\in[0,\frac12]$, the inequalities $1/(1-x)\leq 1+2x$ 
	and $1-2x\leq 1/(1+x)$ hold.
	This leads to
	\begin{equation}\label{eq:boundinversenorm}
	   \frac{1}{\obraket{E_j}{E_j}}= 1+a_j\qquad\text{with}\qquad |a_j|\leq 2^{32t^2-2n}.
	\end{equation}

  We now focus on the second factor,
  \begin{align}
	   |\obra{E_i}R(K)\oket{E_j}|\leq \sum_{r=1}^i\sum_{l=1}^j
     |A_{r,i}A_{l,j}|\cdot \left|\obra{Q_{T_r}^{\otimes n}} R(K)\oket{Q_{T_l}^{\otimes n}}\right|.
  \end{align}
	If for $\obra{Q_{T_{r}}}R(K)\oket{Q_{T_{l}}}$ one of the stochastic Lagrangian sub-spaces 
  does not correspond to a permutation, Lemma~\ref{lemma:overlapnonclifford} introduces a 
  factor of $\eta_{K,t}$.
  \newtext{
	If both correspond to a permutation, we redefine the factors in a way that leads to simpler expressions in the calculations used below.
  Namely, in this case we redefine $A_{r,i}$ and $A_{l,j}$ by multiplying it with $2$.
  This is compensated by introducing a 
  factor of $\frac14$ and letting 
  \begin{equation}
	   \bar{\eta}_{K,t}:=\max\left\{\frac14,\eta_{K,t}\right\}.
	\end{equation}
We can do this as $i$ and $j$ do not correspond to permutations and hence $A_{r,j}$ and $A_{lj}$ are exponentially suppressed, which remains true after rescaling by $2$.}   In this case, moreover, $r<t!+1\leq i $ and $l< t!+1\leq j$, so the factor
  $|A_{r,i}A_{l,j}|$ will be exponentially suppressed according to~\eqref{eq:gramschmidtsecond} 
	and so this redefinition will not affect the asymptotic scaling in $n$.

  We provide two bounds for $|\obra{E_i}R(K)\oket{E_j}|$ that will be used later on.
  \newtext{We will use repeatedly that the diamond norm is multiplicative under the tensor product of superoperators~\cite[Thm.~3.49]{watrous2018theory}.}
	First, using~\eqref{eq:gramschmidtfirst},~\eqref{eq:gramschmidtthird}
  and~\eqref{eq:overlapsbound}, we obtain
  \begin{align}
	  & |\obra{E_i}R(K)\oket{E_j}|\leq \sum_{r=1}^i\sum_{l=1}^j
    |A_{r,i}A_{l,j}|\cdot \left|\obra{Q_{T_r}^{\otimes n}} R(K)\oket{Q_{T_l}^{\otimes n}}\right|\\
	  &\leq
    \bar{\eta}_{K,t}(1+2^{8t^2-n}) \sum_{r=1}^i\sum_{l=1}^j 
    2^{24t^3-n|\dim N_r-\dim N_i|-n|\dim N_l-\dim N_j|-(n-1)|\dim N_l-\dim N_r|}\label{eq:felipe q1}\\
	  &\leq 
    \bar{\eta}_{K,t}(1+2^{8t^2-n})|\Sigma_{t,t}|^{2} 2^{25t^3-n|\dim N_j-\dim N_i|}\label{eq:telescope}\\
	  &\leq
    \bar{\eta}_{K,t} (1+2^{8t^2-n}) 2^{31t^3-n|\dim N_j-\dim N_i|}
    \label{eq:boundonproductA},
	\end{align}
  where we have used $2^{|\dim N_l-\dim N_r|}\leq 2^t$\newtext{$\leq 2^{t^3}$, and the fact that for the rescaled $A_{r,i}$, the inequality~\eqref{eq:gramschmidtfirst} implies
  \begin{align*}
    A_{r,i}\leq 2^{11t^3 - |\dim  N_r-\dim N_j|+1}\leq2^{12t^3 - |\dim  N_r-\dim N_j|}
  \end{align*}
  for all $r,i$.}
  \newtext{
  Moreover, we have used the triangle inequality, 
  \begin{align}
    &|\dim N_r-\dim N_i|+|-\dim N_l+\dim N_j|+|\dim N_l-\dim N_r|\\
    \geq 
    &|\dim N_r-\dim N_i-\dim N_l+\dim N_j+\dim N_l-\dim N_r|\nonumber\\
    =
    &|\dim N_j - \dim N_i|,\nonumber
  \end{align}
  in the inequality~\eqref{eq:telescope}. 
  }
  The second bound follows from equations~\eqref{eq:gramschmidtsecond} 
  and~\eqref{eq:gramschmidtthird}, and we consider two cases. If $i\neq j$, then
	\begin{align}
	|\obra{E_i}R(K)\oket{ E_j}|&\leq \sum_{r=1}^i\sum_{l=1}^j |A_{r,i}A_{l,j}|\cdot|\obra{Q_{T_r}^{\otimes n}}R(K)\oket{Q_{T_l}^{\otimes n}}|\nonumber\\
	&\leq 
  \bar{\eta}_{K,t}(1+2^{8t^2-n})|\Sigma_{t,t}|^2 2^{19t^2-n}\nonumber\\
	&\leq \bar{\eta}_{K,t}(1+2^{8t^2-n})2^{25t^2-n}.\label{eq:boundonproductB}
	\end{align}
	Otherwise,
	\begin{align}
|\obra{E_i}R(K)\oket{ E_i}|\leq& \sum_{r=1}^i\sum_{l=1}^i |A_{r,i}A_{l,i}|\cdot|\obra{Q_{T_r}^{\otimes n}}R(K)\oket{Q_{T_l}^{\otimes n}}|\\
	\leq& 
  \bar{\eta}_{K,t}\left(
  |A_{i,i}|^2+(i^2-1) 2^{12t^2-n}
  \right)\label{eq:separate and bound}\\
	\leq& 
  \bar{\eta}_{K,t}\left(
  (1+2^{8t^2-n})^2+(1+2^{8t^2-n})2^{16t^2-n}
  \right)\label{eq:bound i}\\
	\leq& \bar{\eta}_{K,t}(1+2^{16t^2-n})^3.\label{eq:+ to x}
	\end{align}
  \newtext{
  In inequality~\eqref{eq:separate and bound}, we have bounded the term $r=l=i$ using~\eqref{eq:gramschmidtthird}, and each of the other terms using~\eqref{eq:gramschmidtsecond}. 
  Moreover, in the inequalities~\eqref{eq:bound i} and~\eqref{eq:+ to x} we use that $i\leq|\Sigma_{t,t}|$, and
  \begin{align*}
    1+2^{8t^2-n}\leq(1+2^{8t^2-n})^2\leq(1+2^{16t^2-n})^2.
  \end{align*}
  }
	Lastly, we obtain from~\eqref{eq:gramschmidtfirst} and~\eqref{eq:diamondbound}
	\begin{align}
	   \|\oketbra{E_i}{E_j}\|_{\diamond}
     &\leq 
     \sum_{r=1}^i\sum_{l=1}^j|A_{r,i}A_{l,j}|\cdot
     \left\|\oketbra{Q_{T_r}^{\otimes n}}{ Q_{T_l}^{\otimes n}}\right\|_{\diamond}\\
	   &\leq 
     |\Sigma_{t,t}|^2 2^{24t^3-n|\dim N_r-\dim N_i|-n|\dim N_l-\dim N_j|+n(\dim N_l-\dim N_r)}\\
     &\leq 
     2^{30t^3+n(\dim N_j-\dim N_i)}.\label{eq:recapdiamondbound}
	\end{align}
	
  We now start piecing these expressions together to bound~\eqref{eq:bigsum}.
  Equations~\eqref{eq:recapdiamondbound} and~\eqref{eq:boundinversenorm} give
	\begin{multline}
	\|[P_{\Cl}\mathrm{R}(K)]^k-P_{\mathrm{H}}\|_{\diamond}\leq\\ \left(1+2^{32t^2-2n}\right)^k\sum_{j_1,\dots, j_k=t!+1}^{|\Sigma_{t,t}|}2^{30t^3+n(\dim N_{j_k}-\dim N_{j_1})}\prod_{r=1}^{k-1}|\obra{ E_{j_r}}R(K)\oket{E_{j_{r+1}}}|.
  \label{eq:first step main bound}
	\end{multline}
  To bound~\eqref{eq:first step main bound}, we will bunch together the contribution of all
  terms whose sequence $\{j_1, \dots, j_k\}$ contains $l$ changes. Moreover, we will treat
  differently the cases $l\leq\lfloor t/2\rfloor$ and $l>\lfloor t/2\rfloor$.
	In the former case, we use~\eqref{eq:boundonproductA} to get
	\begin{equation}
	   \prod_{r=1}^{k-1}|\obra{E_{j_r}}R(K)\oket{E_{j_{r+1}}}|%
     \leq%
     \bar{\eta}_{K,t}^{k-1}(1+2^{16t^2-n})^{3(k-1)}2^{ l 31t^3-n|\dim N_{j_k}-\dim N_{j_1}|}.
	\end{equation}
  In this case, the factor of $2^{n(\dim N_{j_k}-\dim N_{j_1})}$ coming from~\eqref{eq:recapdiamondbound}
  is cancelled by the last factor of $2^{-n|\dim N_{j_k}-\dim N_{j_1}|}$.

	In the latter case, we turn to~\eqref{eq:boundonproductB} instead to obtain
  \begin{align*}
    \prod_{r=1}^{k-1}|\obra{E_{j_r}}R(K)\oket{E_{j_{r+1}}}|%
    \leq%
    \bar{\eta}_{K,t}^{k-1}(1+2^{16t^2-n})^{3(k-1)} 2^{l 25 t^2 - ln}.
  \end{align*}
  Here, the exponential factor coming from~\eqref{eq:recapdiamondbound} is cancelled by 
  $2^{-ln}$ since $\dim N_{j_k}-\dim N_{j_1} \leq \lfloor t/2\rfloor$.
	Counting the instances of sequences with $l$ changes, we may put these considerations
  together to bound
	\begin{align*}
	\|[P_{\Cl}\mathrm{R}(K)]^k-P_{\mathrm{H}}\|_{\diamond}
	\leq& 
  \left(1+2^{32t^2-2n}\right)^k\left(1+2^{16t^2-n}\right)^{3(k-1)}\bar{\eta}_{K,t}^{k-1}
  \Bigg{[}\sum_{l=0}^{\lfloor\frac{t}{2}\rfloor} {k\choose l}|\Sigma_{t,t}|^{l+1}2^{l 31t^3}\\
	&+\sum_{l=\lfloor \frac{t}{2}\rfloor+1}^{k}{k\choose l}|\Sigma_{t,t}|^{l+1}2^{(l-\lfloor \frac{t}{2}\rfloor)(25t^2-n)}2^{\lfloor\frac{t}{2}\rfloor 25t^2}\Bigg]\\
	\leq& 
  \left(1+2^{32t^2-2n}\right)^{4k}\bar{\eta}_{K,t}^{k-1}
  \Bigg[\frac{t}{2}{k\choose\lfloor\frac{t}{2}\rfloor}|\Sigma_{t,t}|^{\lfloor \frac{t}{2}\rfloor+1}2^{\lfloor \frac{t}{2}\rfloor 31t^3}
	\\
	&+\sum_{l=1}^{k-\lfloor \frac{t}{2}\rfloor}{k\choose l+\lfloor \frac{t}{2}\rfloor}|\Sigma_{t,t}|^{l+1+\lfloor \frac{t}{2}\rfloor}2^{l(25t^2-n)}2^{13t^3}\Bigg]\\
	\stackrel{\ddagger}{\leq}& 
  \left(1+2^{32t^2-2n}\right)^{4k}\bar{\eta}_{K,t}^{k-1}\Bigg[ 2^{32t^4+t\log(k)}\\
	&+k^{\lfloor \frac{t}{2}\rfloor}|\Sigma_{t,t}|^{1+\lfloor \frac{t}{2}\rfloor}2^{13t^3}\sum_{l=0}^{k}{k\choose l}|\Sigma_{t,t}|^{l}2^{l(25t^2-n)}\Bigg]\\
	\leq & 
  \left(1+2^{32t^2-2n}\right)^{4k}\bar{\eta}_{K,t}^{k-1}\Bigg[ 2^{32t^4+t\log(k)}+2^{18t^3+\log(k)t}\left(1+2^{28t^2-n}\right)^{k}\bigg]\\
  \newtext{ \leq } &
  \newtext{
  \left(1+2^{32t^2-2n}\right)^{4k}\left(1+2^{28t^2-n}\right)^{k} 2^{t\log(k)}\bar{\eta}_{K,t}^{k-1}\Bigg[ 2^{32t^4}+2^{18t^3}\bigg],
  }
	\end{align*}
	where we have used in $\ddagger$ that
	\begin{align*}
	{k\choose l+\lfloor \frac{t}{2}\rfloor}&=\frac{(k)!}{(k-l-\lfloor \frac{t}{2}\rfloor)!(l+\lfloor \frac{t}{2}\rfloor)!}\\
	&\leq (k-l-\big\lfloor \frac{t}{2}\big\rfloor+1)\dots (k-l)\frac{k!}{(k-l)!l!}\\
	&\leq k^{\lfloor \frac{t}{2}\rfloor}{k\choose l}.
	\end{align*}
	\newtext{Finally, noting that $2^{32t^4}+2^{18t^3}\leq 2^{33t^4}$  for all positive integers $t$,} we obtain the bound
	\begin{equation}
	\| \MO_t(\sigma_{k}) - P_{\mathrm{H}} \|_{\diamond} \leq  
  2^{33t^4+t\log(k)}\left(1+2^{32t^2-n}\right)^{5k}\bar{\eta}_{K,t}^{k-1}\label{eq:finalcombinedbound},
	\end{equation}
	where $\bar{\eta}_{K,t}$ is bounded by Lemma~\ref{lemma:overlapnonclifford}.
	Taking the logarithm and using the inequality $\log(1+x)\leq x$ repeatedly, this implies Theorem~\ref{thm:Tcountdesign}.
\end{proof}
With the above bound, we can also prove Corollary~\ref{cor:dropindependence}.
\begin{proof}[Proof of Corollary~\ref{cor:dropindependence}]
 Consider the self-adjoint superoperator $A:=P_{\Cl}R(K)P_{\Cl}$.
 As $P_{\Cl}$ is a projector, we have with Eq.~\eqref{eq:Aidentity}
 \begin{equation}
  (A-P_\Haar)^k = A^k - P_\Haar = \left[P_{\Cl} R(K)\right]^k - P_\Haar = \MO_t(\sigma_{k}) - P_{\mathrm{H}}.
 \end{equation}
 Using norm inequality between operator and diamond norm Eq.~\eqref{eq:dnorm-bound} and the previous result Eq.~\eqref{eq:finalcombinedbound}, we find
 \begin{multline}
 ||A-P_H||^{k}_{\infty}=||(A-P_H)^k||_{\infty}\leq 2^{nt/2}\| \MO_t(\sigma_{k}) - P_{\mathrm{H}} \|_{\diamond}\\
  \leq  
 2^{33t^4+t\log(k)+nt/2}\left(1+2^{32t^2-n}\right)^{5k}\bar{\eta}_{K,t}^{k-1}.
 \end{multline}
 \newtext{
 Taking the $k$-th square root of the expresion above, we obtain a sequence of infinitely many bounds for $||A-P_H||_{\infty}$ which converges as $k\to\infty$.
 That limit gives
 }
 \begin{equation}\label{eq:finalspectralbound}
 ||A-P_H||_{\infty}\leq \left(1+2^{32t^2-n}\right)^{5}\bar{\eta}_{K,t}.
 \end{equation}
 Combined with Ref.~\cite[Lem.~4]{brandao_local_2016}, Eq.~\eqref{eq:finalspectralbound} implies the result.
\end{proof}
The bound in Eq.~\eqref{eq:finalcombinedbound} also suffices to prove Proposition~\ref{prop:Haarinverleavedversion}:
\begin{proof}[Proof of Proposition~\ref{prop:Haarinverleavedversion}]
 The proof follows exactly as the proof of Theorem~\ref{thm:Tcountdesign}, but with the factor $7/8$ instead of $\bar{\eta}_{K,t}$ (compare Lemma~\ref{lemma:haarsymmetrization}).
Using $\log_2(7/8)\leq -0.19$ the result can be checked.
\end{proof}

\section{Convergence to higher moments of the Clifford group}\label{section:clifford-design}

In this section, we aim to prove:

\clifforddesign*

The proof of Theorem~\ref{theorem:clifford-design} follows a well-established strategy~\cite{brandao_local_2016,brown_convergence_2010} in a sequence of lemmas. For the sake of readibility, the proofs of these lemmas have been moved to Section~\ref{section:clifford-lemmas}. Given a measure $\nu$ on the Clifford group $\Cl(n)$, recall that its $t$-th moment operator was defined as
\begin{equation*}
 \MO_t(\nu) := \int_{\rm Cl(2^n)} \Ad_U^{\otimes t}\mathrm{d}\nu(U).
\end{equation*}
The idea of the proof is that if $\MO_t(\nu)$ is close to the moment operator $\MO_t(\mu_{\Cl})\equiv P_{\Cl}$ of the uniform (Haar) measure $\mu_{\rm Cl}$ on the Clifford group, $\nu$ is an approximate Clifford design. However, we have seen that there are different notions of closeness. We define its deviation in (superoperator) \emph{spectral norm} as
\begin{equation*}
g_{\rm Cl}(\nu,t) := \left\| \MO_t(\nu) - \MO_t(\mu_{\Cl}) \right\|_{\infty}.
\end{equation*}
Then, we prove the following lemma in Section~\ref{section:clifford-lemmas}.

\begin{restatable}[Relative $\varepsilon 2^{2tn}$-approximate Clifford $t$-designs]{lemma}{reductionspectral}
 \label{lemma:reduction-spectral}
 Suppose that $0 \leq \varepsilon < 1$ is such that $g_{\rm Cl}(\nu,t)\leq\varepsilon$. Then, $\nu$ is a relative $\varepsilon 2^{2tn}$-approximate Clifford $t$-design.
\end{restatable}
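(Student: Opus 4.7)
The plan is to convert the superoperator spectral-norm estimate $\|X\|_\infty \leq \varepsilon$, where $X := \Delta_t(\nu) - P$ and $P := \Delta_t(\mu_{\rm Cl})$, into the relative ordering condition by absorbing the dimension factor $(\dim \mathcal{H}^{\otimes t})^2 = 2^{2tn}$. First, I would apply the inequality $\|\phi\|_\diamond \leq (\dim \mathcal{H})^2 \|\phi\|_\infty$ from Sec.~\ref{section:norms} with $\dim\mathcal{H}^{\otimes t} = 2^{nt}$, giving $\|X\|_\diamond \leq 2^{2tn}\varepsilon$.

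Next, I would leverage the Haar-invariance identity $\mu_{\rm Cl}*\nu = \nu*\mu_{\rm Cl} = \mu_{\rm Cl}$, valid because $\mu_{\rm Cl}$ is the two-sided invariant Haar measure on the finite Clifford group. Combined with the multiplicativity $\Delta_t(\mu_1*\mu_2) = \Delta_t(\mu_1)\Delta_t(\mu_2)$, this yields $P\Delta_t(\nu) = \Delta_t(\nu)P = P$, hence $PX = XP = 0$. Since $P$ is a Hilbert--Schmidt self-adjoint projection onto the Clifford commutant described in Theorem~\ref{theorem:commutant}, this puts $\Delta_t(\nu)$ into block-diagonal form $\mathrm{diag}(I, A)$ with $\|A\|_\infty\leq\varepsilon$ relative to the decomposition $L(\mathcal{H}^{\otimes t}) = \mathrm{image}(P)\oplus\ker(P)$.

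To produce the relative ordering, I would pass to Choi matrices and establish the equivalent statement $\pm J(X) \preccurlyeq 2^{2tn}\varepsilon\cdot J(P)$ as Hermitian operators. Both $J(\Delta_t(\nu))$ and $J(P)$ are convex combinations of the rank-one matrices $J(\Ad_U^{\otimes t})$ for $U\in\mathrm{Cl}(2^n)$, so $\mathrm{range}(J(X))\subseteq\mathrm{range}(J(P))$. A Frobenius-norm estimate gives $\|J(X)\|_\infty \leq \|X\|_2 \leq 2^{nt}\|X\|_\infty \leq 2^{nt}\varepsilon$, while the minimum positive eigenvalue of $J(P)$ satisfies $\lambda_{\min,+}(J(P)) \geq 2^{-nt}$, which follows from the commutant description in Theorem~\ref{theorem:commutant} combined with the norms recorded in Lemma~\ref{lemma:rT-norms}. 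Chaining these gives $\pm J(X) \preccurlyeq 2^{nt}\varepsilon\cdot \Pi_{\mathrm{range}(J(P))} \preccurlyeq 2^{2tn}\varepsilon\cdot J(P)$, establishing the required condition.

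The main obstacle is the spectral-gap lower bound $\lambda_{\min,+}(J(P)) \geq 2^{-nt}$: bounding the Choi-matrix spectrum of $P$ from below requires exploiting the explicit (non-orthogonal) basis $\{r(T)^{\otimes n}\}_{T\in\Sigma_{t,t}}$ of the commutant together with the Gram-matrix control developed in Lemma~\ref{lemma:gram-schmidt}, and constitutes the technically delicate step of the argument.
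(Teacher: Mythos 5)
Your overall architecture is the same as the paper's (which follows Brand\~ao--Harrow--Horodecki, Lemmas~4 and~30): pass to Choi matrices, use that $\mathrm{range}(J(\Delta_t(\nu)))\subseteq\mathrm{range}(J(P_{\Cl}))$ because both are averages of the rank-one Choi matrices of $\Ad_U^{\otimes t}$ over (subsets of) the Clifford group, bound $\snorm{J(X)}$ via the Frobenius norm, and divide by the smallest nonzero eigenvalue of $J(P_{\Cl})$. The arithmetic of your chain is consistent (your $2^{nt}\varepsilon\cdot 2^{nt}=2^{2tn}\varepsilon$ matches the paper's normalized version, where $\snormn{\rho_\nu-\rho_{\Cl}}\leq\varepsilon$ and $\lambda_{\min,+}(\rho_{\Cl})\geq 2^{-2nt}$).

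The genuine gap is the eigenvalue bound itself, which you correctly identify as the delicate step but then attribute to the wrong tools. Theorem~\ref{theorem:commutant} describes the \emph{range} of $P_{\Cl}$ as a projection on operator space, and Lemma~\ref{lemma:rT-norms} and Lemma~\ref{lemma:gram-schmidt} control the individual operators $r(T)^{\otimes n}$ and their Gram matrix; none of these determines the spectrum of the Choi matrix $J(P_{\Cl})=\sum_j\obraket{E_j}{E_j}^{-1}E_j\otimes\overline{E_j}$, whose summands are not even positive semidefinite, and extracting $\lambda_{\min,+}$ from this non-orthogonal frame is hopeless. What the paper actually does is decompose $(\C^{2^n})^{\otimes t}\cong\bigoplus_\gamma C_\gamma\otimes L_\gamma$ into Clifford irreps with multiplicity spaces, apply Schur's lemma to identify the nonzero eigenvalues of $\rho_{\Cl}$ as $\dim L_\gamma/(2^{nt}\dim C_\gamma)$, and then bound these below by $2^{-2nt}$ via a Schur--Weyl comparison (embedding $C_{\gamma}$ into a Weyl module $U_\lambda$ of dimension at most $2^{nt}$ and noting $\dim L_\gamma\geq\dim S_\lambda\geq 1$). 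Without this representation-theoretic identification of the Choi spectrum, your final chain of operator inequalities cannot be closed. A minor additional point: the opening diamond-norm estimate $\dnormn{X}\leq 2^{2tn}\varepsilon$ is a red herring, since the relative-design condition is a complete-positivity statement that neither follows from nor implies a diamond-norm bound; your argument correctly lives entirely at the level of Choi matrices and that first paragraph can be dropped.
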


Recall that we have defined the measure $\sigma_G$ on the Clifford group $\Cl(n)$ in Def.~\ref{def:random-cliff-circuit} by randomly drawing from a 2-local Clifford gate set $G$ and applying it to a random qubit $i$, or to a pair of adjacent qubits $(i,i+1)$, respectively. For this measure, we show that it fulfills the assumptions of Lemma~\ref{lemma:reduction-spectral}:

\begin{proposition}[Clifford expander bound]
 \label{prop:clifford-expander-bound}
 Let $\sigma_G$ be as in Def.~\ref{def:random-cliff-circuit} and $n \geq 12t$. Then, $g_{\rm Cl}(\sigma_G,t) \leq 1 - c(G) n^{-1} \log^2(t) t^{-8}$ for some constant $c(G)>0$.
\end{proposition}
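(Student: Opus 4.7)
The plan is to follow the standard local-to-global spectral gap strategy, as developed in Brandão--Harrow--Horodecki~\cite{brandao_local_2016} and in earlier work on frustration-free local Hamiltonians~\cite{brown_convergence_2010}, adapted to the Clifford setting where the invariant subspace is larger than the symmetric one. I would proceed in three steps.

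\textbf{Step 1 (Local decomposition).} Because $\sigma_G$ first picks a site $i\in\{1,\dots,n\}$ uniformly and then a gate from $G$ uniformly, the moment operator factorizes as an average of 2-local terms,
\begin{equation*}
 \Delta_t(\sigma_G) = \frac{1}{n}\sum_{i=1}^n h_i,
 \qquad
 h_i := \frac{1}{|G|}\sum_{g\in G}\bigl(\Ad_g\bigr)^{\otimes t}_{(i,i+1)}\otimes \id_{\text{rest}}.
\end{equation*}
Because $G$ is closed under inversion, each $h_i$ is self-adjoint with respect to the Hilbert--Schmidt inner product and satisfies $0\preccurlyeq h_i\preccurlyeq \id$. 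Its fixed subspace is the 2-local Clifford commutant at sites $(i,i+1)$ tensored with $L(\mathcal H^{\otimes t})$ on the remaining sites, and by Theorem~\ref{theorem:commutant} (applied to adjacent pairs, using the hypothesis $n\geq 12t\geq t-1$) the intersection of these local fixed subspaces across $i=1,\dots,n$ is exactly the range of $P_{\Cl}=\Delta_t(\mu_{\Cl})$.

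\textbf{Step 2 (Global gap from local gap).} I would apply a Nachtergaele martingale--type inequality (or equivalently the detectability lemma) to deduce
\begin{equation*}
 \bigl\|\Delta_t(\sigma_G) - P_{\Cl}\bigr\|_\infty
 \leq 1 - \frac{\gamma_{\rm loc}(t)}{n},
\end{equation*}
where $\gamma_{\rm loc}(t)$ denotes the spectral gap of the 2-qubit operator
$h^{(2)} := \tfrac{1}{|G|}\sum_{g\in G}(\Ad_g)^{\otimes t}$
above its fixed subspace, that is, the gap between $1$ and the operator norm of $h^{(2)}-P^{(2)}_{\Cl}$ on $L((\C^2)^{\otimes 2t})$. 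This is the standard trick that trades a factor of $1/n$ against a passage to a system-size independent local problem; the assumption $n\geq 12t$ is exactly what allows the local invariant subspaces to assemble into the global commutant so that the martingale argument applies.

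\textbf{Step 3 (Local gap estimate).} The operator $h^{(2)}$ is the $t$-th moment operator of the symmetric random walk on the finite group $\Cl(4)$ with steps drawn uniformly from the generating set $G$. Since $G$ generates $\Cl(4)$, this walk is ergodic, and what needs to be quantified is the restricted spectral gap of $h^{(2)}$ on the orthogonal complement of its fixed subspace inside $\tau^{(t)}|_{\Cl(4)}$. To get the specific scaling $\gamma_{\rm loc}(t)\geq c(G)\log^{2}(t)\,t^{-8}$ I would decompose $\tau^{(t)}|_{\Cl(4)}$ into its isotypic components, note that the maximal irreducible dimension appearing is polynomial in $t$ (controlled by $|\Sigma_{t,t}|$ via Theorem~\ref{theorem:commutant} and its refinement in Theorem~\ref{theorem:r decomposition}), and then invoke the restricted spectral gap bound of Varjú (Theorem~\ref{theorem:varju}), whose $\log^{-2}$ penalty in the representation dimension is what produces the $\log^2(t)$ enhancement in the numerator.

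\textbf{Main obstacle.} Step 3 is the delicate one: the quantitative $t^{-8}$ exponent is not forced by abstract considerations but has to be extracted from a careful dimensional accounting of the isotypic components of $\tau^{(t)}|_{\Cl(4)}$ and the way Varjú's restricted-gap bound aggregates across these components. Step 2 is essentially routine once the local fixed subspaces are identified correctly, and Step 1 is immediate. The condition $n\geq 12t$ is introduced in Step 2 to guarantee that the local commutants combine to the full global commutant and that no "boundary corrections" from Theorem~\ref{theorem:commutant} spoil the telescoping used in the martingale bound.
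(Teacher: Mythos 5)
Your overall architecture (local decomposition, martingale method, local gap estimate) matches the paper's, but two of your three steps contain genuine gaps. First, Step 2 does not reduce to a \emph{2-qubit} local gap. The Nachtergaele lemma (Lemma~\ref{lemma:nachtergaele}) requires, besides frustration-freeness, the overlap condition $\snorm{G_{[q-l+2,q+1]}(G_{[1,q]}-G_{[1,q+1]})}\leq \varepsilon_l < 1/\sqrt{l}$ between ground-space projectors of overlapping blocks of size $l$. The ground space of a block of $k$ qubits is spanned by the non-orthogonal vectors $Q_T^{\otimes k}$, whose pairwise overlaps scale like $t\,2^{t-k}$ (Lemma~\ref{lemma:overlaps}); these are only small once $k\gtrsim t$. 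This is exactly why the paper must take block size $l=12t$ and lands on $\Delta(H_{n,t})\geq \Delta(H_{12t,t})/(48t)$ — the ``local'' problem lives on $12t$ qubits, not on $2$. The condition $n\geq 12t$ is the block-size requirement, not (as you say) a condition for the local commutants to assemble into the global one. Note that if your version of Step 2 were valid with $l=2$, the $2$-local gap would be a $t$-independent constant (it is lower-bounded by the regular-representation gap of the fixed finite group $\Cl(4)$), and you would obtain an $\Omega(1/n)$ global gap with no $t$-dependence at all — far stronger than the proposition, which is a signal that the step cannot be routine. The overlap/Gram-matrix analysis you dismiss as standard is the core of the proof (Lemmas~\ref{lemma:overlaps}, \ref{lemma:Snormbound} and \ref{lemma:spectralgapbound}).

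Second, Step 3 invokes the wrong tool. Varj\'u's theorem (Theorem~\ref{theorem:varju}) concerns restricted spectral gaps of measures on the compact Lie group $\U(d)$, decomposed into irreducible representations labelled by highest weights; the quantity to bound here is the gap of a random walk on the \emph{finite} group $\Cl(2^{12t})$ converging to its uniform measure. The paper instead uses the Diaconis--Saloff-Coste comparison bound $\lambda_2(T_\sigma)\leq 1-\eta/d^2$ with $\eta=1/(|G|n)$ and Cayley-graph diameter $d=O(n^3/\log n)$ from Aaronson--Gottesman; substituting $n=12t$ gives the $\log^2(t)\,t^{-7}$ local gap, and the extra $1/(48t)$ from the martingale step yields $t^{-8}\log^2(t)$. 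So the $\log^2(t)$ enhancement comes from the diameter of the Clifford Cayley graph, not from Varj\'u's $\log^{-2}$ penalty, and your proposed ``dimensional accounting of isotypic components'' (which in any case conflates $|\Sigma_{t,t}|$, the number of commutant basis elements, with irrep dimensions) is not the route that produces the claimed exponent.
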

We will prove Proposition~\ref{prop:clifford-expander-bound} in the end of this section. From this, Theorem~\ref{theorem:clifford-design} follows as a direct consequence:

\begin{proof}[Proof of Theorem~\ref{theorem:clifford-design}]
First, note that $g_{\Cl}(\nu^{*k},t)=g_{\Cl}(\nu,t)^k$ for all probability measures $\nu$ on the Clifford group.
This can be easily verified using the observation
\begin{equation}
\MO_t(\mu_{\Cl}) \MO_t(\nu) = \MO_t(\nu) \MO_t(\mu_{\Cl}) = \MO_t(\mu_{\Cl}).
\end{equation}
Hence, combining the bound given by Proposition~\ref{prop:clifford-expander-bound} and Lemma~\ref{lemma:reduction-spectral}, we find that the $k$-step random walk $\sigma_G^{*k}$ is a $\varepsilon$-approximate Clifford $t$-design, if we choose $k = O\left(n \log^{-2}(t) t^8\left( 2nt + \log (1/\varepsilon) \right) \right)$.
\end{proof}

For the sake of readibility, let us from now on drop the dependence on $G$ and write $\sigma\equiv \sigma_G$. In order to prove Proposition~\ref{prop:clifford-expander-bound}, we use a reformulation of $g(\sigma,t)$ based on the following observation. Since $G$ is closed under taking inverses, the moment operator $\MO_t(\sigma)$ is self-adjoint with respect to the Hilbert-Schmidt inner product. Due to $\sigma$ being a probability measure, its largest eigenvalue is 1 with eigenspace corresponding to the operator subspace which is fixed by the adjoint action $\Ad(g^{\otimes t})$ of all generators \cite{brown_convergence_2010}. Equivalently, this is the subspace of operators which commute with any generator $g^{\otimes t}$. However, any operator commuting with all generators also commutes with every element in the Clifford group $\Cl(n)$ and vice versa. Hence, this subspace is nothing but the Clifford commutant $\Cl(n)'$ with projector $P_{\Cl}
\newtext{:=}
\MO_t(\mu_{\Cl})$. 
Thus, the spectral decomposition is
\begin{equation}
\label{eq:random-walk-decomp}
 \MO_t(\sigma) = P_{\Cl} + \sum_{r \geq 2} \lambda_r(\MO_t(\sigma)) \Pi_r,
\end{equation}
where $\lambda_r(X)$ denotes the $r$-th largest eigenvalue of a normal operator $X$.
\newtext{
Hence, we find
\begin{equation}
	g(\sigma,t)= \left\| \MO_t(\sigma) - P_{\Cl} \right\|_{\infty} = \lambda_*\left(\MO_t(\sigma)\right) := \max\left\{ \lambda_2\left(\MO_t(\sigma)\right), |\lambda_\mathrm{min}\left(\MO_t(\sigma)\right)| \right\},
\end{equation}
where $\lambda_\mathrm{min}\left(\MO_t(\sigma)\right)$	is the smallest eigenvalues of $\MO_t(\sigma)$.
We continue by arguing that it sufficient to consider the case when $\lambda_*\left(\MO_t(\sigma)\right) = \lambda_2\left(\MO_t(\sigma)\right) > 0$.

To this end, consider the linear operator $T_{\sigma}:L^2(\Cl(n))\to L^2(\Cl(n))$ given as
\begin{equation}
\label{eq:hecke-operator}
T_{\sigma}f(g):=\int f(h^{-1}g)\mathrm{d}\sigma(h).
\end{equation}
This is the (Hermitian) averaging operator with respect to $\sigma$ on the group algebra $L^2(\Cl(n))$.
The largest eigenvalue of
$T_\sigma$ is $\lambda_1(T_\sigma)=1$ and its eigenspace corresponds to the trivial representation.
By Ref.~\cite[Lem.~1]{diaconis_random_1993}, its smallest eigenvalue is lower bounded by
\begin{equation}
 \lambda_\mathrm{min}(T_\sigma) \geq -1 + 2 \sigma(\one) = -1 + \frac{2}{|G|},
\end{equation}
where $\sigma(\one) \equiv \sigma(\{\one\})=1/|G|$ is the probability of drawing the identity.
According to the Peter-Weyl theorem, the spectrum of $\MO_t(\sigma)$ is exactly the spectrum of the restriction of $T_\sigma$
to the irreducible representations  that appear in the representation $U\mapsto \Ad_U^{\otimes t}$.
In particular, we find $\lambda_\mathrm{min}(\MO_t(\sigma)) \geq -1 + \frac{2}{|G|}$.
Let us assume that $\lambda_*\left(\MO_t(\sigma)\right) = |\lambda_\mathrm{min}\left(\MO_t(\sigma)\right)|$.
Then, $g(\sigma,t) \leq 1 - 2/|G| < 1$ and hence we can argue as in the proof of Thm.~\ref{theorem:clifford-design} to show that local random Clifford circuits form relative $\varepsilon$-approximate Clifford $t$-designs in depth $O(2nt + \log(1/\varepsilon))$.

Therefore, we consider the more relevant case when $\lambda_*\left(\MO_t(\sigma)\right) = \lambda_2\left(\MO_t(\sigma)\right) > 0$ in the following, this is
\begin{equation}
\label{eq:reduction-eigenvalue}
	g(\sigma,t)= \left\| \MO_t(\sigma) - P_{\Cl} \right\|_{\infty} = \lambda_2\left(\MO_t(\sigma)\right).
\end{equation}
Since $\MO_t(\sigma)$ is self-adjoint, we can interpret it as an Hamiltonian on the Hilbert space $L( (\C^2)^{\otimes nt})$.}
In this light, it will turn out to be useful to recast Eq.~\eqref{eq:reduction-eigenvalue} as the spectral gap of a suitable family of \emph{local Hamiltonians} with vanishing ground state energy:
\begin{equation}
\label{eq:reduction-Hamiltonian}
H_{n,t}:=n \left( \id - \MO_t(\sigma) \right) = \sum_{i=1}^n h_{i,i+1}, \quad \text{with}\quad h_{i,i+1}:=\frac{1}{|G|}\sum_{g\in G}\left(\id- \Ad(g_{i,i+1}^{\otimes t})\right).
\end{equation}
Let us summarize these findings in the following lemmas.

\begin{lemma}[Spectral gap]
\label{lemma:reduction-gap}
Let $\sigma$ be as in Def.~\ref{def:random-cliff-circuit} and $H_{n,t}$ the Hamiltonian from Eq.~\eqref{eq:reduction-Hamiltonian}. It holds that
	\begin{equation}
	g(\sigma,t)=1-\frac{\Delta(H_{n,t})}{n}.
	\end{equation}
\end{lemma}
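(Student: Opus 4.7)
The plan is to unpack definitions and simply transcribe the relation between the spectra of $\Delta_t(\sigma)$ and $H_{n,t}$. First I would write out $\Delta_t(\sigma)$ explicitly from Def.~\ref{def:random-cliff-circuit}: the measure $\sigma=\sigma_G$ draws a gate $g\in G$ uniformly and applies it to a uniformly random pair of adjacent qubits $(i,i+1)$, so
\begin{equation*}
\Delta_t(\sigma) \;=\; \frac{1}{n\,|G|}\sum_{i=1}^{n}\sum_{g\in G}\Ad\!\left(g_{i,i+1}\right)^{\otimes t}.
\end{equation*}
Subtracting from $\id$ and multiplying by $n$ immediately yields $n(\id-\Delta_t(\sigma))=\sum_{i=1}^{n} h_{i,i+1}$, which is precisely the definition of $H_{n,t}$ in Eq.~\eqref{eq:reduction-Hamiltonian}. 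So the operators $H_{n,t}$ and $\Delta_t(\sigma)$ share all eigenspaces and every eigenvalue $\lambda$ of $\Delta_t(\sigma)$ corresponds to an eigenvalue $n(1-\lambda)$ of $H_{n,t}$.

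The rest is bookkeeping. By the spectral decomposition Eq.~\eqref{eq:random-walk-decomp}, the largest eigenvalue of $\Delta_t(\sigma)$ equals $1$ with eigenspace $\Cl(n)'$, so the corresponding eigenvalue of $H_{n,t}$ is $0$; since $H_{n,t}\succeq 0$ (each $h_{i,i+1}$ is a positive sum $\frac{1}{|G|}\sum_g(\id-\Ad(g_{i,i+1}^{\otimes t}))$, whose summands all have spectrum in $[0,2]$... actually one only needs that $h_{i,i+1}\succeq0$ because each $\Ad(g_{i,i+1}^{\otimes t})$ has spectral radius $\le 1$ and is normal, which one can note in passing), the value $0$ is the ground state energy. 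Hence the first excited energy, namely $\Delta(H_{n,t})$, is given by $n\left(1-\lambda_2(\Delta_t(\sigma))\right)$. Combining this with the identification $g(\sigma,t)=\lambda_2(\Delta_t(\sigma))$ supplied by Eq.~\eqref{eq:reduction-eigenvalue} and rearranging yields $g(\sigma,t)=1-\Delta(H_{n,t})/n$, as claimed.

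The only real subtlety — the one that could be called a "main obstacle" if anything here deserves that name — is already handled by Eq.~\eqref{eq:reduction-eigenvalue}: one needs to know that the operator-norm distance $\|\Delta_t(\sigma)-P_{\Cl}\|_\infty$ is actually attained at the second-largest eigenvalue of $\Delta_t(\sigma)$, and not at a large-magnitude negative eigenvalue. Given that reformulation, Lemma~\ref{lemma:reduction-gap} itself is purely an algebraic rewriting of the affine relation $H_{n,t}=n(\id-\Delta_t(\sigma))$ in terms of spectra.
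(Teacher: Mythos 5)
Your proposal is correct and follows essentially the same route as the paper, which presents Lemma~\ref{lemma:reduction-gap} as a direct summary of the preceding discussion: the affine relation $H_{n,t}=n(\id-\Delta_t(\sigma))$ combined with the spectral decomposition~\eqref{eq:random-walk-decomp} and the identification $g(\sigma,t)=\lambda_2(\Delta_t(\sigma))$ from Eq.~\eqref{eq:reduction-eigenvalue}. Your remark that the only real content lies in Eq.~\eqref{eq:reduction-eigenvalue} (ruling out a large negative eigenvalue dominating the norm) is apt, and treating that as given is consistent with how the paper structures the argument.
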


\begin{lemma}[Ground spaces]\label{lemma:ground-space}
	The Hamiltonians $H_{n,t}$ are positive operators with ground state energy $0$.
	The ground space is given by the Clifford commutant
	\begin{equation}
	\Cl(n)'=\spann\left\{ r(T)^{\otimes n} \; \big| \; T \in \Sigma_{t,t} \right\},
	\end{equation}
	where $\Sigma_{t,t}$ is the set of stochastic Lagrangian sub-spaces of $ \mathbb{Z}_2^t\oplus \mathbb{Z}_2^t$.
\end{lemma}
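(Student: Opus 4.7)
The plan is to exploit that $H_{n,t}$ is \emph{frustration-free}: once each local summand $h_{i,i+1}$ is shown to be positive semi-definite, the zero-eigenspace of $H_{n,t}$ equals the intersection of those of the summands, which can then be identified with the Clifford commutant by using that $G$ generates $\Cl(4)$.

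First I would verify positivity. Since the Hilbert--Schmidt inner product is invariant under unitary conjugation, every superoperator $\Ad(g_{i,i+1}^{\otimes t})$ is unitary on $L(((\C^2)^{\otimes n})^{\otimes t})$, and a direct computation gives $\Ad(g)^{\dagger} = \Ad(g^{-1})$. The closure of $G$ under inversion therefore renders $M_{i,i+1} := |G|^{-1}\sum_{g\in G}\Ad(g_{i,i+1}^{\otimes t})$ self-adjoint, while the triangle inequality yields $\snorm{M_{i,i+1}}\leq 1$. Hence $h_{i,i+1} = \id - M_{i,i+1}\succcurlyeq 0$, and $H_{n,t}\succcurlyeq 0$ as a sum of positive operators.

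Next I would pin down the ground space. If $\Psi \in \Cl(n)'$, then $\Ad_U^{\otimes t}(\Psi) = \Psi$ for every $U \in \Cl(n)$, and in particular for every local generator $g_{i,i+1}$, so $h_{i,i+1}(\Psi) = 0$ on every edge and $H_{n,t}(\Psi) = 0$. This proves simultaneously that the ground state energy equals $0$ and that $\Cl(n)' \subseteq \ker H_{n,t}$. For the converse, suppose $H_{n,t}(\Psi) = 0$. Non-negativity of each summand forces $M_{i,i+1}(\Psi) = \Psi$ for every edge. Self-adjointness of $M_{i,i+1}$ then gives $\obraket{\Psi}{(\id - M_{i,i+1})\Psi} = |G|^{-1}\sum_g \Re\,\obraket{\Psi}{(\id - \Ad(g_{i,i+1}^{\otimes t}))\Psi} = 0$, with each summand non-negative since $|\obraket{\Psi}{\Ad(g_{i,i+1}^{\otimes t})\Psi}|\leq \TwoNorm{\Psi}^{2}$ by Cauchy--Schwarz. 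Hence every summand vanishes, and the equality case of Cauchy--Schwarz forces $\Ad(g_{i,i+1}^{\otimes t})(\Psi) = \Psi$ for every $g \in G$ and every edge $(i,i+1)$.

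Finally, I would invoke the standard fact that when $G$ generates $\Cl(4)$, its nearest-neighbor translates (with periodic boundary conditions) generate all of $\Cl(n)$. The preceding step then upgrades invariance under every local generator to $\Ad_U^{\otimes t}(\Psi) = \Psi$ for every $U\in\Cl(n)$, i.e.\ $\Psi\in\Cl(n)'$. Combining both inclusions gives $\ker H_{n,t} = \Cl(n)'$; applying Theorem~\ref{theorem:commutant}, whose hypothesis $n\geq t-1$ is implied by $n\geq 12t$, supplies the spanning set $\{r(T)^{\otimes n} : T \in \Sigma_{t,t}\}$. The only step that deserves careful phrasing is the reduction from $M_{i,i+1}(\Psi) = \Psi$ to invariance under each individual $\Ad(g_{i,i+1}^{\otimes t})$, which relies crucially on $G$ being closed under inversion so that $M_{i,i+1}$ is self-adjoint; no step is a substantive obstacle, so the lemma is essentially a bookkeeping consequence of frustration-freeness together with Theorem~\ref{theorem:commutant}.
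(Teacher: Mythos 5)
Your proposal is correct and follows essentially the same route as the paper, which establishes this lemma through the informal discussion preceding it: closure of $G$ under inverses makes the averaging operator self-adjoint with top eigenvalue $1$, its fixed space is the joint commutant of the local generators, and this equals $\Cl(n)'$ because the generators generate, with Theorem~\ref{theorem:commutant} supplying the spanning set $\{r(T)^{\otimes n}\}$ (under $n\geq t-1$, which the surrounding hypotheses guarantee). Your edge-by-edge frustration-freeness argument with the Cauchy--Schwarz equality case is simply a more explicit write-up of the same reasoning.
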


In the remainder of this section, we will prove the existence of a uniform lower bound on the spectral gap of $H_{n,t}$.
In combination with Lemma~\ref{lemma:reduction-gap} and Lemma~\ref{lemma:reduction-spectral} this will imply Theorem~\ref{theorem:clifford-design}.
While it is highly non-trivial to show spectral gaps in the thermodynamic limits, we can use the fact that $H_{n,t}$ is \emph{frustration-free} (compare Lemma~\ref{lemma:ground-space}).
This allows us to apply the powerful \emph{martingale method} pioneered by Nachtergaele~\cite{nachtergaele_gap_1994}.

\begin{restatable}[Lower bound to spectral gap]{lemma}{spectralgapbound}

\label{lemma:spectralgapbound}
Let the Hamiltonian $H_{n,t}$ be as in Eq.~\eqref{eq:reduction-Hamiltonian} and assume that  $n\geq 12t$. Then, $H_{n,t}$ has a spectral gap
satisfying
	\begin{equation}\label{eq:lowerboundgap}
	   \Delta(H_{n,t})\geq \frac{\Delta(H_{12t,t})}{48t}.
	\end{equation}
\end{restatable}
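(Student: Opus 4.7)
The plan is to apply the martingale method of Nachtergaele~\cite{nachtergaele_gap_1994} for translation-invariant, frustration-free nearest-neighbor Hamiltonians on a ring. All the structural prerequisites are in place: $H_{n,t} = \sum_i h_{i,i+1}$ is a sum of positive local terms, and by Lemma~\ref{lemma:ground-space} the global ground space $\Cl(n)'$ is contained in the local ground space of each $h_{i,i+1}$, so the Hamiltonian is frustration-free. For any interval $I \subseteq \Z/n\Z$ with $|I| \geq t-1$, I would denote by $P_I$ the projector onto $\Cl(I)'$ extended trivially to the complement of $I$; by Theorem~\ref{theorem:commutant}, $P_I$ is spanned by the factorized operators $r(T)^{\otimes|I|}$, $T\in\Sigma_{t,t}$, and it coincides with the ground state projector of the block Hamiltonian $H_I := \sum_{i,i+1 \in I}h_{i,i+1}$.

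First, I would fix the block size $\ell := 12t$ and cover the ring by translates $I_j := [j, j+\ell-1]$ of $I_0$, stepping every $\ell/4 = 3t$ sites so that each bond lies in exactly four blocks. This yields the operator inequality $\sum_j H_{I_j} \preccurlyeq 4\, H_{n,t}$, and reduces the problem to lower-bounding the spectral gap of the block-summed Hamiltonian $\tilde H := \sum_j H_{I_j}$ by $\Delta(H_{12t,t})/(12t)$, whence $\Delta(H_{n,t}) \geq \Delta(\tilde H)/4 \geq \Delta(H_{12t,t})/(48t)$.

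The heart of the argument, and the main obstacle, is the Nachtergaele overlap bound between consecutive block projectors: one must show that the off-diagonal operator norm $\|(P_{I_j} - P_{I_j \cup I_{j+1}})(P_{I_{j+1}} - P_{I_j \cup I_{j+1}})\|_\infty$ is strictly less than $1/\sqrt{2}$. On the algebraic level this is clean: since consecutive blocks overlap in at least $\ell - 3t = 9t \geq t-1$ sites, linear independence of the factorized basis $\{r(T)^{\otimes m}\}_{T\in\Sigma_{t,t}}$ on the overlap (Theorem~\ref{theorem:commutant}) forces $\Cl(I_j)' \cap \Cl(I_{j+1})' = \Cl(I_j \cup I_{j+1})'$, so the projectors are compatible in the sense needed. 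The difficulty is quantitative: the basis $\{r(T)^{\otimes\ell}\}$ is non-orthogonal, hence $P_{I_j}$ is not simply the sum of rank-one projectors onto the individual $r(T)^{\otimes\ell}$, and translating the intersection identity into a sharp operator-norm bound requires controlling the angles between these spans. For this I would exploit the Gram-Schmidt analysis of Lemma~\ref{lemma:gram-schmidt} applied on the overlap of size $9t$: the off-diagonal Gram entries decay like $2^{-|\dim N_i - \dim N_j|\cdot 9t}$, which is more than enough to make the resulting overlap error vanish asymptotically in $t$, and in particular to bring it strictly below $1/\sqrt{2}$ for the chosen block size.

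Once the overlap bound is in hand, the standard Nachtergaele telescoping argument --- iteratively estimating $\|E_j - E_{j+1}\|_\infty$ where $E_j$ is the ground state projector of the partial sum $\sum_{i\leq j} H_{I_i}$ --- produces $\Delta(\tilde H) \geq \Delta(H_{\ell,t})/\ell$, uniformly in $n \geq 2\ell = 24t$. The hypothesis $n \geq 12t$ in the statement is only needed so that a single block fits on the ring; the proof actually uses the (automatic) bound $n \geq 12t$ together with the ring covering to close the induction. Combining with the operator inequality above gives exactly the bound~\eqref{eq:lowerboundgap}.
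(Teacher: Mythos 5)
Your high-level strategy is the right one and matches the paper's: the martingale method of Nachtergaele, frustration-freeness via the ground-space characterization of Lemma~\ref{lemma:ground-space}, block size $\ell=12t$, and a quantitative control of the non-orthogonality of the factorized basis $\{r(T)^{\otimes k}\}$. However, there are two genuine gaps in the execution. First, the overlap condition you verify --- $\|(P_{I_j}-P_{I_j\cup I_{j+1}})(P_{I_{j+1}}-P_{I_j\cup I_{j+1}})\|_\infty < 1/\sqrt{2}$ for two \emph{consecutive blocks} --- is not the hypothesis of the theorem you invoke. Nachtergaele's Theorem 3 (Lemma~\ref{lemma:nachtergaele} in the paper) requires bounding $\|G_{[q-l+2,q+1]}(G_{[1,q]}-G_{[1,q+1]})\|_\infty\leq\varepsilon_l$ with the much more stringent threshold $\varepsilon_l<1/\sqrt{l}$, where $G_{[1,q]}$ is the ground projector of the \emph{entire prefix}, not of a single block. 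You assert that "the standard telescoping argument produces $\Delta(\tilde H)\geq\Delta(H_{\ell,t})/\ell$" from the two-block condition, but you neither state nor prove the variant of the martingale method that would make this sufficient; as written, the reduction from the prefix condition to the two-block condition is exactly the step that is missing. (The paper verifies the prefix condition directly, by approximating every $G_{[1,q]}$ by the frame operator $S_{\Cl(q)}=\sum_T\oketbra{Q_T}{Q_T}^{\otimes q}$ and controlling the error via Lemma~\ref{lemma:Snormbound}.)

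Second, the quantitative mechanism you cite for the overlap bound does not work. You claim the off-diagonal Gram entries on an overlap of size $m=9t$ decay like $2^{-m|\dim N_i-\dim N_j|}$. That bound (Eq.~\eqref{eq:overlapsbound}) is vacuous whenever $\dim N_i=\dim N_j$ --- in particular for two \emph{distinct permutations} $T_i,T_j\in S_t$, both of which have trivial defect spaces --- so it cannot by itself bring the overlap below $1/\sqrt{2}$, let alone below $1/\sqrt{l}$. The bound that actually does the work is Lemma~\ref{lemma:overlaps}: the \emph{sum} of all off-diagonal overlaps satisfies $\sum_{T'\neq T}\obraket{Q_T}{Q_{T'}}^{m}\leq t\,2^{t-m}$, uniformly over defect dimensions; this is what underlies the frame-operator estimate $\|S_{\Cl(m)}-P_{\Cl(m)}\|_\infty\leq t\,2^{t-m}$ used in the paper, and with $m$ of order $12t$ it yields $R_{q,l}\leq 4t^2 2^{2t-l}\leq 1/(2\sqrt{l})$, whence the constant $48t$. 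A minor further inconsistency: your four-fold covering argument, as you state it, only closes for $n\geq 2\ell=24t$, which is weaker than the claimed hypothesis $n\geq 12t$.
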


\begin{proof}[Proof of Proposition~\ref{prop:clifford-expander-bound}]
We can now combine the bound in~\eqref{eq:lowerboundgap} with any lower bound on the spectral gap independent of $t$.
\newtext{To this end, we make again use of the averaging operator $T_{\sigma}:L^2(\Cl(n))\to L^2(\Cl(n))$ introduced in Eq.~\eqref{eq:hecke-operator} before.
By Ref.~\cite[Cor.~1]{diaconis_random_1993}
we have that
\begin{equation}\label{eq:diaconis}
\lambda_2(T_\sigma)\leq 1-\frac{\eta}{d^2},
\end{equation}
where $\eta$ is the probability of the least probable generator (here $1/|G|n$) and $d$ is the diameter of
the associated Cayley graph (given in Ref.~\cite{aaronson_clifford_2004} as $d=O(n^3/\log(n))$.

Since the representation $U\mapsto \Ad_U^{\otimes t}$ contains a trivial component, the second largest eigenvalue of $\MO_t(\sigma)$ can be at most $\lambda_2(T_\sigma)$.
Thus, $H_{n,t}$ has a gap of at least $\eta/d^2$.
}
Finally, by
Lemma~\ref{lemma:spectralgapbound} it follows that
\begin{myalign}
  \Delta(H_{n,t})\geq \frac{\Delta(H_{12t,t})}{48t} \geq c(G) t^{-8}\log(t)^2,
\end{myalign}
for a constant $c(G)$.
We note that the applicability of Ref.~\cite[Cor.~1]{diaconis_random_1993} to random walks on the Clifford group has also been observed in Ref.~\cite{divinzeno_data_2001}.

\end{proof}

We can combine Theorem~\ref{theorem:clifford-design} and Theorem~\ref{thm:Tcountdesign} to obtain the following corollary:
\combined*
\begin{proof}
	Consider the superoperator
	\begin{equation}
	\MO_t(\sigma_{k,m}) = \int_{U(2^n)} \Ad(U^{\otimes t})\,\mathrm{d}\sigma_{k,m}(U)=\underbrace{ \MO_t(\sigma^{*m})\mathrm{R}(K)\dots \MO_t(\sigma^{*m})\mathrm{R}(K)}_{\text{$k$ times}},
	\end{equation}
	where $\sigma^{*m}$ denotes the probability measure of a depth $m$ local random walk on the Clifford group (cp.~Def.~\ref{def:random-cliff-circuit}).
	We would like to bound the difference between the Haar random $t$-th moment operator $\MO_t(\haar)=:P_\mathrm{H}$ and $\MO_t(\sigma_{k,m})$.
    Notice the following standard properties of $P_\mathrm{H}$:
	\begin{equation}\label{eq:relationPHaar}
	P_{\mathrm{H}}\MO_t(\nu)=\MO_t(\nu)P_{\mathrm{H}}=P_{\mathrm{H}},\qquad\text{and} \qquad P_{\mathrm{H}}^{\dagger}=P_{\mathrm{H}},
	\end{equation}
	for any probability measure $\nu$ on $U(2^n)$.
	In particular, we have that $P_{\mathrm{H}}$ is an orthogonal projector.
	As in the last section, we make use of the spectral decomposition in Eq.~\eqref{eq:random-walk-decomp} to decompose $\MO_t(\sigma^{*k})$ as follows:
	\begin{myalign}
		\MO_t(\sigma_{k,m}) - P_{\mathrm{H}}
		&= \left[ \MO_t(\sigma^{*m})\mathrm{R}(K) \right]^k - P_{\mathrm{H}}\\
		&= \left[ \bigg(P_{\Cl} + \sum_{i\geq 2}\lambda^m_i \Pi_i \bigg)\mathrm{R}(K) \right]^k - P_{\mathrm{H}}.
	\end{myalign}
Recall the shorthand notation $P_{\Cl}:= \MO_t(\mu_{\Cl})$.
	Using the triangle inequality and the inequality \eqref{eq:dnorm-bound}, this implies
	\begin{myalign}
		\dnorm{\MO_t(\sigma_{k,m})-P_{\mathrm{H}}}
		&\leq \dnormb{ [P_{\Cl}\mathrm{R}(K)]^k -P_{\mathrm{H}}} + 2^{2tn}\sum_{l=1}^{k}{k\choose l}\lambda_{2}^{lm}\\
		&\leq  \dnormb{ [P_{\Cl}\mathrm{R}(K)]^k -P_{\mathrm{H}}} + k2^{2tn+1}\lambda_2^m.
	\end{myalign}
	Note that we bounded the second largest eigenvalue $\lambda_2$ of $\MO_t(\sigma)$ in Proposition~\ref{prop:clifford-expander-bound}.
We can now combine Proposition~\ref{prop:clifford-expander-bound} with~\eqref{eq:finalcombinedbound} to obtain:
	\begin{equation}
\| \MO_t(\sigma_{k,m}) - P_{\mathrm{H}} \|_{\diamond} \leq k2^{2tn+1}\lambda_2^m+ 2^{33t^4+t\log(k)}\left(1+2^{32t^2-n}\right)^{5k}\bar{\eta}_{K,t}^k.
\end{equation}
\end{proof}

\section{Singling out the Clifford group}\label{section:isomorphism}

There are a number of ways to motivate the construction of approximate unitary $t$-designs from random Clifford circuits.
\newtext{From a practical} point of view, Clifford gates are often comparatively easy to implement, in particular in fault-tolerant architectures.
In this section, we point out that Refs.~\cite{bannai_unitary_2020, sawicki_universal_2017} together imply that the Clifford groups are also mathematically distinguished.
\newtext{
We formulate this observation as Proposition~\ref{theorem:singlingoutclifford}: 
The finite case follows from the recently obtained classification of finite unitary subgroups forming $t$-designs, so-called \emph{unitary $t$-groups}, by \textcite{bannai_unitary_2020} building on earlier results by \textcite{guralnick_larsen_2005}.
The infinite case is a corollary of a theorem about universality of finitely generated subgroups by \textcite{sawicki_universal_2017}.
}

\newtext{
This section is independent from the rest of the paper and has the sole purpose of highlighting the results in Refs.~\cite{guralnick_larsen_2005,bannai_unitary_2020,sawicki_universal_2017} and explicitly formulate their combined implications for the generation of unitary $t$-designs.
Moreover, it might serve as an intuitive justification for the usefulness and omnipresence of Clifford unitaries in random circuit constructions.
}

For any subgroup $G\subseteq \mathrm{U}(d)$, we let
\begin{align*}
  \overline{G}:=\{\det(U^\dagger)U\,|\, U\in G\}\subseteq \SU(d).
\end{align*}
Notice that $\overline{G}$ is a unitary $t$-design if and only if $G$ is.

Proposition~\ref{theorem:singlingoutclifford} refers to $t$-designs generated by \emph{finite
gate sets}, which we define now. The starting point is a Hilbert space $(\C^q)^{\otimes r}$
for some $r$. A finite gate set is a finite subset
\begin{align*}
  \mathcal{G}\subset \SU\big((\C^q)^{\otimes r}\big).
\end{align*}
We will denote by $\mathcal{G}_n$ the subgroup of $\SU\big((\C^q)^{\otimes n}\big)$
generated by elements of $\mathcal{G}$ acting on any $r$ tensor factors (here $r\leq n$).
The number $q$ is called the \emph{local dimension} of $\mathcal{G}$.

\begin{restatable}[Singling out the Clifford group~\cite{guralnick_larsen_2005,bannai_unitary_2020,sawicki_universal_2017}]{proposition}{clifforduniqueness}
	\label{theorem:singlingoutclifford}

  Let $t\geq2$, and let $\mathcal{G}$ be a finite gate set with local dimension $q\geq 2$.
  Assume that (1) either all $\mathcal{G}_n$ are finite or they are all infinite, and (2) there is an $n_0$ such that for
  all $n\geq n_0$, $\mathcal{G}_n$ is a unitary $t$-design.

	Then, one of the following cases apply:	\begin{enumerate}[label=(\roman*)]
		\item If $t=2$, we have either $q$ prime and $\mathcal{G}_n$ is isomorphic to a subgroup of the Clifford group $\overline{\Cl}(q^n)$, \emph{or} $\mathcal{G}_n$ is dense in $\mathrm{SU}(q^n)$,
		\item If $t=3$, we have \emph{either} $q=2$ and $\mathcal{G}_n$ is isomorphic to the full Clifford group $\overline{\Cl}(2^n)$ \emph{or} $\mathcal{G}_n$ is dense in $\mathrm{SU}(q^n)$,
		\item If $t \geq 4$ then $\mathcal{G}_n$ is dense in $\mathrm{SU}(q^n)$.
	\end{enumerate}
\end{restatable}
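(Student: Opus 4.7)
The plan is to split on the dichotomy in assumption~(1): either all $\mathcal{G}_n$ are finite or all are infinite, and in each branch invoke one of the two deep classification inputs from Refs.~\cite{bannai_unitary_2020, sawicki_universal_2017}. In both branches the content of the proposition is essentially extracted from the cited results; what has to be done here is to pass from the per-dimension statements of those references to a uniform statement in $n$.

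\emph{Infinite branch.} Suppose $\mathcal{G}_n$ is infinite for every $n\geq n_0$. Here I would appeal to the universality criterion of Sawicki and Karnas~\cite{sawicki_universal_2017}. Roughly, their result says that a finitely generated subgroup $H\subseteq\SU(D)$ whose Haar average reproduces at least the second moment of the full group, and which is infinite, is automatically dense in $\SU(D)$. Any unitary $t$-design with $t\geq 2$ is \emph{a fortiori} a $2$-design, and $\mathcal{G}_n$ is finitely generated because it is generated by the finite gate set $\mathcal{G}$ acting on the finitely many placements of $r$ out of $n$ tensor factors. Applying the criterion with $D=q^n$ for each $n\geq n_0$ therefore yields density of $\mathcal{G}_n$ in $\SU(q^n)$, covering the second alternative in all of the cases $t=2$, $t=3$ and $t\geq 4$ simultaneously.

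\emph{Finite branch.} Suppose instead that $\mathcal{G}_n$ is finite for every $n\geq n_0$. Here I would invoke the classification of finite unitary $t$-groups by Bannai, Navarro, Rizo and Tiep~\cite{bannai_unitary_2020}, which enumerates up to isomorphism, in every dimension, all finite subgroups of $\SU(D)$ that form exact unitary $t$-designs. Extracting from their tables the families that persist as $D=q^n$ grows forces: for $t=2$, that $q$ be prime and $\mathcal{G}_n$ be isomorphic to a subgroup of $\overline{\Cl}(q^n)$; for $t=3$, that $q=2$ and $\mathcal{G}_n\cong \overline{\Cl}(2^n)$; and for $t\geq 4$, that no such infinite family exists at all, so the finite branch is vacuous and we fall back on the infinite-branch conclusion. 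These three sub-cases produce exactly the three alternatives (i)--(iii) of the proposition.

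The main obstacle is not a new argument but rather the careful matching of hypotheses. Three bookkeeping points have to be checked: (i) that the finite-generation hypothesis of \cite{sawicki_universal_2017} applies, which follows because $\mathcal{G}_n$ is generated by finitely many translates of the fixed gate set $\mathcal{G}$; (ii) that the sporadic low-dimensional entries in the classification of \cite{bannai_unitary_2020} do not give rise to an infinite family indexed by $n$, so that the only surviving families in growing dimension $q^n$ are the Clifford ones identified above; and (iii) that assumption~(1) of uniform finiteness/infiniteness in $n$ is exactly what rules out pathological ``mixed'' scenarios in which $\mathcal{G}_n$ is finite for some $n\geq n_0$ and infinite for others, thereby letting the branch-wise analysis be assembled into a clean case distinction.
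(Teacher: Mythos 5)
Your proposal is correct and follows essentially the same route as the paper: split on the finite/infinite dichotomy, use the Bannai--Navarro--Rizo--Tiep classification of finite unitary $t$-groups (ruling out the sporadic exceptions and, for $t=2$, the Lie-type families in dimensions $(3^m\pm1)/2$ and $(2^m+(-1)^m)/3$, which cannot match $q^n$ for all large $n$) in the finite branch, and the Sawicki--Karnas universality criterion in the infinite branch. The only substantive step you compress is the bridge from ``$\mathcal{G}_n$ is a $2$-design'' to the actual hypothesis of \cite[Cor.~3.5]{sawicki_universal_2017}, namely that $\Comm(\Ad_{\mathcal G_n})\cap\End(\mathfrak{su}(q^n))$ is trivial; the paper supplies this via the partial-transpose isomorphism between $\Comm(U\otimes U)$ and $\Comm(\Ad_U)$, under which $\spann\{\one,\mathbb{F}\}$ maps to $\{\id,\tr(\bullet)\id\}$ and only the identity survives on the traceless subspace.
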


	Note that a finitely generated infinite subgroup of $\SU(d)$ is always dense in some compact Lie subgroup (cp.~\cite[Fact 2.6]{sawicki_universal_2017}). In particular, it inherits a Haar measure from this Lie subgroup which allows for a definition of unitary $t$-design.

\paragraph{Finite case.} In the classification in Ref.~\cite{bannai_unitary_2020}, the non-existence of finite unitary $t$-groups was shown for $t \geq 4$ (and dimension $d>2$). Already the case $t=3$ is very restrictive, since the authors arrive at the following result:

\begin{lemma}[{Ref.~\cite[Thm.~4]{bannai_unitary_2020}}]
	Suppose $d\geq 5$ and \newtext{consider a finite subgroup $H < \SU(d)$ which is a unitary 3-design.} Then, $H$ is either one of finitely many exceptional cases or $d=2^n$ and $H$ is isomorphic to the Clifford group $\overline{\Cl}(2^n)$.
\end{lemma}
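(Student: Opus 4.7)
The plan is to prove the lemma using the character-theoretic reformulation of unitary $t$-groups combined with the classification of finite quasisimple groups. First, I would translate the $3$-design condition into a moment identity on characters. If $\chi$ denotes the defining $d$-dimensional character of $H$, then being a unitary $t$-design is equivalent to
\[
\frac{1}{|H|}\sum_{h \in H} |\chi(h)|^{2t} \;=\; C_t \;:=\; \frac{1}{t+1}\binom{2t}{t},
\]
valid for $d\geq t$, where $C_t$ equals the dimension of the commutant of the $t$-fold diagonal action of $\U(d)$. For $t=3$ this evaluates to $5$. The $t=1$ case already forces the defining representation to be irreducible, and the $t=2$ case pins down $\chi\otimes\bar\chi$ as a sum of exactly two irreducible constituents (the trivial one plus an adjoint-type representation of dimension $d^2-1$).

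Second, I would use the Aschbacher-style classification of irreducible finite subgroups of classical groups (which rests on the classification of finite simple groups) to restrict the structure of $H$. The moment $C_3=5$ is so restrictive that the generalized Fitting subgroup $F^{*}(H)/Z(H)$ must fall into a short list of structural types. Either $F^{*}(H)/Z(H)$ is a nonabelian almost simple group (alternating, sporadic, or of Lie type), or $F^{*}(H)$ is a central extension of an elementary abelian $p$-group, that is, an extraspecial $p$-group $E$ of order $p^{1+2m}$ acting irreducibly on $\C^d$ with $d=p^m$.

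Third, I would rule out or enumerate each case. In the almost-simple case, the identity $\frac{1}{|H|}\sum_h|\chi(h)|^{6}=5$ couples $|H|$ tightly to $d$ and forces bounded parameters within each infinite family of finite simple groups; inspection of low-dimensional irreducible characters and their Frobenius--Schur indicators then produces only finitely many admissible candidates, which constitute the exceptional list in the statement. In the extraspecial case, the conjugation action of $H/Z(H)$ on $E/Z(E)\cong \F_p^{2m}$ embeds $H/Z(H)$ into $\mathrm{Sp}_{2m}(\F_p)$, and evaluating the moment $\frac{1}{|H|}\sum_h |\chi(h)|^{6}=5$ using the known Gaussian-sum character values of extraspecial groups forces $p=2$ together with $H/Z(H)\cong \mathrm{Sp}_{2m}(\F_2)$. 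This is precisely the structure of the normalizer of the Pauli group in $\U(2^m)$, so setting $n=m$ yields $H\cong \overline{\Cl}(2^n)$.

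The main obstacle is the case-by-case work in the third step: ruling out almost all non-Clifford candidates among quasisimple groups requires detailed control of character tables, tensor-product decompositions, and Frobenius--Schur indicators across every family of finite simple groups, and guaranteeing that only finitely many sporadic exceptions survive is the technical heart of the Bannai--Navarro--Rizo--Tiep argument. Once that bookkeeping is completed, the only infinite family of finite unitary $3$-groups in dimension $d\geq 5$ is the one realized by the multi-qubit Clifford groups, which proves the lemma.
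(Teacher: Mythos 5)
There is nothing in the paper to compare your argument against: the statement is not proven in this work at all, but imported verbatim as Theorem~4 of Ref.~\cite{bannai_unitary_2020} and used as a black box in Section~\ref{section:isomorphism}. Measured against that reference, your outline does mirror the genuine strategy (a character-moment reformulation of the design property, then a CFSG-based analysis of the generalized Fitting subgroup, with the extraspecial-normalizer branch producing the Clifford group), but as a proof it has two genuine problems. First, your quantitative starting point is wrong: for $d\geq t$ the commutant of the $t$-fold diagonal action of $\U(d)$ is spanned by the $t!$ permutation operators, so the correct criterion is $\frac{1}{|H|}\sum_{h\in H}|\chi(h)|^{2t}=t!$, i.e.\ $6$ for $t=3$, not the Catalan number $C_3=5$. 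The value $5$ is the commutant dimension only for $d=2$ (partitions with at most two rows / Temperley--Lieb), which lies outside the regime $d\geq 5$ of the lemma; worse, since the Haar moment is a lower bound on this frame potential for \emph{any} subgroup, demanding the sixth moment equal $5$ is unsatisfiable for $d\geq 3$, so the criterion as you state it characterizes nothing and every subsequent deduction built on ``$=5$'' would have to be redone with $6$.

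Second, and more fundamentally, the step you defer to ``bookkeeping'' is the theorem. Showing that in the quasisimple branch only finitely many exceptional candidates survive, and that in the extraspecial branch the moment condition forces $p=2$, the full symplectic quotient $\mathrm{Sp}_{2m}(\F_2)$, and hence $H\cong\overline{\Cl}(2^n)$ (which itself needs the uniqueness, up to conjugation, of the normalizer of the extraspecial $2$-group in $\SU(2^m)$, not merely the isomorphism type of $H/Z(H)$), is precisely the case-by-case analysis occupying Ref.~\cite{bannai_unitary_2020} and resting on detailed character-table and Frobenius--Schur data across all families of finite simple groups. An outline that assumes this classification work cannot be regarded as an independent proof; within the present paper the honest treatment is the one the authors take, namely to cite the result.
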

This establishes the finite version of $(ii)$, the $t=3$ case.

The classification of unitary 2-designs is however more involved, it includes certain irreducible
representations of finite unitary and symplectic groups (compare~\cite[Thm.~3~Lie-type case]{bannai_unitary_2020}),
and a finite set of exceptions. The exceptions can be ruled out in the same way as above.

The former, the Lie-type cases, happen in dimensions $(3^n\pm1)/2$ and $(2^n+(-1)^n)/3$. There
is no $q$ for which there exists an $n_0$ such that for all $n\geq n_0$ there
exists an $m\in \mathbb{N}$ satisfying either
\begin{align*}
  q^n = (3^m\pm1)/2 \qquad\text{or}\qquad q^n = (2^m+(-1)^m)/3.
\end{align*}
Thus, the assumptions of Prop.~\ref{theorem:singlingoutclifford} rule these out. This establishes
the finite version of $(i)$.

\paragraph{Infinite case.}
Define the commutant for a set $S\subset \SU(d)$ of the adjoint action as
\begin{equation}
\Comm(\mathrm{Ad}_S) \nonumber:=\left\{ L\in \End\left(\mathbb{C}^{d\times d}\right) \,\big| \, [\Ad_g,L]=0\;\;\forall g\in S\right\}.
\end{equation}
We show that the second case can be reduced to Cor.~3.5 from Ref.~\cite{sawicki_universal_2017} applied to the simple Lie group $\SU(d)$.

\begin{lemma}[{\cite[Cor.~3.5]{sawicki_universal_2017}}]
	\label{lemma:adjointcommutant}
	Given a finite set $G\subset \SU(d)$ such that $\mathcal G = \langle G \rangle$ is infinite.
	Then, the group $\mathcal G $ is dense in $\SU(d)$ if and only if
	\begin{equation}
	\Comm(\Ad_{\mathcal G })\cap \End(\mathfrak{su}(d))=\{\lambda \, \id_{\mathfrak{su}(d)} \,|\,\lambda\in\mathbb{R}\}.
	\end{equation}
\end{lemma}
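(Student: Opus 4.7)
The plan is to prove the two implications separately. Let $H := \overline{\mathcal G}$ denote the closure of $\mathcal G$ in $\SU(d)$; by Cartan's closed-subgroup theorem $H$ is a compact Lie subgroup with Lie algebra $\mathfrak h \subseteq \mathfrak{su}(d)$, and by continuity $\Comm(\Ad_{\mathcal G}) = \Comm(\Ad_H)$, so I shall phrase the argument in terms of $H$ throughout.

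For ``$\mathcal G$ dense $\Rightarrow$ commutant scalar'', assume $H = \SU(d)$. The adjoint action of $\SU(d)$ on the real vector space $\mathfrak{su}(d)$ is irreducible, since $\mathfrak{su}(d)$ is simple for $d\geq 2$, and its complexification $\mathfrak{sl}(d,\C)$ remains irreducible as a real $\SU(d)$-module -- i.e.\ the adjoint representation is of real type. Real Schur's lemma then forces $\Comm(\Ad_{\SU(d)}) \cap \End(\mathfrak{su}(d)) = \R\cdot \id_{\mathfrak{su}(d)}$, which is precisely the claim.

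For the converse, assume $\Comm(\Ad_{\mathcal G}) \cap \End(\mathfrak{su}(d)) = \R \cdot \id$. Because $\mathfrak h$ is the Lie algebra of $H$, it is $\Ad_H$-invariant. The negative of the Killing form is an $\Ad$-invariant positive-definite inner product on $\mathfrak{su}(d)$, so the orthogonal complement $\mathfrak h^\perp$ is $\Ad_H$-invariant as well, and the orthogonal projector $P_{\mathfrak h}\in \End(\mathfrak{su}(d))$ onto $\mathfrak h$ lies in $\Comm(\Ad_H) = \Comm(\Ad_{\mathcal G})$. By hypothesis $P_{\mathfrak h} = \lambda\, \id$, and idempotency forces $\lambda\in\{0,1\}$. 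Since $\mathcal G$ is infinite and $H$ is compact, $H$ cannot be discrete (a compact discrete group is finite), so $\dim \mathfrak h \geq 1$, forcing $\lambda = 1$ and hence $\mathfrak h = \mathfrak{su}(d)$. Connectedness of $\SU(d)$ then yields $H \supseteq \SU(d)_0 = \SU(d)$, so $\mathcal G$ is dense.

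The only delicate point is the ``real type'' ingredient in the forward direction: a priori, real Schur's lemma would allow the commutant to be an arbitrary division algebra over $\R$ -- namely $\R$, $\C$, or $\mathbb{H}$ -- and one must invoke the classical fact that the adjoint representation of a compact simple Lie group is of real type to exclude the larger options. The converse direction, by contrast, is entirely structural: the hypothesis is applied to the \emph{single} explicit commuting operator $P_{\mathfrak h}$, and only the fact that real idempotents are $0$ or $1$ is used, together with Cartan's theorem to guarantee that $H$ is a Lie subgroup whose Lie algebra one may speak of.
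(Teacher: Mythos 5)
Your proof is correct, but note that the paper itself does not prove this lemma at all: it is imported verbatim from Ref.~\cite[Cor.~3.5]{sawicki_universal_2017} and used as a black box in Section~\ref{section:isomorphism}, so there is no internal argument to compare against. On its own merits your two-directional argument is sound and is essentially the standard proof underlying the cited result. The forward direction correctly reduces, via continuity of $g\mapsto\Ad_g$ and density, to the commutant of the full adjoint representation, and you rightly flag the one non-trivial ingredient: real Schur alone would allow $\C$ or $\mathbb{H}$, and it is the real-type property (irreducibility of the complexification $\mathfrak{sl}(d,\C)$ as the adjoint module of a simple complex Lie algebra) that pins the commutant down to $\{\lambda\,\id_{\mathfrak{su}(d)}\}$. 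The converse is clean: the orthogonal projector $P_{\mathfrak h}$ onto the Lie algebra of $H=\overline{\mathcal G}$ (with respect to an $\Ad$-invariant inner product) lies in $\Comm(\Ad_{\mathcal G})\cap\End(\mathfrak{su}(d))$, idempotency forces $\lambda\in\{0,1\}$, and the hypothesis that $\mathcal G$ is infinite together with Cartan's theorem and ``compact and discrete implies finite'' excludes $\lambda=0$; then $\mathfrak h=\mathfrak{su}(d)$ and connectedness of $\SU(d)$ gives density. One cosmetic caveat: the paper's definition places $\Comm(\Ad_{\mathcal G})$ inside $\End(\C^{d\times d})$, so to match the stated intersection literally one should either read $\End(\mathfrak{su}(d))$ as maps preserving $\mathfrak{su}(d)$ (restricted there), as the paper implicitly does when it restricts $\tr(\bullet)\id$ to $\mathfrak{su}(d)$, or extend $P_{\mathfrak h}$ complex-linearly to $\mathfrak{sl}(d,\C)\oplus\C\one$; either reading leaves your argument intact.
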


Recall that a subgroup $\mathcal{G}\subseteq U(d)$ is a unitary $2$-group if and only if  $\Comm(U\otimes U|U\in\mathcal{G})=\Comm(U\otimes U|U\in\U(d))=\spann(\one,\mathbb{F})$, where $\mathbb{F}$ denotes the flip of two tensor copies (see also App.~\ref{app:tdesigns}
).
Let us denote the partial transpose on the second system of a linear operator $A\in L(\mathbb{C}^{d}\otimes\mathbb{C}^{d})$ by $A^{\Gamma}$.
Then, one can easily verify that $\Gamma$ induces a vector space isomorphism between $\Comm(U\otimes U|U\in\mathcal{G})$ and $\Comm(U\otimes \overline U|U\in\mathcal{G})$.
The image of the basis $\{\one,\mathbb{F}\}$ is readily computed as
\begin{equation}
\one^{\Gamma} = \one, \quad\quad\quad \mathbb F^{\Gamma} = d \ketbra{\Omega}{\Omega},
\end{equation}
where $\ket\Omega = d^{-1/2}\sum_{i=1}^d \ket{ii}$ is the maximally entangled state vector.
Next, we use that $U\otimes\overline{U}=\matmap(\Ad_U)$ is the matrix representation of $\Ad_U=U\cdot U^\dagger$ with respect to the basis $E_{i,j}=\ketbra{i}{j}$ of $L(\C^d)$.
Thus, we have $\Comm(\Ad_{\mathcal G}) \simeq \Comm(U\otimes \overline{U}|U\in\mathcal G)$ as algebras.
Pulling the above basis of $\Comm(U\otimes \overline{U}|U\in\mathcal G)$ back to $\Comm(\Ad_{\mathcal G})$, we then find:
\begin{equation}
\matmap^{-1}(\one) = \id_{L(\C^d)}, \quad \matmap^{-1}(\ketbra{\Omega}{\Omega}) = \tr(\bullet) \id_{L(\C^d)}.
\end{equation}
Hence, we have shown that any element in $\Comm(\Ad_{\mathcal G})$ is a linear combination of these two maps. However, by restricting to $\mathfrak{su}(d)$, the second map becomes identically zero, thus we have
\begin{equation}
\Comm(\Ad_{\mathcal G })\cap \End(\mathfrak{su}(d))=\{\lambda\, \id_{\mathfrak{su}(d)} \,|\,\lambda\in\mathbb{R}\}.
\end{equation}
By Lemma \ref{lemma:adjointcommutant}, this shows that any finitely generated infinite unitary 2-group $\mathcal{G}\leq\SU(d)$ is dense in $\SU(d)$.
Since any unitary $t$-group is in particular a 2-group, this is also true for any $t>2$.

\section{Proofs}

\subsection{Proof of overlap lemmas}
\label{sec:overlaps}

In this section, we prove three technical lemmas which are needed throughout this paper.
\newtext{These lemmas give bounds on the overlaps of the operators $Q_T^{\otimes n}$ and hence quantify how far this basis is from an orthonormal basis of the commutant of the Clifford tensor power representation, i.e., for $\mathrm{range}\ P_{\Cl}$.}

\diamondbound*
\begin{proof}
	First, recall that $Q_T:= 2^{-t/2}r(T)$. Then, we make use of the following elementary bound on the diamond norm of rank one superoperator $\oketbra{A}{B}$:
	\begin{myalign}
	 \dnorm{\oketbra{A}{B}} &= \sup_{\tnorm{X}=1} \tnorm{A\otimes \tr_1\left(B\otimes\one X\right)} \\
	 &\stackrel{\dagger}{\leq} \tnorm{A} \sup_{\tnorm{X}=1} \tnorm{B\otimes\one X} \\
	 &\stackrel{\ddagger}{=} \tnorm{A} \snorm{B\otimes\one} \\
	 &= \tnorm{A} \snorm{B}.
	\end{myalign}
	Here, we have used in $\dagger$ that the partial trace is a contraction w.r.t.~$\tnorm{\cdot}$ and in $\ddagger$ a version of the duality between trace and spectral norm \cite{bhatia_book}.  
	Given stochastic Lagrangians $T_1$ and $T_2$ with defect spaces $N_1$ and $N_2$, we thus find using Lem.~\ref{lemma:rT-norms}:
	\begin{myalign}
	 \dnorm{\oketbra{Q_{T_1}}{Q_{T_2}}} \leq 2^{-t} \tnorm{r(T_1)}\snorm{r(T_2)} = 2^{\dim N_2 - \dim N_1}.
	\end{myalign}

	To prove $2.$, we use Ref.~\cite[Eq.~(4.25)]{gross2017schur} and that the transpose does not change the dimension of the corresponding defect subspace.
	Moreover, we assume w.l.o.g.~that $\dim N_2\geq \dim N_1$.
	We have
	\begin{equation}
	|\obraket{Q_{T_1}}{Q_{T_2}}|=
	2^{-t}|\Tr[r(T_1)r(T_2)^T]|=2^{-t+\dim( N_1\cap N_2)}|\Tr[r(T)]|
	\end{equation}
	where $r(T)$ is described by a stochastic orthogonal and a defect space $N^{\perp}_1\cap N_2+N_1$.
	Hence, we obtain (together with H{\"o}lder's inequality):
	\begin{equation}
	|\obraket{Q_{T_1}}{Q_{T_2}}|\leq
	2^{-t+\dim( N_1\cap N_2)}2^{t-\dim(N_1^{\perp}\cap N_2+N_1)}.
	\end{equation}
	Using $N\subseteq N^{\perp}$ for all defect spaces and the general identity $\dim(V+W)=\dim V+\dim W-\dim(V\cap W)$, this yields
	\begin{equation}
	|\obraket{Q_{T_1}}{Q_{T_2}}|
	\leq 2^{\dim(N_1\cap N_2)-\dim N_1}\leq 2^{\dim N_2-\dim N_1}.
	\end{equation}
\end{proof}

\newtext{Next, we define a \emph{frame operator} associated to the basis $Q_T^{\otimes n}$.
If the basis was orthogonal, this frame operator would simply be the projector $P_{\Cl}$ onto the Clifford commutant.

\begin{definition}[Clifford frame operator]
 We define the Clifford frame operator of the basis $Q_T^{\otimes n}$ as
 \begin{equation}
  S_{\Cl} := \sum_{T\in\Sigma_{t,t}} \oketbra{Q_T}{Q_T}^{\otimes n}.
 \end{equation}
\end{definition}

Hence, a quantifier for the orthogonality of the $Q_T^{\otimes n}$ basis is the distance of $S_{\Cl}$ to the projector $P_{\Cl}$.
As we prove in Lem.~\ref{lemma:Snormbound}, we have $P_{\Cl} \approx S_{\Cl}$ in spectral norm and we will use this result later in the proof of Lem.~\ref{lemma:spectralgapbound}.
In order to show this, we first derive a result on the \emph{sum of overlaps} in Lem.~\ref{lemma:overlaps}.

Interestingly, $S_{\Cl}$ is \emph{not} close to $P_{\Cl}$ in diamond norm (see.~Ch.~15 in Ref.~\cite{heinrich_2021}). 
To derive our main result, we instead construct an orthogonalized basis from the $Q_T^{\otimes n}$.
Some properties of the orthogonalized basis are proven in Lem.~\ref{lemma:gram-schmidt}, which also makes use of Lem.~\ref{lemma:overlaps}.
}

\begin{lemma}[Overlap of stochastic Lagrangian sub-spaces]\label{lemma:overlaps}
We have $\obraket{Q_T}{Q_{T'}}\geq 0$ for all $T,T'\in \Sigma_{t,t}$. Moreover, for all $T\in \Sigma_{t,t}$ the sum of overlaps is
\begin{equation}\label{eq:superuseful}
\sum_{T'\in \Sigma_{t,t}} \obraket{Q_T}{Q_{T'}}^n = (-2^{-n}; 2)_{t-1}  \leq 1+ t2^{t-n},
\end{equation}
where $(-2^{-n}; 2)_{t-1} = \prod_{r=0}^{t-2}(1+2^{r-n})$ and the last inequality holds for $n + 2 \geq t + \log_2(t)$.
\end{lemma}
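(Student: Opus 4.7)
The plan is to start with the inner product formula. Expanding directly from $Q_T = 2^{-t/2}r(T)$ and $r(T) = \sum_{(x,y)\in T}|x\rangle\langle y|$, I find
\begin{equation*}
\obraket{Q_T}{Q_{T'}} = 2^{-t}\Tr(r(T)^\dagger r(T')) = 2^{-t}\sum_{(x,y)\in T,\,(x',y')\in T'}\delta_{x,x'}\delta_{y,y'} = 2^{-t}|T\cap T'|.
\end{equation*}
Since $T\cap T'$ is a subspace (hence contains at least the zero vector), the inner product is manifestly non-negative, and in fact equals $2^{\dim(T\cap T')-t}$.

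For the sum, I would rewrite $|T\cap T'|^n = |(T\cap T')^n|$ as a count of $n$-tuples $(v_1,\dots,v_n) \in (T\cap T')^n$ and swap the order of summation:
\begin{equation*}
\sum_{T'\in\Sigma_{t,t}} 2^{n\dim(T\cap T')} = \sum_{\vec v\in T^n} \bigl|\{T'\in\Sigma_{t,t}\;:\;v_1,\dots,v_n\in T'\}\bigr|.
\end{equation*}
The inner cardinality depends only on $V := \spann(\vec v) \leq T$. Grouping tuples by the subspace they span, the number of $n$-tuples spanning a fixed $k$-dimensional $V$ is the well-known $\prod_{j=0}^{k-1}(2^n-2^j)$, and so the task reduces to computing $N(V) := |\{T'\in\Sigma_{t,t} : V\subseteq T'\}|$ for each totally $\mathfrak q$-isotropic subspace $V\leq T$.

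The main obstacle is the combinatorial count of $N(V)$. I would argue that $V$ inherits total $\mathfrak q$-isotropicity from $T$, and distinguish whether the all-ones vector $(1,\dots,1,1,\dots,1)$ already lies in $V$; if not, it may always be adjoined at the cost of raising $\dim V$ by one, since it is $\mathfrak q$-isotropic. Thereafter, counting maximal stochastic Lagrangian extensions of $V$ inside $V^{\perp_{\mathfrak q}}/V$ amounts to counting maximal $\mathfrak q$-isotropic subspaces in a lower-dimensional quadratic space, and by Witt's extension theorem this number is precisely $|\Sigma_{t-\dim V,t-\dim V}| = \prod_{r=0}^{t-\dim V-2}(2^r+1)$. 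Substituting and simplifying the resulting sum over $k = \dim V$ with Gaussian binomial coefficients for the number of $k$-dimensional totally isotropic stochastic subspaces of $T$ should, after a bookkeeping step, telescope into $\prod_{r=0}^{t-2}(1+2^{r-n})$. The key identity here is that the elementary symmetric polynomials $e_k(1,2,\dots,2^{t-2}) = 2^{\binom{k}{2}}{t-1\brack k}_2$ match the counts of isotropic flags in $T$.

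Finally, the estimate is routine: using $\log(1+x)\leq x$ termwise and summing the geometric series,
\begin{equation*}
\log \prod_{r=0}^{t-2}(1+2^{r-n}) \leq \sum_{r=0}^{t-2} 2^{r-n} \leq 2^{t-1-n}.
\end{equation*}
Under the hypothesis $n+2\geq t+\log_2 t$ this exponent is at most $\log 2$, so $e^x\leq 1+2x$ applies and yields $\prod(1+2^{r-n})\leq 1+2^{t-n}\leq 1+t\cdot 2^{t-n}$. I expect the main work to lie in the Witt-type count of $N(V)$ and the hypergeometric summation, while non-negativity and the final bound are brief.
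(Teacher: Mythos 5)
Your opening computation, $\obraket{Q_T}{Q_{T'}} = 2^{-t}\,|T\cap T'| = 2^{\dim(T\cap T')-t} > 0$, is correct and in fact sharper than what the paper records for non-negativity (the paper only observes that the $r(T)$ are entrywise non-negative). Your closing estimate $\prod_{r=0}^{t-2}(1+2^{r-n})\leq 1+t2^{t-n}$ under $n+2\geq t+\log_2 t$ is also fine and essentially matches the paper's. The problem lies in the middle, which is where all the content of the lemma sits. Your double-counting reduction to $N(V):=|\{T'\in\Sigma_{t,t}: V\subseteq T'\}|$ is legitimate, but the evaluation of $N(V)$ is asserted rather than proven, and the assertion is not correct as stated. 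First, $N(V)$ is \emph{not} a function of $\dim V$ alone: already for $t=2$, $N(\langle(1,1,1,1)\rangle)=|\Sigma_{2,2}|=2$ while $N(\langle(1,0,1,0)\rangle)=1$, both one-dimensional. Second, even restricting to $V\ni(1,\dots,1)$, the formula $N(V)=|\Sigma_{t-\dim V,t-\dim V}|$ fails for $V=\langle(1,\dots,1)\rangle$, where $N(V)=|\Sigma_{t,t}|$ by definition of stochasticity, not $|\Sigma_{t-1,t-1}|$. Third, classical Witt extension applies to quadratic forms over fields, whereas $\mathfrak q$ is $\Z_4$-valued on a $\Z_2$-space and the extensions must additionally preserve the stochasticity constraint; the transitivity you need is a nontrivial statement (versions of it appear in Ref.~\cite{gross2017schur}), not an off-the-shelf citation. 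Finally, the ``bookkeeping step'' that is supposed to make the sum over flags telescope into $\prod_{r=0}^{t-2}(1+2^{r-n})$ is exactly the hard part and is not carried out. So the plan is plausibly completable, but as written it has a genuine gap at its center.

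For comparison, the paper sidesteps the entire extension-counting problem by invoking two facts from Ref.~\cite{gross2017schur}: Theorem~5.3 there states that $\sum_{T'\in\Sigma_{t,t}} r(T')^{\otimes n}$ equals the explicit constant $2^n\prod_{r=0}^{t-2}(2^r+2^n)$ times the stabilizer-state average $\mathbb{E}_{s\in\mathrm{Stab}(n)}\big(\ketbra{s}{s}^{\otimes t}\big)$, and Eq.~(4.10) there gives $\sandwich{s^{\otimes t}}{r(T)^{\otimes n}}{s^{\otimes t}}=1$ for every $T\in\Sigma_{t,t}$ and every stabilizer state $s$. Tracing $r(\tilde T)^{\otimes n}$ (with $r(\tilde T)=r(T)^\dagger$) against the sum then produces the product formula in three lines. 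If you want to keep your self-contained combinatorial route, you would need to prove the relevant orbit/extension counts for the $\Z_4$-valued form with the stochasticity constraint; otherwise the clean path is through the stabilizer-average identity.
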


\begin{proof}
	Denote by $\mathrm{Stab}(n)$ the set of stabilizer states on $n$ qubits.
	Since the operators $r(T)$ are entry-wise non-negative, we have $\obraket{Q_T}{Q_{T'}}=2^{-t}\Tr(r(T)^\dagger r(T'))\geq 0$.
	Note that $r(T)^\dagger = r(\tilde T)$ for a suitable $\tilde T\in\Sigma_{t,t}$ (cp.~Thm.~\ref{theorem:r decomposition}).
	We obtain
	\begin{myalign}
	\sum_{T'\in \Sigma_{t,t}} \obraket{Q_T}{Q_{T'}}^n &=\frac{1}{2^{tn}}\sum_{T'\in \Sigma_{t,t}}\Tr\left[r(\tilde T)^{\otimes n} r(T')^{\otimes n}\right]\\
	&\stackrel{\dagger}{=}\frac{2^n\prod_{r=0}^{t-2}(2^r+2^n)}{2^{tn}}\Tr\left[r(\tilde T)^{\otimes n}\mathbb{E}_{s\in \mathrm{Stab(n)}}(\ketbra{s}{s}^{\otimes t})\right]\\
	 &=\frac{2^n\prod_{r=0}^{t-2}(2^r+2^n)}{2^{tn}}\mathbb{E}_{s\in \mathrm{Stab(n)}}\sandwich{s^{\otimes t}}{r(\tilde T)^{\otimes n}}{s^{\otimes t}}\\
	&\stackrel{\ddagger}{=}\frac{2^n\prod_{r=0}^{t-2}(2^r+2^n)}{2^{tn}}\\
    &=\prod_{r=0}^{t-2}(1+2^{r-n}) \\
    &\leq \left( 1+2^{t-2-n}\right)^{t-1} \\
    &\stackrel{*}{\leq} \exp\left((t-1) 2^{t-n-2}\right),
	\end{myalign}
	where we have again used \cite[Thm.~5.3]{gross2017schur} in $\dagger$ and in $\ddagger$ that $\sandwich{s^{\otimes t}}{r(T)^{\otimes n}}{s^{\otimes t}}=1$ for all $T\in \Sigma_{t,t}$ and all $s\in\mathrm{Stab}(n)$ (compare Ref.\  \cite[Eq.~(4.10)]{gross2017schur}). Finally, in $*$ we have used the ``inverse Bernoulli inequality'' $(1+x)^r \leq e^{rx}$ which holds for all $x\in\R$ and $r\geq 0$. By assumption, the following holds
	\begin{myalign}
	 0 \geq t + \log_2(t) - n - 2 \quad \Rightarrow \quad  1 \geq t 2^{t-n-2} \geq (t-1)2^{t-n-2}.
	\end{myalign}
	Thus, we can use the inequality $e^x \leq 1+2x$ for $0\leq x\leq 1$ to obtain
	\begin{myalign}
	 \sum_{T'\in \Sigma_{t,t}} \obraket{Q_T}{Q_{T'}}^n & \leq 1 + (t-1) 2^{t-n-1} \\
	 & \leq 1 + t 2^{t-n}.
	\end{myalign}

\end{proof}

\begin{lemma}
 \label{lemma:Snormbound}
 Let $S_{\Cl}$ be the Clifford frame operator and $\Gamma$ the corresponding Gram matrix, \ie~$\Gamma_{T,T'}=\obraket{Q_T}{Q_T'}^n$. Then the following holds
 \begin{equation}
  \snorm{S_{\Cl}-P_{\Cl}} = \snorm{\Gamma-\one} \leq (-2^{-n};2)_{t-1} - 1 \leq t 2^{t-n},
 \end{equation}
 \newtext{where $(-2^{-n}; 2)_{t-1} = \prod_{r=0}^{t-2}(1+2^{r-n})$ and the last inequality holds for $n + 2 \geq t + \log_2(t)$.}
\end{lemma}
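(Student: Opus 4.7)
The plan is to reduce the superoperator norm of $S_{\Cl}-P_{\Cl}$ to the operator norm of the Gram matrix $\Gamma-\one$ using standard frame/dual-basis arguments, and then bound $\snorm{\Gamma-\one}$ by a row sum already computed in Lemma~\ref{lemma:overlaps}. The proof splits into three short steps, none of which require new estimates beyond what has been assembled.

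First, I would argue the equality $\snorm{S_{\Cl}-P_{\Cl}}=\snorm{\Gamma-\one}$ via the analysis operator. Let $A:\Cl(n)' \to \C^{|\Sigma_{t,t}|}$ be the map $\oket X \mapsto \big(\obraket{Q_T^{\otimes n}}{X}\big)_{T\in\Sigma_{t,t}}$. By Theorem~\ref{theorem:commutant}, the family $\{Q_T^{\otimes n}\}$ is linearly independent for $n\geq t-1$, so $A$ is injective. One checks directly that $A^\ast A = S_{\Cl}$ on $\Cl(n)'$ and $AA^\ast=\Gamma$, while $S_{\Cl}$ annihilates $(\Cl(n)')^\perp$ because every summand $\oketbra{Q_T^{\otimes n}}{Q_T^{\otimes n}}$ does. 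Hence on $\Cl(n)'$ the operators $S_{\Cl}$ and $\Gamma$ have identical spectra, and since $P_{\Cl}$ is the identity on $\Cl(n)'$ and zero on its orthogonal complement, the nonzero spectrum of $S_{\Cl}-P_{\Cl}$ equals that of $\Gamma-\one$.

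Second, I would bound $\snorm{\Gamma-\one}$ by a row sum. The matrix $\Gamma$ is real symmetric with nonnegative entries (Lemma~\ref{lemma:overlaps}), and the diagonal entries are $\Gamma_{T,T}=\obraket{Q_T}{Q_T}^n=1$ because $Q_T$ was normalized in~\eqref{eqn:psiT}. Thus $\Gamma-\one$ has vanishing diagonal and non-negative entries, so the (Hermitian) operator norm is at most the maximum absolute row sum:
\begin{equation}
\snorm{\Gamma-\one}\leq \max_{T}\sum_{T'\neq T}\obraket{Q_T}{Q_{T'}}^n.
\end{equation}
Applying Lemma~\ref{lemma:overlaps} row by row, the right-hand side equals $(-2^{-n};2)_{t-1}-1$, which under the hypothesis $n+2\geq t+\log_2 t$ is further bounded by $t\,2^{t-n}$ by the same lemma.

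The main (minor) subtlety is the first step: making sure one is careful that $S_{\Cl}$ really vanishes on the orthogonal complement of $\Cl(n)'$ so that $\snorm{S_{\Cl}-P_{\Cl}}$ is controlled solely by the restriction to $\Cl(n)'$. Everything else is a direct consequence of the already-stated Lemma~\ref{lemma:overlaps}, so the proof will be only a few lines.
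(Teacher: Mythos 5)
Your proof is correct and follows essentially the same route as the paper's: the paper works with the synthesis operator $V=\sum_T \oket{Q_T^{\otimes n}}\!\!\bra{e_T}$ (your analysis operator is just $V^\dagger$ restricted to $\Cl(n)'$), identifies $\Gamma=V^\dagger V$ and $S_{\Cl}|_{\Cl(n)'}=VV^\dagger$, notes both $S_{\Cl}$ and $P_{\Cl}$ vanish on $(\Cl(n)')^\perp$, and then bounds $\snorm{\Gamma-\one}$ by the maximal off-diagonal row sum evaluated via Lemma~\ref{lemma:overlaps}. No substantive difference.
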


\begin{proof}
 Define the \emph{synthesis operator} of the frame as the map 
 \begin{equation}
  V:\,\C^{|\Sigma_{t,t}|}\rightarrow \Cl(n)', \quad V=\sum_{T\in\Sigma_{t,t}} \oket{Q_T^{\otimes n}} \!\! \bra{e_T},
 \end{equation}
 where $e_T$ is the standard basis of the domain. Then, we have clearly $\Gamma=V^\dagger V$ and $S_{\Cl}|_{\Cl(n)'}=VV^\dagger$. Since $S_{\Cl}$ and $P_{\Cl}$ are both identically zero on $\left(\Cl(n)'\right)^{\perp}$, this part does not contribute to the spectral norm. From this it is clear that  
 \begin{equation}
  \snorm{S_{\Cl}-P_{\Cl}} = \snorm{\Gamma-\one}.
 \end{equation}
 Moreover, we can compute
 \begin{myalign}
  \snorm{\Gamma-\one} &= \snorm{\sum_T \sum_{T,T'} \obraket{Q_T}{Q_{T'}}^n  \ketbra{e_T}{e_{T'}}} \\
  &\leq \max_{T}\sum_{T'\neq T} \obraket{Q_T}{Q_{T'}}^n \\
  &= (-2^{-n};2)_{t-1} - 1,
 \end{myalign}
 where we have used that the spectral norm of Hermitian operators is bounded by the max-column norm and inserted the exact result of Lemma~\ref{lemma:overlaps} in the last step. Finally, said lemma provides the desired bound for $n+2\geq t+\log_2 t$.
\end{proof}

\subsection{Proof of Lemmas for Theorem \ref{thm:Tcountdesign}}
\label{section:haarsymmetrization}
\overlapnonclifford*
The proof of Lemma~\ref{lemma:overlapnonclifford} is based on two results.
The first states that the basis elements $r(T)$ of the commutant of tensor powers of the Clifford group either belong to the commutant of the powers of the unitary group, or else are far away from it.
\begin{restatable}[Haar symmetrization]{lemma}{haarsymmetrization}
\label{lemma:haarsymmetrization}
	For all $t$ and for all $T\in\Sigma_{t,t}\setminus S_t$, it holds that
	\begin{equation}
    \label{eqn:hilbert space bound}
    \osandwich{Q_T}{P_{\Haar}}{Q_T}=2^{-t} \TwoNorm{P_\mathrm{H}[r(T)]}^2 \,\leq\, \frac78,
	\end{equation}
  where $Q_T$ is as in Eq.~(\ref{eqn:psiT}) and $P_{\Haar}=\MO_t(\haar)$ is the
  $t$-th moment operator of the single-qubit unitary group $\U(2)$.
\end{restatable}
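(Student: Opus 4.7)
The plan is to use Schur--Weyl duality for $U(2)$ on $(\C^2)^{\otimes t}$ together with the explicit factorization $r(T) = 2^{\dim N}r(O)P_N$ from Theorem~\ref{theorem:r decomposition}. Since $P_{\Haar}$ is a self-adjoint idempotent with respect to the Hilbert--Schmidt inner product, one has $\osandwich{Q_T}{P_{\Haar}}{Q_T} = \|P_{\Haar}[Q_T]\|_2^2$, and this quantity is at most $\|Q_T\|_2^2 = 1$ with equality iff $Q_T$ lies in the range of $P_{\Haar}$. By classical Schur--Weyl duality, the range of $P_{\Haar}$ equals $\Comm(U(2)^{\otimes t}) = \spann\{r(\pi) : \pi \in S_t\}$, whereas $\{r(T)\}_{T\in\Sigma_{t,t}}$ spans the strictly larger Clifford commutant. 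Hence for $T \notin S_t$ the inequality is already strict, and what remains is to establish a uniform gap away from~$1$.

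To quantify the gap I would use the $U(2)$-isotypic decomposition $(\C^2)^{\otimes t} = \bigoplus_k V_k \otimes W_k$ with $V_k$ the spin-$(t/2-k)$ irrep of $SU(2)$ and $W_k$ its multiplicity space; Schur's lemma then gives the explicit formula
\begin{equation*}
P_{\Haar}[A] = \bigoplus_k \frac{I_{V_k}}{\dim V_k}\otimes \tr_{V_k}(\Pi_k A \Pi_k),
\end{equation*}
where $\Pi_k$ denotes the isotypic projector. Plugging in $r(T) = 2^{\dim N}r(O)P_N$, I would split into two cases: (a) $\dim N \geq 1$, where the CSS projector $P_N = |N|^{-2}\sum_{p,q\in N}Z(p)X(q)$ contains $|N|^2-1 \geq 3$ non-identity tensor-Pauli terms that are traceless and hence strictly deficient in Haar-averaged $L^2$-norm; and (b) $\dim N = 0$ but $O \in O_t \setminus S_t$, where the non-permutation unitary $r(O)$ mixes different multiplicity spaces $W_k$ and therefore has non-trivial off-diagonal Schur-blocks that are annihilated by $P_{\Haar}$.

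The main obstacle is extracting a clean uniform constant such as $\tfrac78$. A natural route is the Weingarten-type expansion
\begin{equation*}
\|P_{\Haar}[r(T)]\|_2^2 = \sum_{\pi,\sigma \in S_t} |T\cap \mathrm{graph}(\pi)|\,(G^+)_{\pi\sigma}\,|T\cap\mathrm{graph}(\sigma)|,
\end{equation*}
where $G_{\pi\sigma} = 2^{c(\pi^{-1}\sigma)}$ is the Gram matrix of tensor permutations on $(\C^2)^{\otimes t}$, combined with the codimension bound $|T \cap \mathrm{graph}(\pi)| \leq 2^{t-\dim N}$ forced by a non-trivial first-coordinate defect. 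Alternatively, one can verify the bound by direct computation in the minimal non-trivial case $t=4$, where the only non-permutation class has defect $N = \{0,\vec 1\}$ so that $P_N = \tfrac14(I + X^{\otimes 4}+Y^{\otimes 4}+Z^{\otimes 4})$; a sphere-average calculation $\int_{S^2}(n\cdot\vec\sigma)^{\otimes 4}dn$ then yields the explicit ratio $\tfrac{7}{10} < \tfrac{7}{8}$ there, and one can extend to general $t$ by a restriction argument onto a $4$-element subset of tensor factors exhibiting the defect structure, or by a monotonicity argument through the natural embeddings $\Sigma_{t,t}\hookrightarrow \Sigma_{t+1,t+1}$.
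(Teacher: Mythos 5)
Your proposal correctly identifies why the inequality is \emph{strict} for $T\notin S_t$ (the range of $P_{\Haar}$ is the permutation span by Schur--Weyl, and the $r(T)$ are linearly independent), and your case split into $\dim N\geq 1$ versus $O\in O_t\setminus S_t$ mirrors the structure of the actual argument. You even land on the correct value $7/10$ for the unique non-permutation class at $t=4$. However, the entire content of the lemma is the \emph{uniform} constant $7/8$, and none of your proposed routes to it is carried out. In case (a), ``the non-identity Pauli terms are traceless and hence strictly deficient'' is not a valid step: traceless tensor products of Paulis are generally \emph{not} annihilated by $P_{\Haar}$ (e.g.\ $Z\otimes Z$ overlaps the swap for $t=2$), and tracelessness gives no quantitative loss in any case. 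In case (b), ``mixes multiplicity spaces'' is again only qualitative. The Weingarten expansion you write down requires controlling the pseudo-inverse $G^{+}$ of the Gram matrix of permutation operators on $(\C^2)^{\otimes t}$, which is singular for $t>2$ and whose pseudo-inverse has no usable uniform bounds in this regime --- this is a well-known obstruction, not a routine computation. Finally, the proposed extension from $t=4$ to general $t$ by ``restriction onto a $4$-element subset of tensor factors'' or ``monotonicity through embeddings $\Sigma_{t,t}\hookrightarrow\Sigma_{t+1,t+1}$'' is unsubstantiated: $P_{\Haar}$ does not factorize over subsets of tensor copies, and the overlap is not obviously monotone under such embeddings (indeed, for the anti-identity it \emph{decreases} in $t$ as $\tfrac14(1+\tfrac{9}{t+1})$, but you would need monotonicity for all elements simultaneously, which you do not prove).

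The paper's proof takes a different and much more elementary route that you may find instructive: it upper-bounds $P_{\Haar}$ by the averaging projector $P_D$ over the \emph{diagonal} subgroup of $U(2)$, which kills exactly the matrix units $\ketbra{x}{y}$ with $h(x)\neq h(y)$. This converts the Hilbert-space estimate into the purely combinatorial statement $2^{-t}\TwoNorm{P_D[r(T)]}^2=\Pr_{(x,y)\in T}[h(x)=h(y)]$, which is then bounded by $7/8$ via a union bound over the pairs $(x,y)$ and $(x+n,y)$ (or $(y,y)$ and $(y+e_1,y)$ in the $O$-case) together with elementary central-binomial-coefficient estimates. To complete your proof you would need either to supply this kind of discrete argument or to genuinely overcome the Weingarten obstruction; as written, the uniform gap is missing.
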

The proof is given in Section~\ref{sec:r(T) bounds}.
In Appendix~\ref{sec:converse bounds}, we show that the constant $7/8$ cannot be improved below $7/10$, by exhibiting a $T$ that attains this bound.

The second ingredient to Lemma~\ref{lemma:overlapnonclifford} is a powerful theorem by Varj{\'u}~\cite{varju_walks_2013}.
Here, we specialize this theorem to the unitary group:
\begin{theorem}[{\cite[Thm.~6]{varju_walks_2013}}]
\label{theorem:varju}
	Let $\nu$ be a probability measure on $\U(d)$.
	Consider the averaging operator $T_{v}(\nu)$ on a irreducible representation $\pi_{v}:\,\U(d)\rightarrow \mathrm{End}(W_v)$
  parameterized by highest weight $v\in\Z^d$:
	\begin{equation}
	T_v(\nu) :=\int_{\U(d)} \pi_v(U) \, \d\nu(U).
	\end{equation}
	Then there are numbers $C(d)>0$ and $r_0>0$ such that
	\begin{equation}
	\Delta_r(\nu):=1-\max_{0<|v|\leq r}\snorm{T_v(\nu)} \geq C(d)\Delta_{r_0}(\nu)\log^{-2}(r),
	\end{equation}
	where $|v|^2=\sum_{i}v_i^2$.
\end{theorem}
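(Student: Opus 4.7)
The plan is to reduce control of $T_v(\nu)$ for arbitrary highest weights $v$ with $|v|\leq r$ to control at a single small scale $r_0$, paying only a polylogarithmic price. Two structural inputs guide the argument: every irreducible of $\U(d)$ sits inside a tensor power of the fundamental and anti-fundamental representations; and convolution on a compact Lie group admits a noncommutative $L^2$-flattening statement in the spirit of Bourgain--Gamburd.

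\textbf{Step 1: embed irreducibles into tensor powers.} First I would use Weyl's construction to realize $\pi_v$ as a subrepresentation of $(\C^d)^{\otimes a}\otimes((\C^d)^*)^{\otimes b}$ with $a+b=\LandauO(r)$, so that $T_v(\nu)$ is the restriction of the tensor-power averaging operator $R_{a,b}(\nu):=\int U^{\otimes a}\otimes\overline{U}^{\otimes b}\,\d\nu(U)$ to the $\pi_v$-isotypic component. Any spectral-gap bound for $R_{a,b}(\nu)$ on the ortho-complement of the trivial isotypic then transfers directly to $T_v(\nu)$, and simultaneously to \emph{every} weight $|v|\leq r$ once $a+b$ is chosen large enough.

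\textbf{Step 2: convolution bootstrap and flattening.} To control $R_{a,b}(\nu)$, I would iterate using $R_{a,b}(\nu)^k=R_{a,b}(\nu^{*k})$ and argue that the convolution powers $\nu^{*k}$ flatten rapidly. The heart of the argument is an $L^2$-flattening lemma on $\U(d)$: if $\nu^{*k}$ fails to equidistribute against representations of weight at most $r$, then it must concentrate near a proper closed subset of $\U(d)$; combined with the assumed non-degenerate gap at scale $r_0$ and the fact that no closed proper subgroup of $\U(d)$ contains an open neighborhood of the identity, this yields a contradiction. Tracking the number of convolution steps needed to traverse all scales from $r_0$ up to $r$ -- roughly $\log r$ of them, one per dyadic scale, with two flattening passes per step to convert $L^\infty$ into $L^2$ bounds and back -- produces the stated $\log^{-2}(r)$ loss.

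\textbf{Main obstacle.} The hardest ingredient is the quantitative flattening / product theorem on $\U(d)$: one must prove a noncommutative Balog--Szemer\'edi--Gowers statement showing that approximate subgroups of $\U(d)$ lie close to genuine closed subgroups, and then rule out escape into positive-dimensional centralizers, block-diagonal tori, and the like. This is much more delicate on a semisimple Lie group than on a free group, and is the part that genuinely depends on the work of Bourgain--Gamburd, Benoist--de Saxc\'e, and Varj\'u. Everything else -- passage between irreducibles and tensor powers via Weyl's formula, and estimating $a+b$ in terms of $|v|$ -- is representation-theoretic bookkeeping, and the dimensional constants $C(d),\,r_0$ emerge from the quantitative form of the flattening lemma together with a Solovay--Kitaev style approximation rate on $\U(d)$.
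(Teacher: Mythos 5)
First, a point of context: the paper does not prove this statement at all --- it is imported verbatim as \cite[Thm.~6]{varju_walks_2013}, so there is no internal proof to compare against. Judged on its own merits, your proposal does not work. Step~1 is a reformulation rather than a reduction: restricting $R_{a,b}(\nu)$ to the orthocomplement of the trivial isotypic component gives exactly $\max_{0<|w|\leq \LandauO(r)}\snorm{T_w(\nu)}$, i.e.\ it restates the definition of $\Delta_r(\nu)$ and buys nothing. All the work is then pushed into Step~2, and there the tool you reach for is the wrong one. Bourgain--Gamburd-style $L^2$-flattening and product/BSG theorems require structural hypotheses on the measure (algebraic entries, free subgroups, Diophantine separation) and are what one uses to establish the \emph{base-scale} gap $\Delta_{r_0}(\nu)>0$ for specific measures --- not to compare scales for an \emph{arbitrary} probability measure $\nu$, which is what the theorem asserts. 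For arbitrary $\nu$ (say, the Haar measure of a proper closed subgroup $H\leq\U(d)$), the convolution powers $\nu^{\ast k}$ do not flatten and never leave $H$; your ``contradiction with concentration near a proper closed subgroup'' simply does not arise. The theorem survives in that case only because $r_0$ can be chosen so large that \emph{every} proper closed subgroup of $\U(d)$ fixes a nontrivial vector in some irreducible representation of weight at most $r_0$, forcing $\Delta_{r_0}(\nu)=0$; this classification-type fact is precisely where the existence of $r_0$ is nontrivial, and your proposal does not engage with it. A flattening argument would also produce bounds depending on Diophantine data of $\supp\nu$ rather than the clean linear dependence on $\Delta_{r_0}(\nu)$ that the statement requires.

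The actual mechanism in Varj\'u's proof is a representation-theoretic induction on the highest weight, with no product theorem anywhere: an irreducible $\pi_v$ with $|v|\approx r$ embeds into $\pi_{v_1}\otimes\pi_{v_2}$ with $|v_1|,|v_2|\lesssim r/2$, and an elementary Hilbert-space lemma about almost-invariant vectors in tensor products converts $\snorm{T_v(\nu)}$ being close to $1$ into some $\snorm{T_w(\nu)}$ with $|w|\lesssim r/2$ being close to $1$, with a controlled loss at each step. Summing the losses over the $\approx\log r$ doubling steps from $r$ down to $r_0$ is what produces the $\log^{-2}(r)$ factor; your accounting (``two flattening passes per dyadic scale'') is not derived from any inequality and happens to land on the right exponent by coincidence. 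If you want to reconstruct the argument, the piece to supply is the quantitative tensor-product/almost-invariant-vector lemma and the telescoping recursion for $\Delta_{2^{k}r_0}(\nu)$, not a noncommutative Balog--Szemer\'edi--Gowers theorem.
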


\begin{proof}[Proof of Lemma~\ref{lemma:overlapnonclifford}]
	Consider the probability measure $\xi_K$ that draws uniformly from the set $\{K, K^{\dagger}, \one\}$.
	Moreover, define $\nu_K$ on $\U(2)$ as the average of the uniform measure on $\{H, S, S^3\}$ 	and $\xi_K*\xi_K$.
	Hence, the according moment operator is
	\begin{myalign}
	\MO_t(\nu_K):=&\frac16 (\Ad_H^{\otimes t}+\Ad_S^{\otimes t}+(\Ad_S^3)^{\otimes t})+\frac12 \MO_{t}(\xi_K*\xi_K)\\
	=& \frac16 (\Ad_H^{\otimes t}+\Ad_S^{\otimes t}+(\Ad_S^3)^{\otimes t})+\frac12 \MO_t(\xi_K)^2.
	\end{myalign}
	As the Clifford group augmented with any non-Clifford gate is universal~\cite[Thm.~6.5]{nebe_clifford_2001}, so is the probability measure $\nu_K$.

	It follows from the representation theory of the unitary group (see App.~\ref{section:irrepunitarygroup})
  that the representation $U\mapsto \Ad_U^{\otimes t}$ does not contain irreducible representations $W_v$ with
  highest weight of length $|v|>\sqrt{2}t$.
	Thus, we can decompose into these irreducible representations as follows:
	\begin{myalign}
	 \snorm{\MO_t(\nu_K) - P_{\Haar}} &= \snorm{ \bigoplus_{|v|\leq \sqrt{2}t} \left( T_v(\nu_K) - T_v(\haar)\right) \otimes \id_{m_v} } \\
	 &\leq \snorm{ \bigoplus_{0<|v|\leq \sqrt{2}t}  T_v(\nu_K) } \\
	 &= \max_{0<|v|\leq \sqrt{2}t} \snorm{T_v(\nu_K)} \\
	 &= 1 - \Delta_{\sqrt{2}t}(\nu_K).
	\end{myalign}
    Here, $m_v$ denotes the multiplicity of the irreducible representation $W_v$ (possibly zero).
    In the second step we have used that $P_{\Haar}$ has only support on the trivial irreducible representation  $v=0$, where both $P_{\Haar}$ and $\MO_t(\nu_K)$ act as identity and thus cancel.
    Hence, only non-trivial irreducible representations are contributing.
    To bound $\Delta_{\sqrt{2}t}(\nu_K)$, we can invoke Theorem~\ref{theorem:varju} combined with the
    fact that for any universal probability measure the restricted gap is non-zero:
    $\Delta_{r}(\nu_K)>0$ for all $r\geq 1$ (compare e.g. Ref.~\cite{harrow_random_2009}).
	Hence, we obtain
	\begin{equation}
	\Delta_{\sqrt{2}t}(\nu_K)\geq C(2)\Delta_{r_0}(\nu_K)\log^{-2}(\sqrt{2}t)\geq \frac14C(2)\Delta_{r_0}(\nu_K)\log^{-2}(t)=:c'(K)\log^{-2}(t) > 0,
	\end{equation}
	where $c(K)>0$.
	Therefore, we have
	\begin{equation}\label{eq:varjuapplied}
	\snorm{\MO_t(\nu_K) - P_{\Haar}}  \leq 1-\Delta_{\sqrt{2}t}(\nu_K)\leq 1-c'(K)\log^{-2}(t)=:\kappa_{t,K},
	\end{equation}
	Furthermore, consider the operator
	\begin{equation}
    X_T:=\frac{(\id-P_{\Haar})Q_T}{\norm{(\id-P_{\Haar})Q_T}_2}.
	\end{equation}
	We obtain
	\begin{myalign}
	\snorm{\MO_t(\nu_K) - P_{\Haar}} &= \max_{\|X\|_2=1}\left| \osandwich{X}{\MO_t(\nu_K) -P_{\Haar}}{X} \right| \\
	&\geq \frac{\left| \osandwich{X_T}{\MO_t(\nu_K) -P_{\Haar}}{X_T} \right|}{\|X_T\|^2_2}\\
	&=\frac{\left| \osandwich{Q_T}{(\id-P_{\Haar})\MO_t(\nu_K) (\id-P_{\Haar})}{Q_T} \right|}{\osandwich{Q_T}{(\id-P_{\Haar})^2}{Q_T}}\\
	&=\frac{|\osandwich{Q_T}{\MO_t(\nu_K)}{Q_T} - \osandwich{Q_T}{P_{\Haar}}{Q_T}|}{1-\osandwich{Q_T}{P_{\Haar}}{Q_T}} \\
	&\geq \frac{\osandwich{Q_T}{\MO_t(\nu_K)}{Q_T} - \osandwich{Q_T}{P_{\Haar}}{Q_T}}{1-\osandwich{Q_T}{P_{\Haar}}{Q_T}}.
	\end{myalign}
	In the fourth step, we again used the properties of the Haar projector as in Eq.~\eqref{eq:relationPHaar}.
	Combining this with~\eqref{eq:varjuapplied} and Lemma~\ref{lemma:haarsymmetrization} we obtain
	\begin{equation}
	\osandwich{Q_T}{\MO_t(\nu_K)}{Q_T} \leq \kappa_{t,K}+(1-\kappa_{t,K})\osandwich{Q_T}{P_{\Haar}}{Q_T} \leq 1-\frac{1}{8}c'(K)\log^{-2}(t).
	\end{equation}
	We can use that $\osandwich{Q_T}{\Ad_S^{\otimes t}}{Q_T}=\osandwich{Q_T}{\Ad_{S^3}^{\otimes t}}{Q_T}=\osandwich{Q_T}{\Ad_H^{\otimes t}}{Q_T}=1$ for all $T\in \Sigma_{t,t}$ because $Q_T=2^{-t/2}r(T)$ commutes with the $t$-th diagonal action of the single-qubit Clifford group (compare~\cite[Lem.~4.5]{gross2017schur}).
	We immediately obtain
	\begin{equation}
	\osandwich{Q_T}{\MO_t(\xi_K)^2}{Q_T}\leq 1-\frac{1}{4}c'(K)\log^{-2}(t).
	\end{equation}
	From the Cauchy-Schwarz inequality, we now get
	\begin{myalign}
	\left|\osandwich{Q_T}{\MO_t(\xi_K)}{Q_{T'}}\right|
	&\leq \sqrt{\osandwich{Q_T}{\MO_t(\xi_K)^2}{Q_T}}\\
	&\leq \sqrt{1-\frac{1}{4}c'(K)\log^{-2}(t)}\\
	&\leq 1-\frac18 c'(K)\log^{-2}(t)\\
    &=: 1- c(K)\log^{-2}(t),
	\end{myalign}
	where we have used that $c'(K)\log^{-2}(t) \leq \Delta_{\sqrt{2}t}(\nu_K)\leq 1$ such that we can use the inequality $\sqrt{1-x}\leq 1-x/2$ for $x\leq 1$. This shows the claimed statement.
\end{proof}

\begin{remark}[Quantum gates with algebraic entries]
	If we restrict to gates $K$ that have only algebraic entries, we can apply the result from Ref.~\cite{bourgain_spectral_2011} and save the additional overhead of $\log^2(t)$ in the scaling.
	This applies to the $T$-gate and for essentially all gates that might be used in practical implementations.
	Here, we have chosen the more general approach.
\end{remark}
\begin{remark}[Implications for quantum information processing]
	Theorem~\ref{theorem:varju} has miscellaneous implications for quantum information processing.
	E.g. we can immediately combine this bound with the local-to-global lemma in Ref.~\cite[Lem.~16]{onorati_mixing_2017} to extend Ref.~\cite[Cor.~7]{brandao_local_2016} to gate sets with non-algebraic entries at the cost of an additional overhead of $\log^2(t)$ in the scaling.
	The bottleneck to loosen the invertibility assumption as well is the local-to-global lemma which only works for Hermitian moment operators (symmetric distributions).
	Work to lessen the assumption of invertibility has been done in Ref.~\cite{mezher2019}.
	Extending this would be an interesting application which we, however, do not pursue in this work.
\end{remark}

\gramschmidt*
\begin{proof}
	The form of~\eqref{eq:gramschmidt} is up to a constant the determinant formulation of the Gram-Schmidt procedure.
	First, note that the number of permutations of $n$ elements with no fixed points is known from Ref.~\cite{montmort_permutation_1713} to be
	\begin{equation}
	D(n)=n!\sum_{r=0}^n\frac{(-1)^r}{r!}\leq 2\frac{n!}{e}
	\end{equation}
	for $n\geq 1$.
	Here, $D$ stands for ``derangement'' as permutations without fixed points are sometimes called.
	Then, the number of permutations having exactly $k$ fixed points is ${n\choose k}$ many choices of $k$ points times the number $D(n-k)$ of deranged permutations on the remaining $n-k$ objects:
	\begin{equation}
	p(n,k):={n\choose k}D(n-k)\leq 2e^{-1}\frac{n!}{k!}.
	\end{equation}
	The following estimate for certain sums involving $p(n,k)$ will shortly become useful.
	 Note that we have for any $M,L\in\N$ and $m\in\R$ such that $2^m>M-L$ and $M\geq L\geq 1$:
	\begin{multline}
	\label{eq:fixed-point-approx}
	 \sum_{k=0}^{M-L} p(M,k) 2^{-m(M-k)} \leq \frac 2 e \sum_{k=0}^{M-L} 2^{-mM} M! \frac{2^{mk}}{k!} \\
	 \leq \frac 2 e  2^{-mM} (M-L+1) M!\frac{2^{m(M-L)}}{(M-L)!} \leq M^{L+1} 2^{-mL}.
	\end{multline}
	Here, we have used in the second inequality that $2^{mk}/k!$ is monotonically increasing for $k\leq M-L < 2^m$ and a standard bound on binomial coefficients in the last step.

	We start by bounding the diagonal coefficients $A_{j,j}$. The idea is to divide the set of permutations into sets of permutations with exactly $k$ fixed points.
	For any such permutation, the product of overlaps collapses to only $j-1-k$ non-trivial inner products. By assumption $n\geq\frac12(t^2+5t)\geq t+\log_2 t$, thus we can be bound any of those using Lemma~\ref{lemma:overlaps} as
    \begin{equation}\label{eq:recalloverlap}
		\obraket{Q_T}{Q_{T'}}^n \leq t2^{t-n}, \quad \text{ for all } T\neq T'.
    \end{equation}
    Note that the trivial permutation (corresponding to $k=j-1$ fixed points) contributes by exactly 1 to the sum. Thus, we find the following bound using Eq.~\eqref{eq:fixed-point-approx} with $M=j-1$, $L=1$ and $m=n-t-\log_2 t$:
    
	\begin{myalign}
	\label{eq:gram-diagonal}
	 A_{j,j}=|A_{j,j}| &\leq \sum_{\pi\in S_{j-1}} \prod_{l=1}^{j-1} \obraket{Q_l}{Q_{\pi(l)}}^n \\
	 &\leq 1 + \sum_{k=0}^{j-2} p(j-1,k) 2^{-(n-t-\log_2 t)(j-1-k)} \\
	 &\leq 1 + (j-1)^2 \, 2^{-n+t+\log_2 t} \\
	 &< 1 + 2^{t^2 + 7t - n},
	\end{myalign}
	where we have used Eq.~\eqref{eq:boundnumberlagrangian} in the last step as $j-1 < j \leq |\Sigma_{t,t}|\leq 2^{\frac 1 2 (t^2+5t)}$. Using the reverse triangle inequality, we get a lower bound in the same way:
	
	\begin{equation}
	 A_{j,j}=|A_{j,j}| 
	 \geq 1 - \left| \sum_{\pi\in S_{j-1}\setminus\id}\sign(\pi) \prod_{l=1}^{j-1} \obraket{Q_l}{Q_{\pi(l)}}^n\right| 
	 \geq 1 -  2^{t^2 + 7t - n}.
	\end{equation}

    Next, we will bound the off-diagonal terms $A_{i,j}$. It is well known that every permutation $\Pi\in S_j$ can be written as a product of disjoint cycles. Given a $\Pi \in S_j$ with $\Pi(j)=i$, consider the cycle $j\mapsto i\mapsto i_1\mapsto i_2\mapsto \dots i_r\mapsto j$ in $\Pi$.
	Then, we have the bound
	\begin{myalign}
	\label{eq:cycle-bound}
	\prod_{l=1}^{j-1}\obraket{Q_{T_l}}{Q_{T_{\Pi(l)}}}^n
	&\leq \obraket{Q_{T_i}}{Q_{T_{i_1}}}^n \dots \obraket{Q_{T_{i_r}}}{Q_{T_j}}^n\\
	&\leq 2^{-n(|\dim N_i-\dim N_{i_1}|+\dots |\dim N_{i_r}-\dim N_{j}|)}\\
	&\leq 2^{-n|\dim N_i-\dim N_{j}|},
	\end{myalign}
	where we have used Lemma~\ref{lemma:diamondbounds}, the triangle inequality and a telescope sum.
    We set $L:=|\dim N_i-\dim N_{j}|$ and split the sum over permutations into those with more than or equal to $j-L$ many fixed points and those with less.
	In the first case, we use Eq.~\eqref{eq:cycle-bound} to bound the overlaps, in the second case we use Eq.~\eqref{eq:fixed-point-approx} as before. This yields the following bound
	\begin{myalign}
     |A_{i,j}|& \leq \sum_{\substack{\Pi\in S_j\\\Pi(j)=i}}\prod_{l=1}^{j-1}\obraket{Q_{T_l}}{Q_{T_{\Pi(l)}}}^n\\
	 &\leq \sum_{k=j-L}^{j-1} p(j,k) 2^{-nL}  +  \sum_{k=0}^{j-L-1} p(j,k) 2^{-(n-t-\log_2t)(j-1-k)}\\
	 &\leq \frac 2 e \sum_{k=j-L}^{j-1}\frac{j!}{k!} 2^{-nL} +  2^{n-t-\log_2t} j^{L+2} \, 2^{-(n-t-\log_2t)(L+1)} \\
	 &\leq L \frac{j!}{(j-L)!} 2^{-nL} + j^{L+2} \, 2^{-(n-t-\log_2t)L} \\
	 &\leq L j^L 2^{-nL} + j^{L+2} \, 2^{-(n-t-\log_2t)L} \\
	 &\leq L |\Sigma_{t,t}|^{L+2} \, 2^{-(n-t-\log_2t)L}  \\
	 &\leq 2^{\log_2 L }2^{\frac 1 2 (t^2+5t)(L+2)} \, 2^{(t +\log_2t-n)L} \\
	 &= 2^{t^2+5t} 2^{(\frac 1 2 t^2 + \frac 5 2 t + t +\log_2t-n)L} \\
	 &\leq 2^{\frac 1 4 t^3 + \frac{11}{4} t^2 + 5t + (\frac t 2 + 1)\log_2 t - nL} \\
	 &\leq 2^{t^3 + 4 t^2 + 6t - n|\dim N_i-\dim N_j|},
	\end{myalign}
	where we have used again $j\leq |\Sigma_{t,t}|$ and $L\leq t/2$.

	Note that we can alternatively bound $A_{i,j}$ for $i \neq j$ using that the identity is not an allowed permutation, \ie~only permutations with less than $j-2$ fixed points can appear. With Eq.~\eqref{eq:fixed-point-approx} and~\eqref{eq:recalloverlap},  we get the following inequality
	\begin{myalign}
	\label{eq:gram-offdiagonal}
      |A_{i,j}|&\leq \sum_{k=0}^{j-2} p(j,k) 2^{-(n-t-\log_2t)(j-1-k)}\\
	 &\leq  j^3 2^{-(n-t-\log_2t)} \\
	 &\leq 2^{\frac 3 2 t^2 + \frac{15}{2} t + t + \log_2 t - n} \\
	 &\leq 2^{2t^2 + 10 t - n}.
	\end{myalign}
\end{proof}

\subsection{Proof of Haar symmetrization Lemma~\ref{lemma:haarsymmetrization}}
\label{sec:r(T) bounds}

\haarsymmetrization*

For an analysis of the tightness of the bound, see Appendix~\ref{sec:converse bounds}.
Recall that
\begin{equation}
 P_\mathrm{H}[A] := \int_{U(2)} U^{\otimes t} A (U^\dagger)^{\otimes t} \newtext{\mathrm{d}}\haar(U).
\end{equation}
Let $P_D$ be the Haar averaging operator, restricted to the diagonal unitaries.
As it averages over a subgroup, $P_D$ is a projection with range a super-set of $P_\mathrm{H}$.
By applying $P_D$ to $r(T)$, we can turn the statement (\ref{eqn:hilbert space bound}) from one involving \emph{Hilbert space} geometry to one about the \emph{discrete} geometry of stochastic Lagrangians.
Indeed,
\begin{myalign}
  2^{-t} \TwoNorm{P_\mathrm{H}[r(T)]}^2
  &=
  2^{-t} \TwoNorm{P_\mathrm{H}[P_D[r(T)]]}^2 \nonumber \\
  &\leq
  2^{-t} \TwoNorm{P_D[r(T)]}^2 \label{eqn:diagonal haar bound} \\
  &=
  2^{-t}  \Big(r(T), P_D[r(T)]\Big) \nonumber \\
  &=
  2^{-t}
  \sum_{(x,y)\in T}
  \sum_{(x',y')\in T}
  \big(
    \ketbra{x}{y},
	P_D[\ketbra{x'}{y'}]
   \big) \nonumber \\
  &=
  2^{-t}
  \sum_{(x,y)\in T}
  \sum_{(x',y')\in T}
  \big(
    \ketbra{x}{y},
	\int_0^{2\pi}
	e^{i 2\phi (h(x') - h(y'))}
	\ketbra{x'}{y'}
	\mathrm{d}\,\phi
   \big) \nonumber \\
   &=
   2^{-t} |\{ (x,y) \in T \,|\, h(x)=h(y) \}|  \nonumber \\
   &=\operatorname{Pr}_{(x,y)}[  h(x) = h(y) ], \nonumber
\end{myalign}
i.e., the overlap is upper-bounded by the probability that a uniformly sampled element $(x,y)$ of $T$ has components of equal Hamming weight.

We will bound the probability in slightly different ways for spaces $T$ with trivial \newtext{(i.e., zero-dimensional)} and non-trivial defect spaces.

\paragraph{Case I: trivial defect sub-spaces}

In this case, $T=\{ (Oy, y) \,|\, y\in\FF_2^t\}$ for some orthogonal stochastic matrix $O$.
The next proposition treats a slightly more general situation.

\begin{proposition}[Hamming bound]\label{prop:hamming bound}
  Let $O\in \mathrm{GL}(\FF_2^t)$.
  Assume $O$ has a column of Hamming weight $r$.
  Then the probability that $O$ preserves the Hamming weight of a vector $y$ chosen
  uniformly at random from $\FF_2^t$ satisfies the bound
  \begin{myalign}\label{eqn:hamming bound}
	\mathrm{Pr}_y[ h(Oy) = h(y) ]
	\leq
	\frac12
	+
	\left\{
	  \begin{array}{ll}
    2^{-(r+1)}
		{{r+1}\choose{(r+1)/2}}\quad
		&r \text{ odd} \\
		0&r\text{ even}.
	  \end{array}
	  \right.
  \end{myalign}
\end{proposition}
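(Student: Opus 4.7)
The plan is to exploit the fact that adding $e_i$ to $y$ (where column $i$ of $O$ has weight $r$) only flips bits of $Oy$ in the controlled set $S := \supp(Oe_i)$. Writing $v := O e_i$ and decomposing $y = y_i e_i + y^0$ with $y^0 \in \{y : y_i = 0\}$, set $u := O y^0$ so that $Oy = y_i v + u$ in $\FF_2^t$. A direct bit-by-bit count gives
\begin{equation*}
h(Oy) - h(y)
= \begin{cases}
h(u) - h(y^0), & y_i = 0,\\
h(u) - h(y^0) + (r-1) - 2\, h_S(u), & y_i = 1,
\end{cases}
\end{equation*}
where $h_S(u) := \sum_{k \in S} u_k$. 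For each fixed $y^0$, the event $\{h(Oy) = h(y)\}$ is thus a condition on $y_i \in \{0,1\}$, and the two candidate conditions differ by the integer $(r-1) - 2\, h_S(u)$.

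If $r$ is even this offset is odd, so the two conditions have opposite parities and at most one choice of $y_i$ can satisfy them per $y^0$, yielding $\Pr_y[h(Oy) = h(y)] \leq 1/2$ immediately. If $r$ is odd they have equal parities and both can hold, but only when $h_S(u) = (r-1)/2$. Denoting by $A_b \subseteq \FF_2^{t-1}$ the set of $y^0$ for which $y_i = b$ satisfies the constraint, inclusion--exclusion combined with $|A_0 \cup A_1| \leq 2^{t-1}$ yields
\begin{equation*}
|\{y : h(Oy) = h(y)\}| = |A_0| + |A_1| \leq 2^{t-1} + |A_0 \cap A_1| \leq 2^{t-1} + |\{y_{-i} : h_S(O y^0) = (r-1)/2\}|.
\end{equation*}

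It remains to bound this correction via a rank count. The map $\phi \colon y_{-i} \mapsto \pi_S(O y^0)$ is linear with matrix $\tilde O$ equal to the $r \times (t-1)$ submatrix of $O$ obtained by keeping the rows indexed by $S$ and discarding column $i$. Invertibility of $O$ forces the rows of $O$ indexed by $S$ to be linearly independent in $\FF_2^t$, so projecting them onto the coordinates $\neq i$ can reduce their rank by at most one; hence $\rank(\tilde O) \geq r - 1$ and every fibre of $\phi$ has size at most $2^{t-r}$. Multiplying by the number $\binom{r}{(r-1)/2}$ of weight-$(r-1)/2$ vectors in $\FF_2^r$ and dividing by $2^t$ gives a correction of at most $\binom{r}{(r-1)/2}/2^r$, which equals $2^{-(r+1)} \binom{r+1}{(r+1)/2}$ by the Pascal identity $\binom{r+1}{(r+1)/2} = 2\binom{r}{(r-1)/2}$, valid for odd $r$. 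The main obstacle I expect is precisely this rank-deficit bound: without the invertibility of $O$ forcing the rank to drop by at most one, the trivial fibre bound would be too weak by a factor of two, exactly enough to destroy the correct constant in the final Pascal identity.
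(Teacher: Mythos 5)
Your proof is correct, and it rests on the same core observation as the paper's: adding $e_i$ to $y$ changes $h(y)$ by exactly $\pm 1$ while changing $h(Oy)$ by $r-2h_S(Oy)$ where $S=\supp(Oe_i)$, so both members of a pair $\{y,\,y+e_i\}$ can satisfy $h(Oy)=h(y)$ only when the restriction of $Oy$ to $S$ has weight $(r\mp1)/2$ (impossible for even $r$, giving the bound $1/2$ outright). Where you diverge is in estimating the size of this exceptional event. The paper works with $y$ uniform on all of $\FF_2^t$ and uses invertibility of $O$ only through the fact that $Oy$ is then uniform, so its restriction to $S$ is exactly uniform on $\FF_2^r$ and the correction term $2^{-(r+1)}\binom{r+1}{(r+1)/2}$ drops out as an exact binomial probability. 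You instead condition on the complementary coordinates $y^0$, which confines you to a hyperplane on which $Oy^0$ is no longer uniform, and you recover the same count via the rank bound $\rank(\tilde O)\geq r-1$ for the $r\times(t-1)$ submatrix, a fibre-size estimate, and the Pascal identity. Both routes use invertibility of $O$ in an essential way; the obstacle you flag (the rank-deficit bound) is real in your formulation but is sidestepped entirely in the paper's, which is the slicker way to extract the constant. Your version has the mild virtue of being an exact counting identity ($|A_0|+|A_1|=|A_0\cup A_1|+|A_0\cap A_1|$) rather than a union bound, but the final inequality is the same.
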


The bound in Eq.~(\ref{eqn:hamming bound}) decreases monotonically in $r$.
Orthogonal stochastic matrices $O$ satisfy $r=1\mod 4$, so the smallest non-trivial $r$
that can appear is $r=5$, for which the bound gives $.81$.

The proof idea is as follows:
For each $y\in\FF_2^t$, the two vectors $y, y+e_1$ differ in Hamming weight by $\pm 1$.
But, if $h(e_1)\neq 1$, then $h(Oy)-h(O(y+e_1))$ tends not to be $\pm 1$.
In such cases, $O$ does not preserve weights for \emph{both} $y$ and $y+e_1$.
Applying this observation to randomly chosen vectors, we can show the existence of many vectors for which $O$ changes the Hamming weight.

\begin{proof}[Proof (of Proposition~\ref{prop:hamming bound})]
  Assume without loss of generality that the first $r$ entries of $O e_1$ are $1$, and the remaing $t-r$ entries are $0$.

  Let $y$ be a uniformly distributed random vector on $\FF_2^t$, notice that also $Oy$,
  and $O(y+e_1)$ are uniformly distributed. Using the union bound, we find that
  \begin{myalign}
	\mathrm{Pr}[h(Oy)=h(y)]
	&=
	1- \mathrm{Pr}[h(Oy)\neq h(y)] \nonumber \\
	&=
	1- \frac12\big(
	  \mathrm{Pr}[h(Oy)\neq h(y)]
	  +
	  \mathrm{Pr}[h(Oy+Oe_1)\neq h(y+e_1) ]
	  \big)
	\nonumber \\
	&\leq
	1- \frac12 \mathrm{Pr}[h(Oy)\neq h(y) \,\vee\, h(Oy+Oe_1)\neq h(y+e_1) ]\nonumber  \\
	&=
	\frac12 + \frac12 \mathrm{Pr}[h(Oy)= h(y) \,\wedge\, h(Oy+Oe_1)= h(y+e_1) ]\nonumber  \\
	&\leq
	\frac12 + \frac12 \mathrm{Pr}[ h(Oy)-h(Oy+Oe_1) = \pm 1]. \label{eqn:union bound for hamming weight}
  \end{myalign}
  We would like to compute $\mathrm{Pr}[ h(Oy)-h(O(y+e_1)) = \pm 1]$. The vector $O(y+e_1)=O(y)+O(e_1)$
  arises from $O(y)$ by flipping the first $r$ components.
  This operation changes the Hamming weight by $\pm 1$ if and only if the number of ones
  in the first $r$ components of $O(y)$ equals $(r\pm 1)/2$. For even $r$, this condition
  cannot be met, and correspondingly $\mathrm{Pr}[ h(Oy)-h(O(y+e_1)) = \pm 1]=0$.

  In case of odd $r$, this probability becomes
  \begin{myalign}
	\mathrm{Pr}[ h(Oy)-h(O(y+e_1))
  &=
  \pm 1]= 2^{-r} {{r}\choose{(r-1)/2}}
	+
	2^{-r} {{r}\choose{(r+1)/2}}\\[5 pt]
	&=
	2^{-r}
	{{r+1}\choose{(r+1)/2}}.\label{eq:hamming-binomials}
\end{myalign}

\end{proof}

\paragraph{Case II: non-trivial defect sub-spaces}

We now turn to Lagrangians $T$ with a non-trivial defect subspace.

\begin{proposition}[Defect Hamming bound]\label{prop:defect hamming bound}
  Let $\{0\}\neq N\subset \FF_2^t$ be isotropic. 
  There exists an $n\in N$ such that
  if
  $x$ is
  chosen uniformly at random from $N^\perp$, then
  \begin{align*}
	\mathrm{Pr}_{x\in N^\perp}[ h(x) = h(x+n) ] \leq \frac34.
  \end{align*}
  What is more, let $T$ be a stochastic Lagrangian with non-trivial defect sub-spaces.
  Then, for an element $(x,y)$ drawn uniformly from $T$, we have
  \begin{align*}
   \operatorname{Pr}_{(x,y)\in T}[  h(x) = h(y) ] \leq \frac78.
  \end{align*}
\end{proposition}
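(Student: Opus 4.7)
The plan is to prove the two statements separately, with the second bootstrapping from the first via a pairing argument. For the first statement I would pick $n \in N$ to be a non-zero element of minimum Hamming weight; $\mathbb{Z}_4$-isotropy of $N$ forces $h(n) \in 4\mathbb{Z}_{\geq 1}$. Using the identity $h(x+n) = h(x) + h(n) - 2|\supp(x) \cap \supp(n)|$, the event $h(x) = h(x+n)$ is equivalent to $|\supp(x) \cap \supp(n)| = h(n)/2$, which depends only on the restriction $\phi(x) \in \FF_2^{h(n)}$ of $x$ to $\supp(n)$. Since $\phi|_{N^\perp}$ is linear with constant fiber size, $\phi(x)$ is uniform on $V := \phi(N^\perp)$.

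The key identification would be $V = N_0^\perp$, where $N_0 := \{m \in N : \supp(m) \subseteq \supp(n)\}$ is viewed as a subspace of $\FF_2^{h(n)}$. Minimality of $n$ forces $N_0 = \{0, n\}$; since $n$ restricts to the all-ones vector, $V$ is exactly the even-weight subspace of $\FF_2^{h(n)}$. The desired probability is then $\binom{h(n)}{h(n)/2}\big/2^{h(n)-1}$, which equals $3/4$ at $h(n)=4$ and decreases monotonically for larger $h(n) \in 4\mathbb{Z}$.

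For the second statement, let $N_L := \{x : (x,0) \in T\}$ and $N_R := \{y : (0,y) \in T\}$ be the two defect subspaces of $T$; both are non-trivial and isotropic when $T$ has non-trivial defect. Counting fibers of $(x,y)\mapsto x$ shows that the first marginal of the uniform distribution on $T$ is uniform on $N_L^\perp$. Applying the first part to $N_L$ yields some $n \in N_L$ with $\operatorname{Pr}_{x \in N_L^\perp}[h(x) = h(x+n)] \leq 3/4$. Since $(n,0) \in T$, the map $(x,y) \mapsto (x+n,y)$ is a fixed-point-free involution on $T$. For each pair $\{(x,y),(x+n,y)\}$, the elementary inequality
\begin{equation*}
\one[h(x) = h(y)] + \one[h(x+n) = h(y)] \,\leq\, 1 + \one[h(x) = h(x+n)]
\end{equation*}
holds: if both left indicators are $1$ then $h(x) = h(x+n)$, otherwise it is trivial. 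Averaging over pairs and substituting the marginal bound yields $\operatorname{Pr}_{(x,y)\in T}[h(x) = h(y)] \leq \tfrac{1}{2}(1 + 3/4) = 7/8$.

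The main obstacle I anticipate is the identification $V = N_0^\perp$ in the first part: one direction is immediate from $x \cdot m = 0$ for $m \in N_0$, while the reverse requires lifting a given $y \in N_0^\perp$ to an $x \in N^\perp$, which reduces to solving a consistent linear system whose solvability is exactly the orthogonality hypothesis on $y$. A smaller subtlety is the marginal calculation on $T$, which should follow from the parametrization of stochastic Lagrangians by $(N_L, N_R, O)$ and the fact that each $x \in N_L^\perp$ has a fiber of constant size $|N_R|$.
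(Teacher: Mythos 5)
Your proof is correct, and the second half (the pairing/involution argument combined with the marginal uniformity of $x$ on $N_L^\perp$) is essentially identical to the paper's union-bound step, which likewise relies on the facts that the first projection of $T$ is $N^\perp$ and that $T \supseteq N\oplus 0$. The first half, however, takes a genuinely different route. The paper fixes a standard-form generator matrix $\Gamma = \binom{G}{\one_d}$ for $N$, chooses $n=(Ge_1,e_1)$ from a column of $G$ of weight $r\neq 1$, exploits that the first $t-d$ coordinates of a uniform $x\in N^\perp$ are i.i.d.\ uniform, and lands on the bound $2^{-r}\binom{r+1}{(r+1)/2}\leq 3/4$; it must then separately dispose of the case where every column of $G$ has weight one. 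You instead pick $n$ of minimal Hamming weight and identify the \emph{exact} law of $x|_{\supp(n)}$: by the puncturing/shortening duality $\phi_S(N^\perp)=N_0^\perp$ with $N_0=\{0,n\}$ (minimality kills everything else), this is uniform on the even-weight subspace of $\FF_2^{h(n)}$, giving the closed form $\binom{h(n)}{h(n)/2}2^{1-h(n)}$. This is basis-free, avoids the case split on column weights, and is sharp --- it reproduces the appendix's tightness example $N=\{0,(1,1,1,1)\}$ exactly at $3/4$ --- at the price of invoking the coding-theoretic duality, which you correctly identify as the one step needing care.

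One small caveat: the proposition assumes only that $N$ is isotropic ($N\subseteq N^\perp$), while your opening line invokes $\Z_4$-isotropy to conclude $h(n)\in 4\Z$. For the application this is harmless, since the $N$ arising in the second statement is a defect subspace and hence $q$-isotropic; and in any case your argument survives without the stronger hypothesis, because if $h(n)\equiv 2 \pmod 4$ the target weight $h(n)/2$ is odd, lies outside the even-weight subspace, and the probability is $0$. You should state this explicitly to cover the proposition as written (the paper handles the analogous case separately, noting it is vacuous for genuine defect subspaces).
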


\begin{proof}
  Let $d=\dim N$.
  Consider a $t \times d$ column-generator matrix $\Gamma$ for $N$.
  Permuting coordinates of $\FF_2^t$ and adopting a suitable basis, there is no loss of generality in assuming that $\Gamma$ is of the form
  \begin{align*}
	\Gamma=
	\begin{pmatrix}
	  G \\
	  \one_{d}
	\end{pmatrix},
	\qquad
	G \in \F_2^{(t-d)\times d}.
  \end{align*}
  Note that
  \begin{align*}
	\gamma=
	\begin{pmatrix}
	  \one_{t-d},&
	   G
	\end{pmatrix}
  \end{align*}
  is a row-generator matrix for $N^\perp$.
  Indeed, the row-span has dimenion $t-d$ and the matrices fulfill
  \begin{align*}
	\gamma \Gamma = G + G  = 0,
  \end{align*}
  i.e., the inner product between any column of $\Gamma$ and any row of $\gamma$ vanishes.
  It follows that elements $n\in N$, $x\in N^\perp$ are exactly the vectors of respective form
  \begin{align*}
	n=(\underbrace{G \tilde{n}}_{t-d},\, \underbrace{\tilde{n}}_d),\; \tilde{n}\in\FF_2^d;
	\qquad
	x=(\underbrace{\tilde{x}}_{t-d},\, \underbrace{G^T \tilde{x}}_d),\; \tilde{x}\in\FF_2^{t-d}.
  \end{align*}
  In particular, if $x$ is drawn uniformly from $N^\perp$, then the first $t-d$ components are uniformly distributed in $\FF_2^{t-d}$.
  For now, we restrict to the case where $G$ has a column, say the first, with $r\neq 1$ non-zero entries.
  We then choose $n=(Ge_1, e_1)$ and argue as in Eq.~(\ref{eq:hamming-binomials}) to obtain
  \begin{equation}
	\mathrm{Pr}_{x\in N^\perp}[ h(x) = h(x+n)] \leq
	\sup_{1\neq r\text{ odd}}
	2^{-r}
	{{r+1}\choose{(r+1)/2}}
	=\frac34
	\qquad\text{(attained for $r=3$)}.
  \end{equation}

  We are left with the case where all columns of $G$ have Hamming weight $1$.
  (If $N$ is a defect subspace, then Def.~\ref{def:stochastic lagrangians}.\ref{def:q isotropicity} implies that every column of $\Gamma$ has Hamming weight at least $4$.
  We treat the present case merely for completeness).
  As $N$ is isotropic, the columns of $\Gamma$ have mutual inner product equal to $0$:
  \begin{align*}
	\Gamma^T \Gamma  = 0
	\qquad\Leftrightarrow\qquad
	G^TG=-\one=\one \mod 2.
  \end{align*}
  It follows that all columns have to be mutually orthogonal standard basis vectors $e_i\in\F_2^{t-d}$.
  Thus, by permutating the first $t-d$ coordinates of $\FF_2^t$, we can assume that $G$ is of the form
  \begin{equation*}
   G = \begin{pmatrix} \one_d \\ 0 \end{pmatrix}, \quad \Rightarrow \quad
	N =\{ (\tilde{n}\oplus 0_{t-2d}, \tilde{n}) \,|\, \tilde{n}\in\FF_2^{d}\}, \quad
	N^\perp =\{ (\tilde{x}, \tilde{x}|_d) \,|\, \tilde{x}\in\FF_2^{t-d}\},
  \end{equation*}
  where $\tilde x|_d$ denotes the restriction of $\tilde x$ to the first $d$ components.
  Adding $n:=(e_1\oplus 0, e_1)$ to $x=(\tilde{x}, \tilde{x}|_d)$, the Hamming weight of the two parts change both by $\pm 1$, giving $h(x+n)=h(x)\pm 2$. Thus, we have $\mathrm{Pr}[ h(x) = h(x+n)]=0$.

  We have proven the first advertised claim.
  It implies the second one, as argued next.
  Let $N$ be the left defect subspace of $T$.
  By Ref.~\cite[Prop.~4.17]{gross2017schur}, we find 
  the following.
  \begin{itemize}
	\item
    The restriction $\{x \,|\, (x,y) \in T \text{ for some }y\}$ equals  $N^\perp$.
	\item
	The stochastic Lagrangian $T$ contains $N\oplus 0$.
  \end{itemize}
  Assume that $(x,y)$ is distributed uniformly in $T$.
  By the first cited fact, $x$ is distributed uniformly in $N^\perp$.
  By the second fact, $(x+n,y)$ follows the same distribution as $(x,y)$, for each $n\in N$.
  Thus, repeating the argument in the proof of Proposition~\ref{prop:hamming bound}, we find that for any fixed $n\in N$:
  \begin{align*}
	\mathrm{Pr}[ h(x) = h(y)]
	&=
	1- \mathrm{Pr}[h(x)\neq h(y)] \\
	&\leq
	1- \frac12 \mathrm{Pr}[h(x)\neq h(y) \,\vee\, h(x+n)\neq h(y) ] \\
	&\leq
	\frac12 + \frac12 \mathrm{Pr}[ h(x)=h(x+n) ]
	\leq \frac 78.
  \end{align*}
\end{proof}

\subsection{Proof of Lemmas for Theorem \ref{theorem:clifford-design}}
\label{section:clifford-lemmas}

\reductionspectral*
\begin{proof}
	This follows similar to Ref.~\cite[Lem.~4\& Lem.~30]{brandao_local_2016}.
	Denote by $|\Omega_{2^n}\rangle$ the maximally entangled state vector on $\C^{2^n}\otimes\C^{2^n}$.
	The condition in~\eqref{eq:relativeclifforddesign} is equivalent to
	\begin{equation}
	(1-\varepsilon)\rho_{\rm Cl}\leq \rho_{\nu}\leq (1+\varepsilon)\rho_{\rm Cl},
	\end{equation}
	as an operator inequality, where
	\begin{equation}
	\rho_{\nu}:=(\Delta_{\nu}\otimes \one)(|\Omega_{2^n}\rangle\langle\Omega_{2^n}|)^{\otimes t}\qquad\text{and}\qquad \rho_{\rm Cl}:=\rho_{\mu_{\rm Cl}}.
	\end{equation}
	We have a decomposition of $(\C^{2^n})^{\otimes t}$ into irreducible representations of the Clifford group:
	\begin{equation}
	(\C^{2^n})^{\otimes t}\cong \bigoplus_{\gamma}C_{\gamma}\otimes L_{\gamma},
	\end{equation}
	where $\{C_\gamma\}$ is the set of all equivalence classes of irreducible representations 
  of $\Cl(n)$ that appear in the $t$-th order diagonal representation, and $L_\gamma$ are the
  corresponding multiplicity spaces (which by the double commutant theorem are irreducible representations  of
  the commutant algebra --we have chosen $L$ for Lagrangian).
	This implies that
	\begin{equation}
	|\Omega_{2^n}\rangle^{\otimes t}\cong \sum_{\substack{\gamma}}\sqrt{\frac{\dim L_\gamma\dim C_{\gamma}}{2^{nt}}} |\gamma,\gamma\rangle\otimes |\Omega_{C_{\gamma}}\rangle\otimes |\Omega_{L_{\gamma}}\rangle,
	\end{equation}
	where $|\Omega_{L_{\gamma}}\rangle$ and $|\Omega_{C_{\gamma}}\rangle$ denote maximally entangling state vectors on two copies of $L_{\gamma}$ and $C_{\gamma}$, respectively.
	Indeed, observe that $\ket{\Omega_{2^n}}^{\otimes t}=2^{-nt/2}\vecmap(\one)$ and that the identity restricted to sub-spaces is just the identity on these sub-spaces.
	The prefactors then follow from normalizing the vectorized identity operators on the direct summands.

	Since $\Cl(n)$ acts via multiplication on the spaces $C_{\lambda}$, this implies that
	\begin{myalign}
	\rho_{\rm Cl}&=\int_{\Cl(n)}(U\otimes \one)^{\otimes t}(|\Omega_{2^n}\rangle\langle\Omega_{2^n}|)^{\otimes t}(U^{\dagger}\otimes \one)^{\otimes t}\newtext{\mathrm{d}\mu_{\Cl}(U)}\\
	& \cong \sum_{\gamma}\frac{\dim L_{\gamma}\dim C{\gamma}}{2^{nt}} (|\gamma\rangle\langle\gamma|)^{\otimes 2}\otimes \left(\frac{\one_{C{\gamma}}}{\dim C{\gamma}}\right)^{\otimes 2}\otimes |\Omega_{L_{\gamma}}\rangle\langle \Omega_{L_{\gamma}}|,
	\end{myalign}
 where the second line follows from Schur's lemma and the fact that $\int U^{\otimes t} \bullet (U^{\dagger})^{\otimes t}$ is trace preserving.
  The support of this operator is on the \emph{symmetric subspace} $\vee^t(\C^{2^n}\otimes\C^{2^n})$~\cite[Lem~30.1]{brandao_local_2016}.
	The minimal eigenvalue of this operator restricted to the symmetric subspace is
	\begin{equation}
	\min_{\gamma}\frac{\dim L_{\gamma}}{2^{nt}\dim C_{\gamma}},
	\end{equation}
  which we now lower bound. Let $\gamma^*$ denote the optimizer.
   By Schur-Weyl duality, the
  diagonal action of $\U(2^n)$ on $(\C^{2^n}\otimes\C^{2^n})^{\otimes t}$ decomposes as $\oplus_\lambda U_\lambda\otimes S_\lambda$
  where as usual $U_\lambda$ are Weyl modules and $S_\lambda$ are Specht modules. Restricting
  this action to the Clifford group, the $U_\lambda$ further decompose into irreducible representations 
  \begin{align*}
    U_\lambda \simeq \bigoplus_{\gamma \in I_\lambda} C_\gamma\otimes\C^{d_{\lambda,\gamma}},
  \end{align*}
  where $I_\lambda$ is the spectrum of $U_\lambda$ as a Clifford representation.
  Let $\Lambda_0$ be the set of all $\lambda$ such that $\gamma^*\in I_\lambda$,
  then as a Clifford representation
  \begin{myalign}
    (\C^{2^n}\otimes\C^{2^n})^{\otimes t} \simeq  C_{\gamma^*}\otimes\Big(\bigoplus_{\lambda \in \Lambda_0} S_\lambda\otimes\C^{d_{\lambda,\gamma^*}}\Big) \oplus (\text{other irreducible representations }).
  \end{myalign}
  Thus, as a vector space, we have
  \begin{myalign}
    L_{\gamma^*} = \bigoplus_{\lambda \in \Lambda_0} S_\lambda\otimes\C^{d_{\lambda,\gamma^*}}.
  \end{myalign}
  In particular, for any $\lambda \in \Lambda_0$ we have that $\dim C_{\gamma^*} \leq \dim U_\lambda$
  and $\dim L_{\gamma^*} \geq \dim S_\lambda$. Thus we get the following bound for the
  minimal eigenvalue:
  \begin{equation}
    \frac{\dim L_{\gamma^*}}{2^{nt}\dim C_{\gamma^*}}\geq \min_{\lambda\in\mathrm{Part}(t,2^n)}\frac{\dim S_{\lambda}}{2^{nt}\dim U_{\lambda}}\geq 2^{-2nt}.
  \end{equation}
  	The rest of the proof follows as in Ref.~\cite[Lem.~4]{brandao_local_2016}, mutatis mutandis.
\end{proof}

In order to prove Lemma~\ref{lemma:spectralgapbound} we make use of the following result by \textcite{nachtergaele_gap_1994} and 
Lemma~\ref{lemma:overlaps} bounding certain sums of overlaps of the operators $r(T)$.

\begin{lemma}[Nachtergaele {\cite[Thm.~3]{nachtergaele_gap_1994}}]
\label{lemma:nachtergaele}
	Let $H_{[p,q]}$ for $[p,q]\subset [n]=\{1,\dots,n\}$ be a family of positive semi-definite Hamiltonians with support on $(\mathbb{C}^2)^{\otimes (q-p+1)}\subset (\C^2)^{\otimes n}$. Assume there is a constant $l\in\N$, such that the following conditions hold:
	\begin{enumerate}
		\item There is a constant $d_l>0$ for which the Hamiltonians satisfy
		\begin{equation}\label{eq:firstcondition}
		0\leq \sum_{q=l}^n H_{[q-l+1,q]} \leq d_l H_{[1,n]}.
		\end{equation}
		\item There are $Q_l\in\N$ and $\gamma_l>0$ such that there is a local spectral gap:
		\begin{equation}
		\Delta\left(H_{[q-l+1,q]}\right)\geq \gamma_l, \quad \forall q \geq Q_l.
		\end{equation}
		\item Denote the ground state projector of $H_{[p,q]}$ by $G_{[p,q]}$. There exist $\varepsilon_l<1/\sqrt{l}$ such that
		\begin{equation}\label{eq:condition3}
		\snorm{G_{[q-l+2,q+1]}\left(G_{[1,q]}-G_{[1,q+1]}\right)} \leq \varepsilon_l, \quad \forall q \geq Q_l.
		\end{equation}
	\end{enumerate}
	Then, it holds that
	\begin{equation}
	\Delta\left(H_{[1,n]}\right)\geq \frac{\gamma_{l}}{d_{l}}\left(1-\varepsilon_l\sqrt{l}\right)^2.
	\end{equation}
\end{lemma}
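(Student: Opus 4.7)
The plan is to deploy the classical martingale method for gapped frustration-free Hamiltonians. The key mechanism relates the spectral gap of $H_{[1,n]}$ to the local gaps by building a telescoping, orthogonal decomposition of the excited subspace through nested ground-state projectors.

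\emph{First}, combine the covering assumption \eqref{eq:firstcondition} with the local gap hypothesis to obtain the operator inequality
\begin{equation*}
H_{[1,n]} \;\geq\; \frac{1}{d_l}\sum_{q=l}^n H_{[q-l+1,q]} \;\geq\; \frac{\gamma_l}{d_l}\sum_{q=l}^n \bigl(\one - G_{[q-l+1,q]}\bigr).
\end{equation*}
Denoting the global ground-state projector by $E_n := G_{[1,n]}$, the advertised bound $\Delta(H_{[1,n]}) \geq (\gamma_l/d_l)(1-\varepsilon_l\sqrt{l})^2$ then reduces to the \emph{martingale estimate} $\sum_{q=l}^n (\one - G_{[q-l+1,q]}) \succeq (1-\varepsilon_l\sqrt{l})^2 (\one - E_n)$ on the excited subspace.

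\emph{Second}, build the martingale. Writing $E_q := G_{[1,q]}$, frustration-freeness (cf.~Lemma~\ref{lemma:ground-space}) gives the nesting $\one =: E_{l-1} \geq E_l \geq \dots \geq E_n$, so the increments $D_q := E_{q-1}-E_q$ are mutually orthogonal projectors summing to $\one - E_n$. Any vector $\phi$ in the excited subspace therefore decomposes orthogonally as $\phi = \sum_q D_q\phi$ with $\|\phi\|^2 = \sum_q \|D_q\phi\|^2$, which is the source of the Pythagorean identity driving the whole argument.

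\emph{Third} (the core technical step), estimate $\|(\one-G_{[q-l+1,q]})\phi\|^2$ from below. For each $q$, the projector $G_{[q-l+1,q]}$ acts on a window of length $l$, and condition~\eqref{eq:condition3} bounds the overlap $\|G_{[q-l+1,q]}D_r\|\leq \varepsilon_l$ of this projector with each increment $D_r$ in a shifted window. Expanding $(\one-G_{[q-l+1,q]})\phi = \sum_r (\one-G_{[q-l+1,q]})D_r\phi$ and applying Cauchy--Schwarz, the $l$ cross-terms arising from nearby increments contribute an aggregate error that, after $\ell^2$-bookkeeping, produces exactly the factor $\varepsilon_l\sqrt{l}$; far-away increments either commute with $G_{[q-l+1,q]}$ or are killed by it, and do not contribute. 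Summing over $q$, using the orthogonality from step two, yields the desired inequality $\sum_q\|(\one-G_{[q-l+1,q]})\phi\|^2 \geq (1-\varepsilon_l\sqrt{l})^2\|\phi\|^2$.

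The main obstacle is the third step: condition~\eqref{eq:condition3} is a delicate single-increment/single-shifted-projector bound, and converting a family of such pointwise estimates into a global lower bound \emph{uniform in $n$} requires the precise $\ell^2$ accounting that generates the characteristic $(1-\varepsilon_l\sqrt{l})^2$ prefactor. The hypothesis $\varepsilon_l < 1/\sqrt{l}$ is precisely what is needed to keep this prefactor strictly positive, so that the resulting bound $H_{[1,n]}\geq (\gamma_l/d_l)(1-\varepsilon_l\sqrt{l})^2(\one-E_n)$ yields a nonvanishing spectral gap.
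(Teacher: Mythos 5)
You should first note that the paper itself gives no proof of this lemma: it is imported verbatim as Theorem~3 of Nachtergaele's paper \cite{nachtergaele_gap_1994}, so there is no internal argument to compare against. Your overall strategy is the right one and is exactly that of the cited reference (the martingale method): conditions (1)--(2) reduce the claim to the operator inequality $\sum_q \bigl(\one - G_{[q-l+1,q]}\bigr) \succeq (1-\varepsilon_l\sqrt{l})^2\,\bigl(\one - G_{[1,n]}\bigr)$, and the nested projectors $G_{[1,l]}\geq G_{[1,l+1]}\geq\dots\geq G_{[1,n]}$ (a consequence of frustration-freeness) supply the orthogonal increments $E_q:=G_{[1,q]}-G_{[1,q+1]}$ with $\sum_q\norm{E_q\phi}^2=\norm{\phi}^2$. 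Your first two steps are correct.

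The gap is in your third step, which is the entire non-routine content of the theorem and is asserted rather than carried out --- and, as described, it would not go through. First, condition~\eqref{eq:condition3} controls only the \emph{matched} pair $\snormn{G_{[q-l+2,q+1]}E_q}\leq\varepsilon_l$, i.e.\ one increment per window (the one whose ``new'' site is the right endpoint of the window); it does \emph{not} bound $\snormn{G_{[q-l+2,q+1]}E_r}$ for every increment $E_r$ meeting the window, which is what your bookkeeping invokes. Second, the expansion you propose, $(\one-G_{[q-l+2,q+1]})\phi=\sum_r(\one-G_{[q-l+2,q+1]})E_r\phi$, does not localize: for $r\leq q-l$ the increment merely commutes with the window projector, so $(\one-G_{[q-l+2,q+1]})E_r\phi=E_r(\one-G_{[q-l+2,q+1]})\phi$ is generically nonzero and these ``far-away'' terms contribute in full. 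The correct accounting runs in the dual direction: write $\norm{\phi}^2=\sum_q\braketn{\phi}{E_q\phi}$, split off $\braketn{G_{[q-l+2,q+1]}\phi}{E_q\phi}$, and only then expand $\phi=\sum_{q'}E_{q'}\phi$ inside that term; there the contributions with $q'\geq q+1$ vanish because $G_{[q-l+2,q+1]}E_{q'}=E_{q'}$ and $E_{q'}E_q=0$, those with $q'\leq q-l$ vanish by commutation, and the remaining $O(l)$ terms are each controlled through the single matched bound $\snormn{G_{[q-l+2,q+1]}E_q}\leq\varepsilon_l$ via Cauchy--Schwarz. This is where the factor $\sqrt{l}$ and the hypothesis $\varepsilon_l<1/\sqrt{l}$ actually enter, and the $\ell_2$ bookkeeping needed to land precisely on $(1-\varepsilon_l\sqrt{l})^2$ is delicate; since that is the whole theorem, the proposal as written is a plan rather than a proof.
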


While conditions 1) and 2) are merely translation-invariance with finit range of interactions and frustration-freeness in disguise, the third condition is highly non-trivial and involves knowledge of the ground-space structure.
Usually, finding the ground space in a basis can be just as hard as computing the spectral gap in the first place.
Fortunately, the ground space structure of the Hamiltonians $H_{n,t}$ is determined by the representation theory of the Clifford group.
With little additional work, we obtain the following lemma about
the ground space structure of our Hamiltonians.

\spectralgapbound*
\begin{proof}
    We make use of the Nachtergaele lemma.
	We have to verify the three conditions of Lemma~\ref{lemma:nachtergaele}. As already stated in Ref.~\cite{nachtergaele_gap_1994}, the first two conditions hold directly for translation-invariant local Hamiltonians as in our case.
	\begin{enumerate}
		\item The first condition immediately follows from the fact that we consider a translation-invariant $2$-local Hamiltonian. It is fulfilled for any choice of $l\geq 2$ and $d_l=l-1$.

		\item The second condition follows again for all $l\geq 2$ and the choice $Q_l=l$, since $H_{[q-l+1,q]}$ is a sum of positive semi-definite operators for all $q\geq l$ with spectrum that does not depend on $q$ due to translation-invariance. Thus, we can set
		\begin{equation}
		\gamma_l:= \Delta(H_{[q-l+1,q]}) > 0.
		\end{equation}
		\item The third condition requires a calculation and a non-trivial choice of $l$.
		We have to bound the quantity
		\begin{equation}
		R_{q,l} := \left\|G_{[q-l+2,q+1]}\left(G_{[1,q]}-G_{[1,q+1]}\right)\right\|_{\infty},
		\end{equation}
		for all $q\geq Q_l=l$. Here, $G_{[p,q]}$ denotes the orthogonal projector onto the ground space of $H_{[p,q]}$. Note that this ground space is simply a suitable translation of the Clifford commutant $\Cl(k)'$ for $k=q-p+1$ as shown in Lemma~\ref{lemma:ground-space}. Recall that it comes with a non-orthogonal basis $Q_T^{\otimes k}$, where
		\begin{equation}
		 Q_T := \frac{r(T)}{\|r(T)\|_2} = 2^{-t/2} r(T), \quad T \in \Sigma_{t,t}.
		\end{equation}
		Moreover, the projector $G_[p,q]$ is also simply a translation of the Clifford projector $P_{\Cl(k)}$ projecting onto $\Cl(k)'$. From the discussion in Section~\ref{sec:overlaps}, we know that the Clifford frame operator
		\begin{equation}
		S_{\Cl(k)}:= \sum_T \oketbra{Q_T}{Q_T}^{\otimes k},
		\end{equation}
		is a suitable approximation to $P_{\Cl(k)}$ when $k$ is large enough.
		Concretely, we have by Lem.~\ref{lemma:Snormbound}:
		\begin{equation}
		 \snorm{S_{\Cl(k)}-P_{\Cl(k)}} \leq (-2^{-k};2)_{t-1} - 1.
		\end{equation}
		Defining the shorthand notation $s_t(k)=(-2^{-k};2)_{t-1}$, we in particular get the bound
        \begin{equation}
            \snorm{S_{\Cl(k)}} \leq \snorm{S_{\Cl(k)}-P_{\Cl(k)}} + \snorm{S_{\Cl(k)}} \leq s_t(k),
        \end{equation}
        Let us introduce the shorthand notation $G_q:=G_{[1,q]}\equiv P_{\Cl(q)}$, $S_q = S_{[1,q]}\equiv S_{\Cl(q)}$, and $G_{q,l}:=G_{[q-l+2,q+1]}$, $S_{q,l}:=S_{[q-l+2,q+1]}$ for translations of the Clifford projector and frame operator, respectively.
        \newtext{Notice that $G_q-G_{q+1}$ is an orthogonal projector as the support of $G_{q+1}$ is by definition contained in that of $G_q$.
        Therefore, restricted to the support of $G_q$, the operator $G_q-G_{q+1}$ projects onto the orthogonal complement of the support of $G_{q+1}$.
        Combining this fact with the above inequalities}, we find
		\begin{myalign}
		R_{q,l} &= \snorm{G_{q,l}\left(G_{q}-G_{q+1}\right)}\\
		&\leq \snorm{(G_{q,l}-S_{q,l})(G_q-G_{q+1})} + \snorm{S_{q,l}(G_q-G_{q+1})} \\
		&\leq s_t(l) - 1 + \snorm{S_{q,l}(S_q-S_{q+1})} + \snorm{S_{q,l}(G_q-S_q)} + \snorm{S_{q,l}(G_{q+1}-S_{q+1})} \\
		&\leq \snorm{S_{q,l}(S_q-S_{q+1})} + s_t(l) - 1 + s_t(l)\left( s_t(q) + s_t(q+1) -2 \right) \\
		&\stackrel{q\geq l}{\leq} \snorm{S_{q,l}(S_q-S_{q+1})} + \left( s_t(l) - 1 \right) \left( 2 s_t(l) + 1 \right) \\
		&= \snorm {\sum_{T\in\Sigma_{t,t}} \oketbra{Q_T}{Q_T}^{\otimes (q-l+1)} \otimes Y_T} + \left( s_t(l) - 1 \right) \left( 2 s_t(l) + 1 \right),
		\end{myalign}
		where the operator $Y_T$ can be straightforwardly computed as
		\begin{myalign}
		Y_T := \sum_{T'\neq T}\left( \obraket{Q_{T'}}{Q_{T}}^{l-1} \oketbra{Q_{T'}}{Q_{T}}^{\otimes (l-1)}\right)\otimes \Big( \oketbra{Q_{T'}}{Q_{T'}} \big(\id-\oketbra{Q_{T}}{Q_{T}}\big) \Big).
		\end{myalign}
		Invoking the synthesis operators
		\begin{equation}
		 V_{k} = \sum_T \oket{Q_T^{\otimes k}} \!\! \bra{e_T}:\; \C^{|\Sigma_{t,t}|}\longrightarrow \Cl(k)',
		\end{equation}
		introduced in Lemma~\ref{lemma:Snormbound}, one can bound the above norm as
		\begin{myalign}
		  \snorm {\sum_{T} \oketbra{Q_T}{Q_T}^{\otimes (q-l+1)} \otimes Y_T} & =  \snorm {\sum_{T} V_{q-l+1}\ketbra{e_T}{e_T}V_{q-l+1}^\dagger \otimes Y_T} \\
		  &\leq \snorm{V_{q-l+1}V^\dagger_{q-l+1}} \snorm {\sum_{T} \ketbra{e_T}{e_T} \otimes Y_T} \\
		  &= \snorm{S_{q-l+1}} \max_T \snorm{Y_T} \\
		  &\leq s_t(q-l+1) \left( s_t(l-1) - 1 \right).
		\end{myalign}
		Thus, we arrive at
		\begin{myalign}
		R_{q,l} &\leq s_t(q-l+1) \left( s_t(l-1) - 1 \right) + \left( s_t(l) - 1 \right) \left( 2 s_t(l) + 1 \right) \\
		 &\leq s_t(1) \left( s_t(l-1) - 1 \right) + \left( s_t(l) - 1 \right) \left( 2 s_t(l) + 1 \right).
		\end{myalign}
		For $l+1\geq t+\log_2(t)$, we can use Lemma~\ref{lemma:overlaps} to get:
		\begin{myalign}
		R_{q,l} &\leq t 2^{t-l+1} \left( 1 + t 2^{t-1} \right) + t 2^{t-l} \left( 3 + t 2^{t-l}\right) \\
		&= t^2 2^{2t-l}\left( \frac{5}{t} 2^{-t} + 2^{-l} + 1 \right) \\
		&\leq 4 t^2 2^{2t-l}.
		\end{myalign}
		Finally choose any $l \geq 4t + 4\log_2(t) + 6$, then we find
		\begin{equation}
		 l \leq \frac{4^{l-2t}}{64t^2} \quad \Rightarrow \quad R_{q,l} \leq 4 t^2 2^{2t-l} \leq \frac{1}{2\sqrt{l}} < \frac{1}{\sqrt{l}}, \quad \forall q \geq l.
		\end{equation}
		In particular, we can choose $l = 12t$, $\varepsilon_l = 1/2\sqrt{l}$ to get the desired bound in Lemma~\ref{lemma:nachtergaele} $\forall q \geq l$.
	\end{enumerate}

	Hence, for the choices $l = 12t$, $d_l=l-1$, $Q_l=l$, $\gamma_l = \Delta(H_{12t,t})$ and $\varepsilon_l = 1/2\sqrt{l}$, Lemma \ref{lemma:nachtergaele} gives the claimed bound on the spectral gap:
	\begin{equation}
	 \Delta(H_{n,t}) \geq \frac{\gamma_l}{d_l}\left(1-\varepsilon_l^2 \sqrt{l}\right) \geq \frac{\Delta(H_{12t,t})}{48t}.
	\end{equation}
\end{proof}

\section{Summary and open questions}
We have found that a number of non-Clifford gates independent of the system size suffices to generate $\varepsilon$-approximate unitary $t$-designs. 
This is surprising, conceptually interesting and practically
relevant: After all, it is the main objective in quantum gate synthesis to minimize
the number of non-Clifford gates in a circuit implementation of a given unitary.
There are multiple open questions and ways to continue this work:
\begin{itemize}
	\item Similar to the result in Ref.~\cite{brandao_local_2016},
	the scaling in $n$ is near to optimal, the scaling in $t$ can probably be improved.
	\item Another natural open question is whether the condition $n=O(t^2)$ can be lifted.
	Notably, this is reminiscent to the situation
	discussed in Ref.~\cite{nakata_efficient_2017}, where the improved scaling can be proven only in the regime $t=o(n^{\frac12})$.
		\newtext{In this work, the condition $n=O(t^2)$ is related to the approximate orthogonality of the Lagrangian subspace. We use this fact repeatedly and in different flavours, but we can only prove it in this regime. 
	In fact, in Lemma~\ref{lemma:Snormbound} we use the same technique that has been used in Ref.~\cite{brandao_local_2016} to prove approximate orthogonality of permutations in the regimes $t\leq 2^{O(0.4n)}$. 
	However, the commutant of the Clifford group is far larger than the span of permutations and we suspect that this bound is tight.
	Nevertheless, we cannot rule out that similar results can be proven without exploiting  approximate orthogonality.
	This likely requires a detailed understanding of the representation theory of the Clifford group.}
	\item Our result holds for additive errors in the diamond norm.
\newtext{For relative errors,} our bounds can be used to obtain a quadratic advantage in the number of non-Clifford gates in Corollary~\ref{cor:dropindependence}.
This still allows the density of non-Clifford gates to go to zero in the thermodynamic limit\newtext{, but is not system-size independent anymore.
In fact, it has been proven in Ref.~\cite{leone2021quantum} that this scaling is optimal for relative errors.
It would be interesting to delineate more precisely for which notions of approximations a system-size independent result holds.}
	\item We strongly expect that the results can be generalized to qu$d$its for arbitrary $d$,
	\newtext{giving rise to analogous conclusions concerning an independence of the system size for additive errors in the diamond norm}.
\end{itemize}
We hope the present work stimulates such endeavors.

\section{Acknowledgements}
We would like to thank Richard Kueng, Lorenz Mayer and Adam Sawicki for helpful discussions. \newtext{Moreover, we would like to thank Nick Hunter-Jones for pointing out the application presented in Appendix~\ref{section:renyi}.}
Funded by the Deutsche Forschungsgemeinschaft (DFG, German Research Foundation) under Germany's Excellence Strategy - Cluster of Excellence Matter and Light for 
Quantum Computing (ML4Q) 
EXC 2004/1 - 390534769, the ARO under contract W911NF-14-1-0098 (Quantum Characterization, Verification, 
and Validation), and the DFG (SPP1798 CoSIP, project B01 of CRC 183). The Berlin group has been supported in this work by the DFG (SPP1798 CoSIP, projects B01 and A03 of CRC 183, 
\newtext{FOR 2724} and EI 519/14-1), \newtext{the Einstein Research Foundation (Einstein Research Unit on quantum devices)} and the Templeton Foundation.
 This  work  has  also  received  funding  from  the  European  Union's  Horizon2020  research  and innovation  programme  under  grant  agreement No.~817482 (PASQuanS).
\section{Data Availability Statement}
No data was produced in this project.

\appendix
\section{Unitary $t$-designs}
\label{app:tdesigns}

In the following, we review the concept of a \emph{unitary $t$-design} \cite{dankert_exact_2009,DankertThesis,GroAudEis},
giving different but equivalent definitions which prove to be useful in different contexts. They also serve as starting point to explore connections to other mathematical fields, \eg~representation theory.
To this end, let us introduce some notation. Define $\haar$ to be the (normalized) Haar measure on $\U(d)$ and let $\Hom_{(t,t)}(\U(d))$ be the space of homogeneous polynomials of degree $t$ in both the entries of $U\in\U(d)$ as well as $ \overline{U}$.

\begin{definition}[Unitary $t$-design]
\label{def:tdesign-poly}
 A probability measure $\nu$ on $\U(d)$ is called a \emph{unitary $t$-design} if the following holds for all $p\in\Hom_{(t,t)}(\U(d))$:
 \begin{equation}
 \label{eq:tdesign-poly}
  \int_{\U(d)} p(U)\,\nu(U) = \int_{\U(d)} p(U)\,\haar(U).
 \end{equation}
 A subset $D\subseteq \U(d)$ is called a
 \emph{unitary $t$-design}, if it comes with a probability measure $\nu_D$ which, continued trivially to $\U(d)$, is a unitary $t$-design. In particular, if $D$ is finite, $\nu_D$ is usually taken to be the (normalized) counting measure.
\end{definition}

It might not come as a surprise that Def.~\ref{def:tdesign-poly} has not to be checked for any polynomial. Since any homogeneous polynomial $p\in\Hom_{(t,t)}(\U(d))$ can be linearized as
\begin{equation}
 p(U) = \tr\left(A U^{\otimes t,t}\right), \qquad U^{\otimes t,t}:=U^{\otimes t}\otimes \overline{U}^{\otimes t},
\end{equation}
the defining Eq.~\eqref{eq:tdesign-poly} becomes
\begin{equation}
\label{eq:tdesign-moments}
 M_t(\nu):=\int_{\U(d)} U^{\otimes t,t}\,\nu(U) = \int_{\U(d)} U^{\otimes t,t}\,\haar(U) =: M_t(\haar).
\end{equation}
Thus $\nu$ is a unitary $t$-design if and only if its moment operator $M_t(\nu)$ agrees with the one of the Haar measure. Note that the operators $U^{\otimes t,t}$ are the matrix representation of the $t$-diagonal adjoint action $\Ad(U^{\otimes t}) = U^{\otimes t}\bullet (U^\dagger)^{\otimes t}$ with respect to the standard basis $\ketbra{i}{j}$ of $L(\C^d)$. Thus, this can be equivalently stated as equality of the twirls $\MO_t(\nu)=\MO_t(\haar)$ over the two measures.

A particularly fruitful theory of designs is possible in the case where the design $(G,\nu)$ itself constitutes a (locally compact) subgroup $G\subseteq\U(d)$ and $\nu$ is the normalized Haar measure on $G$. Following Ref.~\cite{bannai_unitary_2020}, we call these \emph{unitary $t$-groups}. In this case, we see that Eq.~\eqref{eq:tdesign-moments} implies that the trivial isotype of the representation $G \ni g \mapsto \Ad_g^{\otimes t}$ shall agree with the trivial isotype of $\U(d) \ni U \mapsto \Ad_U^{\otimes t}$. Since the trivial isotype exactly corresponds to the commutant of the respective diagonal representations $\tau_t: U \mapsto U^{\otimes t}$, this is equivalent to the statement that the commutant of the representation $\tau_t$ agrees with the commutant of the restriction $\tau_t|_G$. However, this is the case if and only if $\tau_t|_G$ decomposes into the same irreducible representations as $\tau_t$.

\section{Representations of the unitary group}
\label{section:irrepunitarygroup}

The representation theory of the unitary group can be understood using the theory of highest weight for compact Lie groups, see, for example Refs.~\cite{brocker_representations_1985,fulton_representation_2004,goodman_symmetry_2009}. We present a short summary of the part relevant to us here.
Let $\rho$ be an irreducible representation of $\U(d)$, and consider the restriction $\rho|_{D(d)}$ to the diagonal
subgroup $D(d)\simeq (S^1)^{\times d}$ (which is a so-called \emph{maximal torus} in $\U(d)$). In general, this is a reducible representation of $D(d)$. Since $D(d)$ is Abelian, $\rho|_{D(d)}$ decomposes
into one-dimensional irreducible representations , \ie~characters of $D(d)\simeq(S^1)^{\times d}$. 
Those are of the form $\chi_u(\theta) := e^{i u^T \theta}$ for some vector $u\in \Z^d$, and thus we find
\begin{myalign}
  \rho|_{D(d)} \simeq\bigoplus_{u\in\Z^d} \chi_u\otimes \one_{m_u},
\end{myalign}
where $m_u \in \mathbb{N}$ are multiplicities. The vectors $u$ for which $m_u\neq 0$ are called the \emph{weights} of $\rho$.
Introducing a lexicographical ordering of the weights, we call a weight $u$ higher than the weight $v$ if $u>v$. The \emph{theorem of the highest weight} states that any irreducible representation $\rho$ has a highest weight and that irreducible representations  with the same highest weight are isomorphic. Thus, irreducible representations are unambiguously labeled by their highest weight. Next, let us consider the tensor product $\pi_u\otimes\pi_v$ of two irreducible representations  labeled by their highest weights $u$ and $v$. One can easily check that the weights of irreducible representations  in $\pi_u\otimes\pi_v$ have to be sums of weights of $\pi_u$ and $\pi_v$. In particular, the highest weight of all irreducible representations  is at most $u+v$.

As a relevant example consider the (irreducible) defining representation $\rho: U \mapsto U$ of $U(2)$. Its restriction to the diagonal subgroup $S^1\times S^1$ decomposes as
\begin{equation*}
 \rho|_{S^1\times S^1} \simeq \chi_{e_1}\oplus \chi_{e_2},
\end{equation*}
with highest weight $e_1=(1,0)$. Using $\bar{\chi}_u = \chi_{-u}$, the highest weight of the complex conjugate representation $\bar{\rho}:\,U\mapsto\bar U$ can be immediately determined as $(0,-1)$.
Hence, the weights of $\rho\otimes\bar\rho$ are $\{(0,0), (1,-1), (-1,1)\}$. Here, $(0,0)$ is the highest weight of the trivial irreducible representation and $(1,-1)$ the highest weight of the adjoint irrep.
Finally, all irreducible representations  appearing in $(\rho\otimes\bar\rho)^{\otimes t}$ have weights $w$ satisfying
$(-t,t)\leq w\leq (t,-t)$ and, in particular,
\begin{align*}
  w = \sum_{i=1}^tu_i
\end{align*}
where $u_i \in \{(0,0), (1,-1), (-1,1)\}$.
It follows that the Euclidean norm of these weights is at most $\sqrt{2}t$.

\section{Converse bounds for estimates in Section \ref{sec:r(T) bounds}}
\label{sec:converse bounds}

Here, we collect various tightness results that limit the degree by which the estimates in Section~\ref{sec:r(T) bounds} can be improved.
The bound in Proposition~\ref{prop:hamming bound} is tight in many cases.
Most interestingly, the anti-identity \cite{gross2017schur}
\begin{myalign}\label{eqn:anti-id}
  \overline\one=
  \begin{pmatrix}
    0 & 1 & \cdots & 1\\
    1 & \ddots & \ddots & \vdots \\
    \vdots & \ddots & \ddots & 1\\
	1 &  \cdots & 1 & 0
  \end{pmatrix}
  \in O_t\, ,
\end{myalign}
meets the bound if both
\begin{myalign}
  r=t-1 \qquad\text{and}\qquad t/2=(r+1)/2 \qquad\text{ are odd}.
  \label{eqn:tightness condition}
\end{myalign}
Indeed,
the anti-identity flips the components of the input if its parity is odd, and leaves the
input invariant if the parity is even. The flipping step preserves the Hamming weight if
and only if $h(a)=t/2$.
Thus
\begin{align*}
  \mathrm{Pr}[ h(Oa) = h(a) ]
  &=
  \mathrm{Pr}[ h(a) \text{ even}]
  +
  \mathrm{Pr}[ h(a) \text{ odd} \wedge h(a)=t/2] \\
  &=
  \mathrm{Pr}[ h(a) \text{ even}]
  +
  \mathrm{Pr}[h(a)=t/2]
  && \text{(using (\ref{eqn:tightness condition}))}
  \\
  &=
  \frac12
  +
  2^{-t}
  {{t}\choose{t/2}}\\
  &= \frac12 + 2^{-(r+1)} {{r+1}\choose{(r+1)/2}}.
\end{align*}
Likewise, both estimates in Proposition~\ref{prop:defect hamming bound} are tight.
The first bound is saturated for $N=\{0, (1,1,1,1)\}$.
Indeed, $N^\perp$ is the space of all even-weight elements of $\FF_2^4$.
The only non-trivial element of $N$ is $(1,1,1,1)$ and adding it to an even-weight vector changes its weight if and only if the vector is in $N$ itself.
But $|N|/|N^\perp|=1/4$.
In an exactly analogous way, the second bound is tight for the stochastic Lagrangian with left and right defect spaces equal to the same $N$.
As detailed in Example~4.27 of Ref.~\cite{gross2017schur}, this stochastic Lagrangian is the one identified in Ref.~\cite{ZhuKueGra16} as the sole non-trivial one in case of $t=4$.

In contrast, we do not know (but suspect) that we pay a price by restricting from the full Haar symmetrizer to the one over diagonal matrices in Eq.~(\ref{eqn:diagonal haar bound}).
For the two cases that saturate the bounds in Proposition~\ref{prop:hamming bound} and Proposition~\ref{prop:defect hamming bound}, we can compute the full projection explictily and show that at least there, Eq.~(\ref{eqn:diagonal haar bound}) indeed fails to be tight.

One can expand the anti-id $\overline\one$ in terms of Pauli operators
\cite{gross2017schur}
\begin{myalign}\label{eqn:anti-id expansion}
  \overline\one = \frac12 \big(\one^{\otimes t} + X^{\otimes t} + Y^{\otimes t} + Z^{\otimes t}\big).
\end{myalign}
Then
\begin{align}
  2^{-t} \big(r(\overline\one), P_H[r(\overline\one)]\big) 
  =&
  2^{-t}
  \int
  \tr
  r(\overline\one)U^{\otimes t} r(\overline\one)^\dagger (U^\dagger)^{\otimes t}\; \mathrm{d} \newtext{\mu_H(U)} \nonumber \\
  =&
  2^{-t-2}
  \sum_{i,j=0}^3
  \int
  \tr \sigma_i^{\otimes t} U^{\otimes t} \sigma_j^{\otimes t} (U^\dagger)^{\otimes t} \, \mathrm{d} \newtext{\mu_H(U)} \nonumber \\
  =&
  2^{-t-2}
  \sum_{i,j}
  \int
  \Big(\tr \sigma_i U \sigma_j U^\dagger\Big)^{t} \, \mathrm{d} \newtext{\mu_H(U)}\nonumber \\
  =&
    2^{-2}
	+
    2^{-t-2}
	\sum_{i,j\neq 0}
	\int
	\Big(\tr \sigma_i U \sigma_j U^\dagger\Big)^{t} \, \mathrm{d} \newtext{\mu_H(U)}\nonumber  \\
  =&
    2^{-2}
	+
    2^{-2}
	9
	\frac{1}{4\pi}
	\int_{S^2} x_1^{t} \mathrm{d} x \label{eqn:sphere integration} \\
  =&
  \frac14
  +
	\frac{9}{4}
	\frac{1}{4\pi}
	  \frac{4\pi}{1+t}
  =
  \frac14\Big(1+\frac{9}{t+1}\Big), \nonumber
\end{align}
where in (\ref{eqn:sphere integration}), we have interpreted the Haar integral over inner products of Paulis as an integral over the Bloch sphere and in the next line, used the formula from \cite{folland2001integrate}.
For $t=2$, Eq.~(\ref{eqn:anti-id}) is just the swap operator (i.e., a permutation), and the formula gives $1$, as it should.
The smallest non-trivial case is $t=6$ \cite{gross2017schur} , where we get roughly $0.571<0.65$.

Next, we consider the CSS code $P_N$ for $N=(1,1,1,1)$.
We use the results in Section~3 of Ref.~\cite{ZhuKueGra16}.
For a given partition $\lambda$, let $W_\lambda$ be the associated Weyl module and $S_\lambda$ the Schur module.
As in Ref.~\cite{ZhuKueGra16}, let $W^+_\lambda \subset W_\lambda$ be the subspace such that
\begin{align*}
  \big(W_\lambda \otimes S_\lambda\big)\cap \mathrm{range}\, P_N = W_\lambda^+ \otimes S_\lambda.
\end{align*}
For the projection operators onto the various spaces, we write $P_\lambda$ (Schur module), $Q_\lambda$ (Weyl module), and $Q^+_\lambda$ (the subspace defined above).
Then \cite{ZhuKueGra16}
\begin{align*}
  P_N = \sum_\lambda Q^+_\lambda \otimes P_\lambda.
\end{align*}
By Schur's Lemma,
\begin{align*}
  P_H[P_N] = \sum_\lambda c_\lambda Q_\lambda \otimes P_\lambda,
\end{align*}
for suitable coefficients $c_\lambda$, which are seen to equal $c_\lambda = D^+_\lambda/D_\lambda$ by the fact that Haar averaging preserves the trace.
Hence, using Table~1 of Ref.~\cite{ZhuKueGra16} for $d=2$,
\begin{align*}
  2^{-t+2\dim N}
  (P_N, P_\mathrm{H}[P_N])
  =
  2^{-2}
  \sum_\lambda \frac{d_\lambda  (D^+_\lambda)^2 }{D_\lambda}
  =
  \frac7{10} < \frac78.
\end{align*}

\section{Saturation of higher R\'{e}nyi-entropies in $K$-interleaved Clifford circuits}\label{section:renyi}
\newtext{Consider the R\'{e}nyi-entropies which are defined as
\begin{equation}
S_{\alpha}(\rho):=\frac{1}{1-\alpha}\log\mathrm{Tr}[\rho^{\alpha}]
\end{equation}
for $\alpha>0$. For $\alpha\searrow1$ the standard von Neumann entropy is recovered.
Here, we are interested in the entanglement properties of random state vectors $|\psi\rangle$ on $n$ qubits.
We consider a bi-partition of the $n$ qubits into a set $A$ consisting of constantly many qubits $n_A$ and a set $B$ of $n_B=n-n_A$ many qubits that constitutes the complement of $A$. To derive concentration bounds on these quantities over random ensembles of states, we study the ``higher purities'' $\mathrm{Tr}[\rho^{\alpha}]$ for positive 
integer $\alpha$ in more detail. 
First, we compute the Haar average of this quantity.
Let $\pi_{\mathrm{cyc}}\in S_{\alpha}$ be any full $\alpha$-cycle.
We compute
\begin{align}
\begin{split}
\mathbb{E}_{U\sim\mu_H}\mathrm{Tr}[\rho_A^{\alpha}]&= \mathbb{E}_{U\sim\mu_H}\mathrm{Tr}\left[\mathrm{Tr}_B[|\psi\rangle\langle\psi|]^{\alpha}\right]\\
&=\mathbb{E}_{U\sim\mu_H}\mathrm{Tr}\left[r(\pi_{\mathrm{cyc}})_A\otimes \mathbbm{1}_B(|\psi\rangle\langle\psi|)^{\otimes \alpha}\right]\\
&= {2^n+\alpha-1\choose \alpha}^{-1}\mathrm{Tr}
\left[r(\pi_{\mathrm{cyc}})_A\otimes \mathbbm{1}_B P_{\mathrm{sym},\alpha}\right]\\
&={2^n+\alpha-1\choose \alpha}^{-1}\alpha!^{-1}\sum_{\sigma\in S_{\alpha}}\mathrm{Tr}\left[r(\pi_{\mathrm{cyc}}\circ \sigma)_A\otimes r(\sigma)_B\right]\\
&={2^n+\alpha-1\choose \alpha}^{-1}\alpha !^{-1}\sum_{\sigma\in S_{\alpha}} 2^{n_A \#\mathrm{cyc}(\pi_{\mathrm{cyc}}\circ \sigma)}2^{n_B \#\mathrm{cyc}(\sigma)}\\
&=\frac{1}{2^n(2^n+1)\ldots(2^n+\alpha-1)}\sum_{\sigma\in S_{\alpha}} 2^{n_A \#\mathrm{cyc}(\pi_{\mathrm{cyc}}\circ \sigma)}2^{n_B \#\mathrm{cyc}(\sigma)}\\
&=\frac{2^{\alpha n_B}2^{n_A}}{2^n(2^n+1)\ldots(2^n+\alpha-1)}+O(2^{-n_B})\\
&=2^{-(\alpha-1)n_A}+O(2^{-n_B}),
\end{split}
 \end{align}
where $O(2^{-n_B})$ depends on $\alpha$.
Therefore, up to an exponentially small correction, the average higher purity is minimal.

Next, we compute the same average over an additive $\varepsilon$-approximate unitary $t$-design.
Recall that this is a probability distribution $\nu$ such that 
\begin{equation}
||M_t(\nu)-M_t(\mu_H)||_{\Diamond}\leq \varepsilon.
\end{equation}
By definition of the diamond norm, this also implies
\begin{equation}
||M_t(\nu)-M_t(\mu_H)||_{1\to 1}\leq \varepsilon.
\end{equation}
From this, we obtain
\begin{align}
\begin{split}
&\mathbb{E}_{U\sim\nu}\mathrm{Tr}[\rho_A^{\alpha}]= \mathbb{E}_{U\sim\nu}\mathrm{Tr}\left[\mathrm{Tr}_B[|\psi\rangle\langle\psi|]^{\alpha}\right]\\
&=\mathrm{Tr}\left[r(\pi_{\mathrm{cyc}})_A\otimes \mathbbm{1}_B\mathbb{E}_{U\sim\nu}(|\psi\rangle\langle\psi|)^{\otimes \alpha}\right]\\
&\leq \mathrm{Tr}\left[r(\pi_{\mathrm{cyc}})_A\otimes \mathbbm{1}_B\mathbb{E}_{U\sim\mu_H}(|\psi\rangle\langle\psi|)^{\otimes \alpha}\right]+\left|\mathrm{Tr}[r(\pi_{\mathrm{cyc}})_A\otimes \mathbbm{1}_B(M_t(\nu)-M_t(\mu_H))\left[(|\psi_0\rangle\langle\psi_0|)^{\otimes \alpha}\right]\right|\\
&\leq \mathrm{Tr}\left[r(\pi_{\mathrm{cyc}})_A\otimes \mathbbm{1}_B\mathbb{E}_{U\sim\mu_H}(|\psi\rangle\langle\psi|)^{\otimes \alpha}\right]+\left|\left|(M_t(\nu)-M_t(\mu_H))\left[(|\psi\rangle\langle\psi|)^{\otimes \alpha}\right]\right|\right|_1\\
&\leq 2^{-(\alpha-1)n_A}+O(2^{-n})+\varepsilon.
\end{split}
\end{align}
It suffices to insert $C(K)\log^2(t)(t^4+t\log(1/\varepsilon))$ non-Clifford gates into random Clifford circuits to generate an additive $\varepsilon$-approximate $t$-designs.
Therefore, we can choose $\varepsilon=2^{-2(\alpha-1)n_A}$ and $t=\alpha$ and find that a $K$-interleaved Clifford circuit with $k=C(K)\log^2(\alpha)(\alpha^4+2(\alpha-1)n_A)$ satisfies
\begin{equation}
\mathbb{E}_{U\sim\sigma^{*k}}\mathrm{Tr}[\rho_A^{\alpha}]\leq (1-2^{-(\alpha-1)n_A})2^{-(\alpha-1)n_A}+O(2^{-n})\leq (1-2^{-(\alpha-1)n_A}-O(2^{-n}))2^{-(\alpha-1)n_A}.
\end{equation}
Therefore, for every constant $n_A$ and $\alpha$, there is a classically simulable ensemble of quantum circuits that generate essentially minimal higher purities on average.
}

%

\end{document}